\documentclass[letterpaper,11pt]{article}
\usepackage{fullpage} 

\usepackage{url}
\usepackage{xspace}

\usepackage{amssymb,comment}
\usepackage{amsmath}
\usepackage{tikz}
\usepackage{pgflibrarysnakes}
\usetikzlibrary{snakes}
\usepackage{times}
\usepackage{amstext}
\usepackage{calc}
\usepackage{amsopn}
\usepackage[noend]{algorithmic}
\usepackage{algorithm}
\usepackage{eucal}
\usepackage{latexsym}
\usepackage{wrapfig}
\usepackage{tabularx}

\usepackage[margin=1in]{geometry}

\newcommand{\exactcover}{{{\sc Exact Cover by $3$-Sets}}\xspace}

\usepackage{amsthm}

\newtheorem{theorem}{Theorem}
\newtheorem{corollary}[theorem]{Corollary}

\newtheorem{lemma}[theorem]{Lemma}

\theoremstyle{definition}
\newtheorem{definition}[theorem]{Definition}

{\bf}{\rm}


\newcommand{\cP}{\ensuremath{\mathcal{P}}}
\newcommand{\cF}{\ensuremath{\mathcal{F}}}
\newcommand{\cS}{\ensuremath{\mathcal{S}}}
\newcommand{\Z}{\ensuremath{\mathbb{Z}}}

\newcommand{\dist}{\ensuremath{\mathrm{dist}}}
\newcommand{\radius}{\ensuremath{\mathrm{radius}}}
\newcommand{\nil}{\ensuremath{\mathrm{nil}}}
\newcommand{\eps}{\ensuremath{\epsilon}}

\newcommand{\algo}[1]{\texttt{#1}}

\floatname{algorithm}{Algorithm}

\newcommand{\defproblemapx}[4]{
  \vspace{1mm}
\noindent\fbox{
  \begin{minipage}{\textwidth}
  #1 \\ 
  {\bf{Input:}} #2  \\
  {\bf{Output:}} #3 \\
  {\bf{Goal:}} #4
  \end{minipage}
  }
  \vspace{1mm}
}

\newcommand{\defproblemu}[3]{
  \vspace{1mm}
\noindent\fbox{
  \begin{minipage}{\textwidth}
  #1 \\ 
  {\bf{Input:}} #2  \\
  {\bf{Question:}} #3
  \end{minipage}
  }
  \vspace{1mm}
}

\newcommand{\ra}{{\rightarrow}}

\begin{document}

\date{}

\author{  }
\title{LP Rounding for $k$-Centers with Non-uniform Hard Capacities\\
{\small (Extended Abstract)}}

\author{
Marek Cygan\thanks{
IDSIA, University of Lugano, Switzerland, \texttt{marek@idsia.ch}.
Research supported in part by ERC Starting Grant PAAl 259515 and Foundation for Polish Science.  }
\and
MohammadTaghi Hajiaghayi\thanks{University of Maryland, College Park MD 20742, \texttt{hajiagha@cs.umd.edu}. Research supported in part by NSF
CAREER award 1053605, ONR YIP award N000141110662, DARPA/AFRL award
FA8650-11-1-7162, and a University of Maryland Research and
Scholarship Award (RASA).}
\and
Samir Khuller\thanks{
Dept. of Computer Science, University of Maryland, College Park MD 20742, \texttt{samir@cs.umd.edu}.
Research supported by NSF CCF-0728839, NSF CCF-0937865 and a Google Research Award.} }

\begin{titlepage}
\def\thepage{}
\thispagestyle{empty}
\maketitle

\begin{abstract}
In this paper we consider a generalization of the classical $k$-center problem with 
capacities. Our goal is to select $k$ centers in a graph, and assign each node to 
a  nearby center, so that we respect the capacity constraints on centers. The objective
is to minimize the maximum distance a node has to travel to get to its assigned
center. This problem is $NP$-hard, even when centers have no capacity restrictions
and optimal factor 2 approximation algorithms are known.
With capacities, when all centers have identical capacities, a 6 approximation
is known with no better lower bounds than for the infinite capacity version. 

While many generalizations and variations  of this problem have been studied extensively,
no progress was made on the capacitated version for a general capacity function.
We develop the first constant factor approximation algorithm 
for this problem. Our algorithm  uses an LP rounding approach to solve this problem,
 and works for the case of non-uniform {\em hard} capacities, when 
multiple copies of a node may not be chosen and can be extended to the case
when there is a hard bound on the number of copies of a node that may be selected.
In addition we establish a lower bound on the integrality gap of 7(5) for 
non-uniform (uniform) hard capacities. 
In addition we prove that if there is a $(3-\epsilon)$-factor approximation for this problem
then $P=NP$.  

Finally, for non-uniform soft capacities we present a much simpler $11$-approximation
algorithm, which we find as one more evidence that hard capacities are much harder to deal with.
\end{abstract}

\end{titlepage}

\newpage

\section{Introduction}
The $k$-center problem is a classical facility location problem and
is defined as follows: given an edge-weighted graph $G = (V,E)$ 
find a subset $S \subseteq V$ of size at most $k$ such that each vertex
in $V$ is ``close'' to some vertex in $S$. More formally, once we choose $S$ 
the objective function is $ \max_{u \in V} \min_{v \in S} d(u,v),$
where $d$ is the distance function (a metric).
The problem is known to be NP-hard \cite{GJ}.
Approximation algorithms for the 
$k$-center problem have been well studied and are
known to be optimal \cite{Gonzalez,HS1,HS2,HN}. 
In this paper we consider the $k$-center problem with
{\em non-uniform capacities}. We have a 
capacity function $L$ defined for each vertex, hence
$L(u)$ denotes the capacity of vertex $u$. The goal is to 
identify a set $S$ of at most $k$ centers, as well as an {\em assignment} of 
vertices to ``nearby'' centers. No more than $L(u)$ vertices may be assigned to 
a chosen center at vertex $u$. Under these
constraints we wish to minimize the maximum distance between a vertex $v$
and its assigned center $\phi(v)$. Formally, the cost of a solution $S$ is
$ \min_{S \subseteq V, |S|=k} \max_{v \in V} d(v,\phi(v))$
such that 
$\left|\left\{v\mid \phi(v) = u\right\}\right| \leq L(u) \ \ \forall u \in S 
\mbox{ where }  \phi: V \rightarrow S. $

For the special case when all the capacities are {\em identical},
a 6 approximation was developed by Khuller and Sussmann \cite{KS}
improving the previous bound of 10 by Bar-Ilan, Kortsarz and Peleg \cite{BKP}.
In the special case when multiple copies of the same vertex may be chosen,  the
approximation factor was improved to 5. 
No improvements have been obtained on these results in the last 15 years.
The assumption that the
capacities are identical is crucial for both these approaches as it allows one 
to select centers and then  ``shift'' to a neighboring vertex.
In addition, one can use arguments such as $\lceil \frac{N}{L} \rceil$ is a
lower bound on the optimal solution; with non-uniform capacities we cannot
use such a bound. This problem has resisted any progress at all, and 
no constant approximation algorithm was developed for the non-uniform capacity version.
 
In this work we present the first constant factor approximations for 
the $k$-center problem with arbitrary capacities. Moreover, our algorithm
satisfies hard capacity constraints and only one copy of any vertex is chosen. 
When multiple copies of a vertex can be chosen then a constant
factor approximation is implied by our result for the hard capacity
version. For convenience, we discuss the algorithm for the case when
at most one copy of a vertex may be chosen.
Our algorithms use a novel LP rounding method to obtain the result.
In fact this is the first time that LP techniques have been applied 
for any variation of the $k$-center problem.


While our constants are large, we do show via integrality
gap examples that the problem with non-uniform capacities is significantly 
harder than the basic $k$-center problem. 
In addition we establish that if there is a 
$(3-\epsilon)$-approximation for the $k$-center problem with non-uniform capacity constraints then
$P=NP$. Such a result is known for the cost $k$-center problem \cite{JACM} and from that
one can infer the result for the unit cost capacitated $k$-center problem with
non-uniform capacities, but our reduction is a direct reduction from Exact Cover by $3$-Sets and
considerably simpler.
We would like to note that for the $k$-supplier problem, which can be seen as $k$-center
with disjoint sets of clients and potential centers, 
a simple proof of $(3-\eps)$ approximation hardness under $P\neq NP$
was obtained by Karloff and can be found in~\cite{HS2}.

In all cases of studying covering problems, the hard capacity restriction
makes the problems very challenging. For example, for the simple
capacitated vertex cover problem with soft capacities, a 2 approximation can be obtained by
a variety of methods \cite{GHKO,GKPS} -- however imposing a hard capacity restriction
makes the problem as hard as set cover \cite{CN}. In the special
case of unweighted graphs it was shown
that a 3 approximation is possible \cite{CN},
which was subsequently improved to 2 \cite{GHKKS}.

\vspace*{-0.3cm}
\subsection{Related Facility Location Work}

The {\em facility location} problem is a central problem in operations
research and computer science and has been a testbed for many new
algorithmic ideas resulting a number of different
approximation algorithms. 
In this problem,  given a metric (via a weighted graph $G$), a set of nodes called {\em clients},  and opening costs on some nodes called {\em facilities}, the goal is to open a subset of facilities such that the sum of their opening costs and connection costs of clients to their nearest open facilities is minimized. 
When the facilities have capacities, the problem is called the {\em capacitated facility location} problem.
The first constant-factor approximation algorithm for the
(uncapacitated) version of this problem  was given by Shmoys, Tardos,
and Aardal~\cite{STA} and was based on LP rounding and a filtering
technique due to Lin and Vitter~\cite{LV}. 
A long series of improvements culminated in a $1.5$ approximation
due to Byrka \cite{Byr07}.
Up to now, the best known approximation ratio is
1.488, due to Li~\cite{Li11} who uses a randomized selection in
Byrka's algorithm \cite{Byr07}. 
Guha and Khuller~\cite{GK} showed that this problem is hard to
approximate within a factor better than 1.463, assuming $NP
\not\subseteq DTIME[n^{O(\log \log n)}]$.

Capacitated facility location has also received a great deal of
attention in recent years. Two main variants of the problem are
soft-capacitated facility location and hard-capacitated facility
location: in the latter problem, each facility is either opened at
some location or not, whereas in the former, one may specify any
integer number of facilities to be opened at that location. Soft
capacities make the problem easier and by modifying approximation
algorithms for the uncapacitated problems, we can also handle this
case~\cite{STA,JV}.
 Korupolu, Plaxton, and Rajaraman~\cite{KPR} gave the first constant-factor approximation
algorithm that handles hard capacities, based on a local search
procedure, but their approach works only if all capacities are
equal. Chudak and Williamson~\cite{CW05} improved this performance
guarantee to 5.83 for the same uniform capacity case. P{\'a}l,
Tardos, and Wexler~\cite{PTW01} gave the first constant performance
guarantee for the case of non-uniform hard capacities. This was
recently improved by Mahdian and P{\'a}l~\cite{MP03} and Zhang,
Chen, and Ye~\cite{ZCY04} to yield a 5.83-approximation algorithm.
All these approaches are based on local search. The only
LP-relaxation based approach for this problem is due to Levi, Shmoys
and Swamy~\cite{LSS04} who gave a 5-approximation algorithm for the
special case in which all facility opening costs are equal
(otherwise the LP does not have a constant integrality gap).
The above approximation algorithms for hard capacities are focused on
the uniform demand case or the splittable case in which each unit of
demand can be served by a different facility. Recently, Bateni and
Hajiaghayi~\cite{BH09} considered the unsplittable hard-capacitated
facility location problem when we allow violating facility
capacities by a $1+ \epsilon$ factor (otherwise, it is NP-hard to
obtain any approximation factor) and obtain an $O(\log n)$
approximation algorithm for this problem.

A problem very close to both facility location and $k$-center is the {\em $k$-median} problem in which we want to open at most $k$ facilities 
(like in the $k$-center problem) 
and the goal is to minimize the sum  of connection costs of  clients to their nearest open facilities (like facility location). If facilities have capacities the problem is called {\em capacitated $k$-median}.  The approaches for uncapacitated facility location often work for $k$-median. In particular,  Charikar, Guha, Tardos, and Shmoys~\cite{CGTS}  
gave  the first constant factor approximation for 
$k$-median  based on LP rounding.
The best approximation factor for $k$-median is $3+\epsilon$, for an arbitrary positive constant $\epsilon$, via the local search algorithm of Arya et al.~\cite{kmed3}. Unfortunately obtaining a constant factor approximation algorithm for capacitated $k$-median still remains open despite consistent effort. The methods used to solve uncapacitated $k$-median or even the local search technique for capacitated facility location all seem to suffer from serious drawbacks when trying to apply them for capacitated $k$-median.
For example standard LP relaxation is known to have an unbounded integrality gap~\cite{CGTS}. The only previous attempts with  constant approximation factors for this problem violate the capacities within a constant factor for the uniform capacity case~\cite{CGTS} and the non-uniform capacity case~\cite{CR05} or exceed the number $k$ of facilities by a constant factor~\cite{BCR01}. 

\noindent
{\bf Removing the metric:}
We employ the standard ``thresholding'' method used for bottleneck 
optimization problems. We can assume that we guess the optimal
solution, since there are polynomially many distinct distances between
pairs of nodes. Once we guess the distance correctly, we create an
unweighted graph consisting of those edges $uv$ such that 
$d(u,v) \le OPT$. 
We henceforth assume that we are considering the 
problem for an undirected graph $G$.

\defproblemapx{{\sc Capacitated $k$-Center Problem}}
{An undirected graph $G=(V,E)$, a capacity function $L:V \rightarrow \mathbb{N}$ and an integer $k$.}
{A set $S \subseteq V$ of size $k$, and a function $\phi:V\rightarrow S$, such that 
  for each $u \in S$, $|\phi^{-1}(u)| \le L(u)$.}
{Minimize $\max_{v \in V} \dist_G(v, \phi(v))$.}

By a $c$-approximation algorithm we denote a polynomial time
algorithm, that for an instance for which there exists
a solution with objective function equal to $1$,
returns a solution using distances at most $c$.
Note that the distance function $\dist(u,v)$, measures the distance in the unweighted undirected graph.

In the soft-capacitated version $S$ can be a multiset, 
that is one can open more than one center at a vertex.
To avoid confusion we call the standard version of the problem hard-capacitated.

\vspace*{-0.3cm}
\subsection{Our results}


While LP based algorithms have been widely used for uncapacitated facility location
problems as well as capacitated versions of facility location with soft
capacities, these methods are not of much use for problems in
dealing with hard capacities due to the fact that they usually have an
unbounded integrality gap \cite{CGTS,PTW01}. 

For general undirected graphs this is also the case for the capacitated $k$-center problem.
Consider the LP relaxation for the natural IP, which we denote as LP1.
We use $y_u$ as an indicator variable for open centers.

\begin{align}
\label{con1}    & \textstyle{\sum_{u \in V}y_{u} = k;} &  & \\
\label{con2}    & \textstyle{x_{u,v} \le y_u} & \textstyle{\forall u,v \in V} & \\
\label{con3}    & \textstyle{\sum_{v \in V} x_{u,v} \le L(u)y_u} & \textstyle{\forall u \in V} & \\
\label{con4}    & \textstyle{\sum_{u \in V} x_{u,v} = 1} & \textstyle{\forall v \in V} & \\
\label{con5}    & \textstyle{0 \le y_u \le 1} & \textstyle{\forall u\in V} & \\
\label{con6}    & \textstyle{x_{u,v} = 0} & \textstyle{\forall u,v \in V\ \dist_G(u,v) > 1} &  \\
\label{con7}    & \textstyle{x_{u,v} \ge 0} & \textstyle{\forall u,v \in V} & 
    \end{align}

For the sake of presentation we have introduced variables $x_{u,v}$ 
for all $u$, $v$, even if the distance between $u$ and $v$ in $G$ is greater than one. 
We will use those variables in our rounding algorithm.
Furthermore in constraints (\ref{con1}) and (\ref{con4}) we used equality
instead of inequality to make our rounding algorithm
and lemma formulations simpler.
In the soft-capacitated version the $y_u \le 1$ part of constraint~(\ref{con5}) should be removed.
Note that we are only interested in feasilibity of LP1, and there is no objective function.

For an undirected graph $G=(V,E)$ and a positive integer $\delta$, by $G^\delta$ we denote the graph $(V,E')$,
where $uv \in E'$ iff $\dist_G(u,v) \le \delta$.
By an integrality gap of LP1 we mean the minimum positive integer $\delta$
such that if LP1 has a feasible solution, then the graph $G^\delta$ 
admits a capacitated $k$-center solution.
As this is usually the case for capacitated problems,
by a simple example we prove LP1 has unbounded integrality gap for general graphs.

\begin{theorem}
\label{thm:gap-unbounded}
LP1 has unbounded integrality gap, even for uniform capacities.
\end{theorem}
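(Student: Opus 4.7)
The strategy is to exhibit, for every positive integer $D$, an instance $(G, k, L)$ with uniform capacities such that LP1 is feasible but no integral capacitated $k$-center solution exists in $G^{D-1}$. Since this can be done for every $D$, the gap is unbounded.

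The candidate construction is a ``chain of blocks'': $G$ will consist of $k+1$ small dense gadgets (each roughly a clique of size $L$) strung together along an auxiliary connector whose graph-theoretic length scales with $D$. The parameters will be chosen so that $|V| = kL$, making the capacity constraint tight and forcing the $k$ selected centers to be spread out. For LP feasibility (step one), I would propose a fractional solution such as $y_u = 1/L$ placed uniformly on every vertex, so that $\sum_u y_u = kL \cdot (1/L) = k$; because every vertex lies inside a dense neighborhood containing at least $L$ vertices with positive $y$, the fractional domination constraint $\sum_{u\in N[v]} y_u \ge 1$ is met, and the assignment variables $x_{u,v}$ can then be distributed proportionally inside each closed neighborhood so that $\sum_u x_{u,v} = 1$ and $\sum_v x_{u,v} \le L y_u = 1$, verifying all constraints of LP1.

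For the integral side (step two), I would argue by pigeonhole: any choice of $k$ centers leaves at least one of the $k+1$ blocks without an ``own'' center, since the block structure is arranged so that no single center can cover clients in two different blocks at small distance. Because the blocks are separated along the auxiliary connector by distance $\ge D$, the orphaned block contains vertices at distance at least $D$ from every chosen center; combined with the tight capacity constraint $kL = |V|$, which prevents reshuffling clients to faraway centers with spare capacity, this rules out any integral solution in $G^{D-1}$.

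The main obstacle is step one: the connector has to be long enough in distance to force step two but also dense enough for the fractional-domination constraint $\sum_{u \in N[v]} y_u \ge 1$ to hold at every vertex of the connector, including interior vertices of low degree. A pure path would make the LP infeasible (its interior vertices have $|N[v]| = 3$, forcing too much $y$-mass), while a fully dense connector has small diameter and fails step two. The trick is to use a ``thick'' connector, for instance a sequence of small cliques glued along single shared vertices, so that every closed neighborhood has size at least $L$ (which supports uniform $y = 1/L$), while the graph-theoretic distance across the chain still grows linearly in the number of cliques. Calibrating the capacity constraint $\sum_v x_{u,v} \le L y_u$ at the shared ``hinge'' vertices of the connector--where demand arrives from two adjoining cliques--will require a slight non-uniform tweak of the fractional solution and is the one place a careful computation is needed.
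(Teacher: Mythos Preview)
Your approach has a fatal structural problem: you are trying to build a \emph{connected} instance (a chain of blocks joined by a long connector) whose integrality gap exceeds any prescribed $D$. This cannot succeed, because the main result of the paper (Theorem~\ref{thm:main} and its Corollary) establishes that for connected graphs the integrality gap of LP1 is bounded by an absolute constant~$c$. So once $D>c$ your construction is guaranteed either to make LP1 infeasible or to admit an integral solution in $G^{D-1}$; the ``main obstacle'' you flag at the end is not a technicality but a genuine impossibility. The tension you noticed---a long thin connector kills LP feasibility, while a thick one collapses the diameter---is exactly the phenomenon the rounding algorithm exploits.

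The paper's proof sidesteps this entirely by using a \emph{disconnected} graph. Take the gadget $G'$ consisting of two adjacent vertices $a,b$ plus four further vertices adjacent to both; set $L=4$, $k=3$, and let $G$ be the disjoint union of two copies of~$G'$. Fractionally one puts $y_a=y_b=0.75$ in each copy and splits each client equally between its two neighbors, giving a feasible LP1 solution with $\sum y=3$. Integrally, each component has $6>L$ vertices, so it needs at least two centers regardless of $\delta$ (powers of $G$ never merge components), forcing at least $4>k$ centers in total. Thus no integral solution exists for any $\delta$, and the gap is infinite from a single fixed instance---no family parameterized by $D$ is needed.
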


\begin{proof}
Let $G'$ be a graph that consists of two adjacent vertices $a$, $b$ together with $4$ vertices adjacent to both of them (see Fig.~\ref{fig1}).
Set uniform capacity $L=4$, $k=3$ and consider the graph $G$ which is a disjoint union of two copies of $G'$.
Observe that by setting $y_{a} = y_{b} = 0.75$ in each of the copies
as well as $x_{a,v} = x_{b,v} = 0.5$, for all six vertices $v$, we obtain a feasible solution to the LP relaxation.
No matter what $\delta$ we choose, there is no capacitated $k$-center with $L=4$ in the graph $G^\delta$.
\end{proof}

\begin{figure}[h]
\begin{center}
\includegraphics{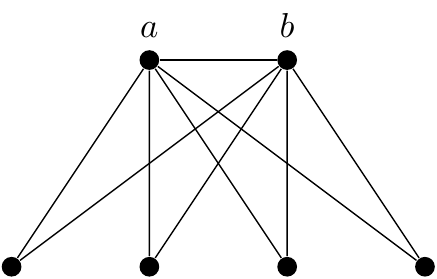}
\caption{A single connected component of the graph $G$ used in the proof of Theorem~\ref{thm:gap-unbounded}.
}
\label{fig1}
\end{center}
\end{figure}

However, interestingly, if we assume that the given graph is connected, 
the situation changes dramatically.
Our main result is, that both for hard and soft capacitated
version of the $k$-center problem, even for non-uniform capacities, LP1 has constant integrality gap
for connected graphs.
Moreover by using novel techniques we show a corresponding
polynomial time rounding algorithm, which consists
of several steps, described at high level in the following subsection.
The actual algorithm is somewhat complex, although it can be implemented
quite efficiently.

\begin{theorem}
\label{thm:main}
There is a polynomial time algorithm, which
given an instance of the hard-capacitated $k$-center problem
for a connected graph, and a fractional feasible solution for LP1,
can round it to an integral solution that uses non-zero $x_{u,v}$
variables for pairs of nodes with distance at most $c$.
\end{theorem}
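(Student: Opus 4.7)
The plan is to transform the fractional LP1 solution into an integer one in stages, paying a small constant blow-up in distance at each stage, and to use the connectedness of $G$ precisely at the point where the hard, non-uniform capacities would otherwise defeat us.

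First, I would consolidate the fractional center mass onto a sparse set of candidate centers. Process vertices greedily (say, in non-increasing order of $y_u$) and build a set $C \subseteq V$ of ``representatives'' such that any two members of $C$ lie at distance at least some constant $\delta$ in $G$, while every vertex with $y_u > 0$ has a representative within distance $\delta$. Then push every non-representative's $y$-mass and incident $x$-mass onto a nearest representative in $C$; because the transfers occur over distance at most $\delta$, the constraint $x_{u,v} \neq 0 \Rightarrow \dist_G(u,v) \le O(1)$ is preserved (with a slightly larger constant). After this step, all $y$-mass lies on $C$, with $\sum_{u \in C} y_u = k$, and each vertex in $V$ is fractionally served by representatives at constant distance.

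Next, I would group the representatives into ``clusters'' and try to round. Build an auxiliary graph $H$ on $C$ in which two representatives are joined whenever they jointly serve, in the (consolidated) LP, a common client — equivalently, whenever their neighborhoods in $G^{O(1)}$ overlap at a vertex with positive $x$-mass to both. Because $G$ is connected and representatives cover $V$ within distance $\delta$, $H$ is itself connected. Each connected chunk of $H$ carries some fractional $y$-total, and the natural target is to open $\lfloor\sum y_u\rfloor$ or $\lceil\sum y_u\rceil$ representatives from each chunk, choosing the ones whose capacities best match the client mass in the chunk. On a chunk the rounding reduces to a combinatorial problem: find an integer subset of $C$ with $|S| = k$ and an integer assignment $\phi \colon V \to S$ of distance $O(1)$ in $G$ respecting $L$; I would solve it by building a spanning tree of $H$ and using a DP/flow along that tree to rebalance over- and under-capacity pieces, exploiting the fact that along a tree edge, an excess client at one end can be ``walked'' to the other end through a vertex that is close to both.

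The main obstacle is precisely this last step. Unlike the uniform-capacity algorithms of Khuller and Sussmann \cite{KS} and Bar-Ilan--Kortsarz--Peleg \cite{BKP}, we cannot simply shift an opened center to a neighbor, since that neighbor may have a much smaller $L$; and unlike LP-based facility-location rounding, we cannot afford to violate capacities, so pipage rounding or randomized rounding on the $y$-variables is not available to us. The novelty must lie in showing that the tree-structured exchange on $H$ always succeeds, i.e., that whenever a chunk has a fractional feasible LP1 assignment of total mass into its capacities, there is an integer assignment within the chunk (possibly after borrowing from an adjacent chunk via the connecting tree edge) that respects $L$ and uses only distance-$O(1)$ edges of $G$. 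Connectedness of $G$ is what makes the exchange graph $H$ connected and hence the tree argument possible; without it, as Theorem~\ref{thm:gap-unbounded} already shows, no such rounding can exist.
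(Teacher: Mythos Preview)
Your high-level shape---consolidate the fractional mass onto a sparse set $C$, link $C$ into a connected auxiliary structure, then round along a spanning tree---matches the paper's outline. But two of the three stages, as you describe them, do not go through, and the fix is exactly where all of the paper's technical work lies.

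\textbf{Consolidation is not capacity-safe.} When you ``push every non-representative's $y$-mass and incident $x$-mass onto a nearest representative in $C$'', constraint~(\ref{con3}) can break: shifting $\alpha$ of $y$-mass from $a$ to $b$ preserves $\sum_v x_{b,v} \le L(b)\,y_b$ only if $L(a) \le L(b)$. Ordering by $y_u$ does nothing to guarantee this. The paper's consolidation (Lemma~\ref{lem:path-structure}) is capacity-driven: it picks, for each independent-set vertex $v$, the highest-capacity vertex $f(v)$ in $N[v]$ as the actual representative, so that every shift in that neighborhood goes from smaller to larger capacity. Group shifting then leaves at most one fractional leftover per representative, and that leftover is attached to its representative as a leaf---crucially with the invariant $L(\text{leaf}) \le L(\text{representative})$.

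\textbf{``DP/flow along a tree'' hides the real obstacle.} After consolidation the paper has a path $P$ of representatives with $y=1$ and pendant leaves $P'$ carrying the remaining fractions (the \emph{caterpillar structure}). Rounding the leaves means routing $y$-flow between leaves through the path. But a flow path $v'_a \to v_a \to \cdots \to v_b \to v'_b$ is only LP-feasible if every internal $v_j$ satisfies $L(v_j) \ge L(v'_a)$; otherwise the transferred client mass overflows $v_j$. Nothing in your tree argument prevents a low-capacity internal node from sitting between two high-capacity leaves. The paper spends Sections~\ref{sec:separable}--\ref{sec:rounding-flow} on precisely this: it defines \emph{dangerous} path vertices (Definition~\ref{def:gamma}), shows how to cut a caterpillar at a minimum-capacity dangerous vertex into pieces whose leaf $y$-sums are integral (the \emph{separable} analysis and Algorithm~\ref{alg:separate}), argues that a non-separable dangerous caterpillar can be made safe with at most two further chain-shifts (Lemma~\ref{lem:dangerous-non-sep}), and finally constructs a rounding flow on each safe piece (Lemma~\ref{lem:rounding-flow}). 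Your proposal asserts that ``the tree-structured exchange on $H$ always succeeds'' but supplies no mechanism for the capacity-monotonicity requirement along flow paths; that mechanism is the bulk of the proof.

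In short: you correctly located the difficulty (``we cannot simply shift an opened center to a neighbor, since that neighbor may have a much smaller $L$'') but did not resolve it. The resolution is not a generic tree DP; it is the caterpillar structure plus the safe/separable decomposition, both of which are absent from your sketch.
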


\begin{corollary}
The integrality gap of LP1 for connected graphs is bounded by a constant,
and there is a constant factor approximation algorithm for connected graphs.
\end{corollary}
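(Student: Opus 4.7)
The plan is to round the fractional solution in two major phases: first selecting an integral center set $S \subseteq V$ of size $k$, then building a feasible capacitated assignment $\phi : V \to S$ whose edges live in $G^c$ for a constant $c$.

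For the selection phase, I would begin with a filtering step, greedily picking a maximal set $M$ of \emph{monarch} vertices that are pairwise at distance at least a small constant in $G$, and attaching each $v \in V$ to its nearest monarch. By combining the constraints $\sum_u x_{u,v} = 1$, $x_{u,v} \le y_u$, and the restriction $x_{u,v} = 0$ when $\dist_G(u,v) > 1$, each monarch's short-range ball concentrates a constant amount of fractional $y$-mass. I would then convert this mass into integer openings by picking, near each monarch, a vertex whose capacity dominates what the LP actually uses in the locality (for instance, the largest-capacity vertex in a small ball). Making $|S|$ exactly $k$ requires a global accounting argument that redistributes surplus $y$-mass between monarchs along short paths in $G$; here connectivity is indispensable, since without it disconnected pieces with $y$-mass just below $1$ could all round down simultaneously and destroy the budget.

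For the assignment phase, I would treat the resulting instance as a transportation problem on the bipartite graph with $V$ on one side, $S$ on the other, edges present whenever $\dist_G(v,u) \le c$, and capacity $L(u)$ on each center. Fractional feasibility would be certified by composing the original LP flow $x$ with the rerouting used during selection: each unit of demand at $v$ travels via $x$ to some fractional center $u'$, then along a short $G$-path from $u'$ to an opened $u \in S$. Because the selection rule chose capacities at least as large as those the LP relied on in each locality, the composed flow respects every $L(u)$; integrality of $\phi$ then follows from the total unimodularity of transportation polytopes.

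The main obstacle is the selection phase: preserving the cardinality budget $|S|=k$ while simultaneously guaranteeing that each monarch's territory receives enough integer capacity. With uniform capacities, the shifting trick of Khuller--Sussmann closes the gap by moving a center to a neighbor at no capacity cost; with non-uniform $L(u)$ a shift can destroy capacity arbitrarily, so that shortcut disappears. I expect the technical core of the proof to be an exchange argument — very likely a global flow or matroid-style redistribution on the graph of monarchs — that couples local rounding decisions while exploiting connectivity of $G$ (as opposed to just the metric).
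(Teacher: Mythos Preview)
Your framework is on the right track and matches the paper at the two endpoints: the filtering step (a maximal independent set in $G^2$, then opening the highest-capacity vertex in each monarch's neighborhood) is exactly how the paper begins the selection phase, and the transportation argument for rounding $x$ once $y$ is integral is exactly Section~\ref{sec:rounding-x}. But the middle --- your ``global accounting argument that redistributes surplus $y$-mass between monarchs along short paths'' --- is the entire difficulty, and you have left it as a placeholder rather than a mechanism.

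The concrete obstruction is the one you name but do not resolve: shifting $y$-mass from $a$ to $b$ preserves constraint~(\ref{con3}) only when $L(a)\le L(b)$. So you cannot simply push fractional mass along short $G$-paths; every hop must go to a vertex of at least equal capacity. After your filtering step each monarch's territory still carries up to one fractional leftover, and these leftovers must be aggregated into integers. The paper's solution is to arrange the monarchs (which now have $y=1$) into a Hamiltonian path in $G^{21}$ and hang each fractional leftover off its monarch, forming a \emph{caterpillar}: a spine of integral-$y$ vertices with fractional leaves. Mass is then transferred between distant leaves by \emph{chain shifting} through the spine, which is legal only when every spine vertex on the route has capacity at least that of the source leaf. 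A caterpillar where some spine vertex is sandwiched between higher-capacity leaves on both sides (``dangerous'') blocks this; the paper devotes Sections~\ref{sec:separable}--\ref{sec:rounding-flow} to a recursive splitting procedure that breaks dangerous caterpillars into ``safe'' ones and then finds a capacity-respecting rounding flow on each. None of this structure --- the spine, the capacity-monotone chain shifting, the safe/dangerous dichotomy --- appears in your proposal, and a generic ``flow or matroid-style redistribution'' will not produce it, because the capacity-monotonicity constraint on shifts is not a matroid constraint and does not yield to standard exchange arguments.
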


To simplify the presentation we do not calculate the exact constant proved
in the above corollary, but it is in the order of hundreds.
As a counterposition, for soft capacities in Section~\ref{sec:soft} we present a much simpler
$11$-approximation algorithm, which we find as one more evidence
that hard capacities are much harder to deal with.

\begin{theorem}
\label{thm:soft-apx}
For connected graphs there is a polynomial time rounding algorithm,
upper bounding the integrality gap of LP1 by $11$ for soft-capacities.
\end{theorem}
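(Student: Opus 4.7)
The plan is to round LP1 via a clustering-based approach. Given a feasible fractional $(x,y)$, I would first extract two local mass inequalities that follow directly from the LP constraints: (i) from (\ref{con2}), (\ref{con4}), and (\ref{con6}), one obtains $\sum_{u\in N[v]}y_u\ge 1$ for every $v\in V$; and (ii) combining additionally (\ref{con3}), one obtains $\sum_{u\in N[S]}L(u)\,y_u\ge|S|$ for every $S\subseteq V$, where $N[S]=\bigcup_{v\in S}N[v]$. Together with the budget $\sum_u y_u=k$ from (\ref{con1}), these are the only ingredients the rounding uses.

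The algorithm would greedily build a set $T\subseteq V$ of representatives that is maximal subject to pairwise graph-distance strictly greater than some constant $\alpha$, and assign each vertex to a nearest representative via $\pi:V\to T$, so that $\dist_G(v,\pi(v))\le\alpha$. For each $t\in T$, it would pick $u^*_t$ of maximum capacity inside a ball $B_\beta(t)=\{u:\dist_G(u,t)\le\beta\}$, with $\beta\ge\alpha+1$, and open $\lceil|\pi^{-1}(t)|/L(u^*_t)\rceil$ identical copies of $u^*_t$. The served-distance guarantee is then $\alpha+\beta$; the parameters $\alpha$ and $\beta$ will be tuned so that $\alpha+\beta\le 11$, using connectedness of $G$ to ensure every $v$ is actually reached by the greedy dominating-set construction.

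The counting is the delicate part. Applying (ii) with $S=\pi^{-1}(t)\subseteq B_\alpha(t)$ and $N[S]\subseteq B_{\alpha+1}(t)\subseteq B_\beta(t)$ yields $|\pi^{-1}(t)|\le L(u^*_t)\sum_{u\in B_\beta(t)}y_u$, so the number of copies opened at cluster $t$ is at most $\sum_{u\in B_\beta(t)}y_u+1$. Summing over $t$ runs into the classical tension between disjointness of the accounting balls $B_\beta(t)$ (which would want $\alpha>2\beta$, in conflict with $\beta\ge\alpha+1$) and the distance bound. I would resolve it by splitting the analysis into a leading multiplicative term, controlled by $\sum_u y_u=k$ together with a bounded overlap multiplicity of the balls $B_\beta(t)$ coming from the greedy construction of $T$, and an additive $|T|$ ceiling-slack term controlled by (i) applied to representatives (whose closed neighborhoods are disjoint whenever $\alpha\ge 2$), which gives $|T|\le k$. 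A post-processing step then collapses excess copies by piling them at the single highest-capacity opened vertex per cluster, exploiting soft capacities, and re-balances across clusters through connectivity of $G$ if needed to hit the target of $k$ copies.

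The main obstacle will be this dual-accounting step and, equivalently, the tuning of $\alpha$ and $\beta$ that delivers $\alpha+\beta\le 11$ while keeping the overlap multiplicity of $B_\beta(t)$ bounded by a small constant. This is precisely where the soft-capacity assumption is essential: stacking multiple copies at a single vertex respects capacity, so per-cluster integrality losses translate into additive rather than multiplicative terms. The same trick fails for hard capacities, since only one copy per vertex is allowed, which is what drives the much more intricate argument of Theorem~\ref{thm:main} and its far larger constant.
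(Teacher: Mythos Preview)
Your accounting step has a genuine gap. In general graphs there is no bound on the overlap multiplicity of the balls $B_\beta(t)$ around greedily chosen representatives when $\alpha<2\beta$, and you need $\beta\ge\alpha+1$. A spider with centre $w$ and $m$ legs of length $\alpha+1$ admits $m$ representatives (the leg tips) pairwise at distance $2(\alpha+1)>\alpha$, yet $w$ lies in every $B_{\alpha+1}(t)$; so $\sum_t\sum_{u\in B_\beta(t)}y_u$ can be $\Theta(m)\cdot k$, not $O(k)$. Your post-processing sketch (``collapse excess copies \ldots\ re-balance across clusters through connectivity'') does not close this: moving copies between clusters forces the moved centre to serve clients of the donor cluster at uncontrolled distance, and simply deleting copies breaks capacity. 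The obstacle you flag as ``main'' is in fact fatal for this line of argument.

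The paper sidesteps overlap entirely by a different mechanism. It takes $S$ to be a maximal independent set in $G^2$ with $G^3[S]$ connected (Khuller--Sussmann), so $|S|\le k$ by your inequality (i). It then \emph{partitions} the $y$-mass among representatives, setting $u(s)=\sum_{v:\,s_v=s}y_v\ge 1$ with $\sum_s u(s)=k$ exactly, and rounds the $u(s)$ to integers by pushing fractional parts up a spanning tree of $G^3[S]$. This tree-based aggregation is the missing idea: it makes each $u(s)$ integral while preserving $\sum_s u(s)=k$ without any double counting, and the tree edges (length $\le 3$ in $G$) control how far mass moves during rounding. The resulting assignment is within distance $5$ of the representative, and shifting each representative's $u(s)$ copies to the highest-capacity vertex within distance $6$ gives the $11$. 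Connectivity of $G$ is used precisely to get the spanning tree of $G^3[S]$, not for an ad hoc rebalancing.
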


By using standard techniques one can restrict the capacitated $k$-center
problem to connected graphs.

\begin{theorem}
\label{thm:conn}
If there exists a polynomial time $c$-approximation algorithm for the (soft) capacitated $k$-center problem in connected graphs,
then there exists a polynomial time $c$-approximation algorithm for general graphs.
\end{theorem}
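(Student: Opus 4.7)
The plan is the standard thresholding reduction combined with a decomposition along connected components. For every one of the $O(n^2)$ distinct pairwise distances $D$ in $G$, I form the unweighted threshold graph $G_D=(V,E_D)$ with $uv\in E_D$ iff $\dist_G(u,v)\le D$, compute its connected components $C_1,\ldots,C_t$, and try to solve the problem separately on each $C_i$. At the end I return the solution for the smallest $D$ for which the procedure succeeds.

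First, observe that any feasible capacitated $k$-center solution for $G$ with objective value at most $D$ must assign every vertex of $C_i$ to a center lying in $V(C_i)$, because every vertex outside $C_i$ is at graph-distance strictly greater than $D$ from every vertex of $C_i$. Such a solution therefore partitions the budget as $k=\sum_i k_i^{\mathrm{INT}}$, with $k_i^{\mathrm{INT}}$ centers placed inside $C_i$ serving exactly the vertices of $C_i$. Let $k_i^{\ast}$ be the minimum number of centers needed for a feasible capacitated solution on the induced connected instance $C_i$ with objective value at most $1$; then $\sum_i k_i^{\ast}\le\sum_i k_i^{\mathrm{INT}}\le k$.

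Next, using the assumed $c$-approximation for connected graphs as a black box, for each $C_i$ and each integer $k_i\in\{0,1,\ldots,|V|\}$ I run the algorithm on the connected instance $(C_i,L|_{V(C_i)},k_i)$ and let $\hat k_i$ be the smallest $k_i$ for which the returned assignment is feasible and uses only pairs at distance at most $c$ in $C_i$ (equivalently, at distance at most $c\cdot D$ in $G$). When $k_i=k_i^{\ast}$ the instance admits a distance-$1$ solution, so by the definition of a $c$-approximation the algorithm returns a distance-$c$ solution; hence $\hat k_i\le k_i^{\ast}$ and $\sum_i\hat k_i\le k$. I concatenate the component-wise solutions and, if $\sum_i\hat k_i<k$, open $k-\sum_i\hat k_i$ arbitrary additional centers. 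This yields a valid global capacitated $k$-center solution with all assignment distances bounded by $c\cdot D$. When $D=\mathrm{OPT}$ the test necessarily succeeds, so the smallest $D$ for which it succeeds is at most $\mathrm{OPT}$, giving a global cost of at most $c\cdot\mathrm{OPT}$.

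The main obstacle is that $k_i^{\ast}$ itself is $NP$-hard to compute, so we cannot hope to learn the correct budget split up front. The key observation that sidesteps this is that we never need to know $k_i^{\ast}$: it suffices to exhibit any $\hat k_i\le k_i^{\ast}$ for which the $c$-approximation succeeds, and this follows immediately from its guarantee applied with $k_i=k_i^{\ast}$, using the monotonicity that any distance-$1$ solution remains distance-$1$ after adding dummy centers. For the soft-capacitated case the same reduction goes through verbatim, with $k_i$ ranging over $\{0,1,\ldots,k\}$ (since now multiple copies of a vertex may be opened) and the hard-capacity approximation replaced by the soft-capacity one.
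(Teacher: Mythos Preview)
Your proposal is correct and follows essentially the same approach as the paper: decompose into connected components, run the black-box $c$-approximation on each component over all candidate budgets, and take the smallest budget $\hat k_i$ for which it succeeds; since the black box is guaranteed to succeed at $k_i^\ast$, one gets $\sum_i\hat k_i\le\sum_i k_i^\ast\le k$. The only cosmetic differences are that the paper states the result after thresholding (so it works directly with the unweighted graph $G$ rather than iterating over $D$) and uses binary search over $k_i$ where you use a linear scan.
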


\begin{proof}
Let us assume that we are given a graph $G$ with connected components $C_1,\ldots,C_t$, a capacity function $L:V \ra \mathbb{N}$ and an integer $k$.
For each connected component $C_i$ using binary search we find the smallest value of $k_i \le k$ for which our black box algorithm
finds a solution (if there is no solution for $k_i=k$ then we set $k_i = \infty$).
If $\sum_{i=1}^t k_i > k$ then we answer NO, otherwise we return the union of solutions found by the black box algorithm.

To prove correctness of the above algorithm it is enough to observe that if there exists a solution $S$ to the $k$-center problem in the graph $G$ then
for each connected component $C_i$ we have $k_i \le |S\cap V(C_i)|$.
\end{proof}


Therefore we prove there is a constant factor approximation algorithm for the hard-capacitated $k$-center problem\footnote{With some care,
perhaps some of the constants can be improved, however our focus was to
show that a constant approximation is obtainable using LP rounding.}.
Our results easily extend to the case when there is an upper bound $U(u)$ of the number of times vertex
$u$ may be chosen as a center.  Constraint~\ref{con5} should be modified to be $0 \le y_u \le U(u)$ 
to yield a relaxation LP2. We
can employ the same rounding procedure as discussed for the hard capacity case with $U(u)=1$.

The proof of the following theorem is omitted.
\begin{theorem}
\label{thm:main2}
There is a polynomial time algorithm, which
given an instance of the hard-capacitated $k$-center problem
for a connected graph, and a fractional feasible solution for LP2, 
can round it to an integral solution that uses non-zero $x_{u,v}$
variables for pairs of nodes with distance at most $c$. 
\end{theorem}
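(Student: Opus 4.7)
The plan is to reduce Theorem~\ref{thm:main2} to Theorem~\ref{thm:main} by a vertex-duplication construction. Given a connected graph $G$, a capacity function $L$, an upper bound function $U$, and a feasible LP2 solution $(x^*, y^*)$, build an auxiliary graph $\tilde{G}$ in which each $u \in V$ is replaced by $U(u)$ copies $u^{(1)}, \ldots, u^{(U(u))}$, each carrying capacity $L(u)$; place an edge between $u^{(i)}$ and $v^{(j)}$ in $\tilde{G}$ whenever $u = v$ (so copies of a single vertex form a clique) or $uv \in E(G)$. Then $\tilde{G}$ is connected whenever $G$ is, and $\dist_{\tilde{G}}(u^{(i)}, v^{(j)}) = \dist_G(u, v)$ whenever $u \neq v$.

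Next, lift $(x^*, y^*)$ to a fractional LP1 solution $(\tilde{x}, \tilde{y})$ on $\tilde{G}$. Since $0 \le y^*_u \le U(u)$, decompose $y^*_u = \sum_{i=1}^{U(u)} \alpha_{u,i}$ with each $\alpha_{u,i} \in [0,1]$—for instance, set $\lfloor y^*_u \rfloor$ of the $\alpha_{u,i}$ to $1$ and a single remaining one to the fractional remainder. Let $\tilde{y}_{u^{(i)}} := \alpha_{u,i}$ and split the assignment variables proportionally, so that the original demand at $v$ is served across the copies of its LP2-servers via $\tilde{x}_{u^{(i)}, v^{(1)}} := x^*_{u,v}\, \alpha_{u,i}/y^*_u$. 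Apply Theorem~\ref{thm:main} to $(\tilde{x}, \tilde{y})$ on $\tilde{G}$ to obtain an integral rounding $(\hat{x}, \hat{y})$ that uses non-zero $\hat{x}$-entries only for pairs at $\tilde{G}$-distance at most some constant $c'$. Project back to $G$ by setting $y_u := \sum_i \hat{y}_{u^{(i)}}$ and $x_{u, v} := \sum_{i, j} \hat{x}_{u^{(i)}, v^{(j)}}$. Then $y_u \in \{0, 1, \ldots, U(u)\}$, the capacity bound at $u$ is respected because each opened copy contributes at most $L(u)$ to $\sum_v x_{u,v}$, and every pair $(u,v)$ with $x_{u,v} > 0$ satisfies $\dist_G(u, v) \le c'$.

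The main obstacle lies in the lifting step: $\tilde{G}$ contains $\sum_u (U(u) - 1)$ extra demand-bearing copies beyond the vertices of $G$, and enforcing the equality constraint~(\ref{con4}) at these copies pushes the capacity constraint~(\ref{con3}) over the bound in the naive proportional split. The cleanest resolution is to observe that Theorem~\ref{thm:main}'s argument survives a mild variant of LP1 in which the duplicated vertices carry no demand, so that constraint~(\ref{con4}) is only imposed at the original representatives $u^{(1)}$; alternatively, one can inflate each copy's capacity by an additive constant in $\tilde{G}$ and route the extra demand of $v^{(j)}$, $j \ge 2$, internally within the clique of copies of $v$ (which has total $\tilde{y}$-mass $y^*_v \ge 1$ whenever $U(v) > 1$ is actually exploited), propagating the slack through the analysis. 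In either case the constant produced by Theorem~\ref{thm:main2} exceeds that of Theorem~\ref{thm:main} by an additive $O(1)$.
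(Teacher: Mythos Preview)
The paper takes a different route: it asserts (with proof omitted) that the rounding procedure developed for Theorem~\ref{thm:main} applies \emph{directly} to LP2, i.e.\ one re-runs the same shifting/caterpillar machinery with the constraint $0\le y_u\le U(u)$ in place of $0\le y_u\le 1$. Your proposal instead tries a black-box reduction via vertex duplication.

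Your reduction has a genuine gap, and you essentially flag it yourself. The duplicated copies $u^{(2)},\ldots,u^{(U(u))}$ are new demand points in $\tilde G$, and lifting $(x^*,y^*)$ proportionally does not cover them; forcing constraint~(\ref{con4}) at these copies over-saturates~(\ref{con3}). Neither of your two fixes closes this:
\begin{itemize}
\item Fix~(a) (``the argument survives a variant of LP1 where duplicated vertices carry no demand'') is not a reduction to Theorem~\ref{thm:main} as a black box; it asks the reader to re-verify the entire rounding procedure under a modified LP, which is exactly what the paper's direct approach does, only with extra scaffolding.
\item Fix~(b) (``inflate each copy's capacity by an additive constant'') produces an integral solution in $\tilde G$ that, after projection, may violate the original capacities $L(u)$ in $G$. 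The parenthetical justification that ``$y^*_v\ge 1$ whenever $U(v)>1$ is actually exploited'' is false: nothing in LP2 forces $y^*_v\ge 1$ when $U(v)>1$, so the internal routing of the extra demand within the clique of copies need not be feasible.
\end{itemize}

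There is also a polynomial-time issue you do not address: $\tilde G$ has $\sum_u U(u)$ vertices, which is not polynomial if the $U(u)$ are encoded in binary. This one is patchable (create only $\lceil y^*_u\rceil$ copies, giving at most $n+k$ vertices), but it should be said.
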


While our constants are large, we do show in Section~\ref{sec:gaps} via integrality
gap examples that the problem with non-uniform capacities is significantly 
harder than the basic $k$-center problem. 

\begin{theorem}
\label{thm:gap-constant}
For connected graphs the integrality gap of LP1 is at least $5$
for uniform-hard-capacities and at least $4$ for uniform-soft-capacities.

Moreover in the non-uniform hard-capacitated case, the integrality gap of LP1 for connected graphs is at least $7$, even if all the non-zero capacities are equal.
\end{theorem}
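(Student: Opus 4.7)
The plan is, for each of the three claimed bounds, to exhibit an explicit connected graph $G$, capacity function $L$, and integer $k$ for which LP1 admits a feasible fractional solution (so the LP-distance is $1$), yet no integer capacitated $k$-center exists in $G^{c-1}$, where $c\in\{4,5,7\}$ is the target lower bound. All three constructions will be connected variants of the ``fractional twin'' gadget from the proof of Theorem~\ref{thm:gap-unbounded}: a pair of adjacent vertices $a,b$ sharing a set of common neighbours, fractionally opened at $y_a=y_b=\tfrac12$.

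For the uniform-soft-capacity bound of $\ge 4$, I would take $k+1$ disjoint copies of the gadget and link consecutive copies by short paths, tuned so that the minimal distance between non-joining vertices of different copies is exactly $4$. The soft LP (without the $y_u\le 1$ cap) can open the twins of every copy at suitable fractional values summing to exactly $k$, and distribute the $x_{u,v}$ inside each copy and along the joining paths to satisfy coverage and capacity. Integrally, the $k+1$ copies must be served by only $k$ centers, so some copy is starved; the nearest open center then lies across a joining path of length $\ge 4$, giving the claimed gap.

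For the uniform-hard-capacity bound of $\ge 5$ I repeat the argument with joining paths of length $5$, and replace each single-pair gadget by a slightly larger structure consisting of several fractional twin pairs, so that the extra constraint $y_u \le 1$ holds everywhere in the fractional solution. The integer obstruction is the same pigeonhole argument, now forcing at least one vertex to travel distance $\ge 5$. For the non-uniform-hard-capacity bound of $\ge 7$ I would exploit the new freedom that some vertices have capacity $0$: constraint~(\ref{con3}) forces $x_{u,v}=0$ from every such $u$, yet constraint~(\ref{con1}) still allows $y_u>0$ to contribute to the budget $k$. I would arrange the positive-capacity vertices along a cycle with gaps of length $7$ and fill the intermediate positions with $L=0$ ``decoy'' vertices. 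The LP opens the decoys (at $y_u=\tfrac12$) together with enough twin gadgets to reach $\sum y_u=k$, whereas an integer opening at a decoy is useless, so the integer solution must place its $k$ centers among the sparse positive-capacity slots; a cyclic pigeonhole argument then leaves some slot uncovered, stranding a vertex at distance $\ge 7$.

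The main obstacle in each case is the simultaneous balancing of the three LP constraints: the global total $\sum y_u = k$, per-vertex coverage $\sum_u x_{u,v} = 1$ restricted to distance $1$, and per-center capacity $\sum_v x_{u,v} \le L(u) y_u$. The non-uniform case is especially delicate, because the cyclic structure must also rule out the possibility that a clever integer re-assignment exploits a single well-placed positive-capacity center to cover two distant gadgets at once — this is where the joining-path length $7$ is essential, and also where most of the combinatorial work in the proof will live.
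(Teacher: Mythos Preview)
Your plan diverges from the paper's construction and, as sketched, has genuine gaps. The paper does not chain $k+1$ gadgets and invoke pigeonhole; it builds a star: a root $r$ with $k-6$ leaves $\ell_i$, each linked via a vertex $x_i$ to a gadget $G_i$ consisting of two hubs $a_i,b_i$ sharing $L+2$ common neighbours $B_i$ (with $L=k-1$). Fractionally one sets $y_r=1$ and $y_{a_i}=y_{b_i}=(L+5)/(2L)$, summing to at most $k$. The integer obstruction is a \emph{per-gadget factor two}: since $|B_i|=L+2>L$, every $B_i$ needs at least two centres within distance $4$, and those can only lie in $V(G_i)\cup\{x_i,\ell_i,r\}$ with $r$ shared; a count then exceeds $k$ in $G^4$ (and, with the easier soft analysis, in $G^3$). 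For the non-uniform bound the paper keeps the same graph and LP solution but sets $L(v)=0$ everywhere except at $r$ and the $a_i,b_i$; an uncovered vertex of some $B_i$ must now be served by $a_j$ or $b_j$ with $j\ne i$, at distance $\ge 7$.

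Two concrete problems with your approach. First, in the uniform cases a centre placed on a joining path breaks the pigeonhole: on a length-$4$ path $a_i\!-\!p_1\!-\!p_2\!-\!p_3\!-\!a_{i+1}$, the vertex $p_2$ is within distance $3$ of every vertex of both adjacent gadgets, so in $G^3$ one centre serves two gadgets; the midpoints of a length-$5$ path are within distance $4$ of both, breaking the $G^4$ count as well. Lengthening the paths would fix this, but then the interior path vertices need distance-$1$ coverage of their own, and you have not shown $\sum_u y_u=k$ survives (a path vertex with only capacity-$0$ or $y=0$ neighbours cannot satisfy constraint~(\ref{con4})). Second, your non-uniform idea misreads what capacity-$0$ decoys buy. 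Opening a decoy with $y_u>0$ is exactly as useless fractionally as integrally, since constraint~(\ref{con3}) kills all $x_{u,\cdot}$ either way, and the integer solution is just as free as the LP to burn budget on decoys to meet $\sum y_u=k$; no gap arises from that alone. Worse, if the positive-capacity vertices sit $7$ apart on your cycle, a decoy midway between two of them has no positive-capacity neighbour at distance $1$, so LP1 is already infeasible there. The paper's leverage in the non-uniform case comes not from padding the budget but from restricting where integer centres can sit in a construction that already exhibits the factor-two discrepancy.
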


Despite the fact, that the algorithm of~\cite{KS} for uniform capacities was obtained more than a decade ago,
no lower bound for the capacity version (neither soft nor hard), better than the trivial $2-\eps$, derived from the uncapacitated version, is known.
We believe that the integrality gap examples, presented in this paper, are of independent 
interest since they may help in proving a stronger lower bound for the capacitated $k$-center problem with uniform capacities.

To make a step in this direction we investigate lower bounds for the
non-uniform case.
By a reduction from the cost $k$-center problem~\cite{JACM}
one can show that there is no $(3-\epsilon)$-approximation 
for the capacitated $k$-center problem with non-uniform capacities.
By a simple reduction from Exact Cover by $3$-Sets, in 
Section~\ref{sec:lowerbound},
we prove the same result under the assumption $P\not=NP$.

Finally we give  evidence that our LP approach might be the proper tool
for solving the capacitated $k$-center problem. The proof of the 
following theorem shows that when the Khuller-Sussmann algorithm
fails to find a solution then in fact there is no feasible LP solution
for that guess of distance. The smallest radius guess for which the 
algorithm succeeds, proves an integrality gap on the LP.
Considering the result of Theorem~\ref{thm:gap-constant}, 
as we show in Section~\ref{sec:uniform}, it follows that for uniform capacities the gap in the analysis is small,
since our bounds are tight up to an additive $+1$ error.

\begin{theorem}
\label{thm:gap-upper-constant}
For connected graphs the integrality gap of LP1 is at most $6$
for uniform-hard-capacities and at most $5$ for uniform-soft-capacities.
\end{theorem}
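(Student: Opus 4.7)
The plan is to run the Khuller--Sussmann (KS) algorithm on the threshold graph $G$ and show that it succeeds whenever LP1 admits a fractional feasible solution. Since KS, on success, produces a capacitated $k$-center solution of radius at most $6$ in the hard-capacity case and $5$ in the soft-capacity case, this yields the claimed integrality-gap bounds directly.

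First I would fix a fractional feasible solution $(x,y)$ for LP1 on $G$. Recall that KS proceeds by computing a maximal independent set $M$ in $G^2$, forming clusters by assigning each $v \in V$ to its closest representative in $M$ (under distances in $G$), tentatively opening one center per cluster, and finally performing a \emph{shifting} step that relocates centers across edges between adjacent clusters so as to respect the capacity constraints. The algorithm declares failure only if the shifting step cannot fit all vertices within the capacity budget or more than $k$ centers are required.

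The core technical step is to translate every possible failure of KS into a witness subset $T \subseteq V$ that contradicts LP1. Summing constraint (\ref{con4}) over $v \in T$ and using (\ref{con6}) to restrict the inner sum to $u$ in the closed $G$-neighborhood $N[T]$ gives $|T| = \sum_{v \in T} \sum_{u \in N[T]} x_{u,v}$; combining with (\ref{con2})--(\ref{con3}) and (\ref{con5}) yields $|T| \le L\cdot|N[T]|$ in the hard case, while in the soft case the natural global bound is $\sum_{u \in N[T]} y_u \le k$ via (\ref{con1}). An analysis of the independent-set packing step and of the shifting step shows that any KS failure exhibits a $T$ violating the appropriate one of these inequalities, which is the desired contradiction. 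Once the contradiction is ruled out, KS terminates successfully and its output certifies the gap bound.

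The main obstacle will be matching each KS failure mode to a specific LP infeasibility certificate, in particular for the shifting step, where centers traverse edges of $G$ and the accounting of available capacity becomes delicate. Connectivity of $G$ enters here to ensure that the cluster adjacency graph used by KS is itself connected, enabling shifts to propagate whenever LP mass is available nearby. The $+1$ difference between the hard and soft bounds should fall out naturally: in the soft case the LP constraint $y_u \le 1$ is dropped, so multiple centers may fractionally coexist at a single vertex, eliminating exactly one unit of shifting distance and thereby improving the bound from $6$ to $5$.
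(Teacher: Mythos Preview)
Your high-level strategy---run Khuller--Sussmann and show that any failure contradicts feasibility of LP1---is exactly what the paper does. The gap is in the certificate you extract from a KS failure and in the LP inequality you contradict.

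What KS actually produces on failure (in both the hard and soft variants) is not a single set $T$ witnessing a Hall-type violation $|T|>L\cdot|N[T]|$. Instead it produces a set $V_0$ that is independent in $G^2$ together with the ``far'' set $V'=\{v:\dist_G(v,V_0)\ge 3\}$, and the numerical failure condition is $|V_0|+|V'|/L>k$. The contradiction with LP1 is then obtained by partitioning $V$ into $N[V_0]$ and $N[V']=V\setminus N[V_0]$ and lower-bounding the $y$-mass on each piece separately: from (\ref{con4}),(\ref{con2}) and pairwise disjointness of the $N[v]$ for $v\in V_0$ one gets $\sum_{u\in N[V_0]}y_u\ge|V_0|$, and from (\ref{con4}),(\ref{con3}) one gets $\sum_{u\in N[V']}y_u\ge|V'|/L$; summing gives $\sum_u y_u>k$, violating (\ref{con1}). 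Your proposed inequalities $|T|\le L\cdot|N[T]|$ and $\sum_{u\in N[T]}y_u\le k$ are both true consequences of LP1, but KS failure does not hand you a set violating either of them, so your contradiction step does not close as written. The two-part accounting is the point you are missing.

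A second, smaller issue: your explanation of the $+1$ gap between the hard and soft bounds is on the wrong side of the argument. The LP contradiction above never uses $y_u\le 1$, so the same certificate works for both variants. The $5$ versus $6$ comes entirely from the combinatorics of the KS algorithm itself (the soft version saves one hop in its shifting step), not from which LP constraint is available.
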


\vspace*{-0.3cm}
\subsection{Our techniques}

We assume that $G$ is connected and that LP1 has a feasible solution 
for the graph $G$.
We call two functions $x:V\times V \ra {\mathbb R}_+ \cup \{0\}$ and $y:V \ra {\mathbb R}_+ \cup \{0\}$
an {\em assignment} even if $(x,y)$ is potentially infeasible for LP1.
In other words initially we have a feasible fractional solution, 
in the end we will obtain a feasible integral solution,
although during the execution of our rounding algorithm an assignment 
$(x,y)$ is not required to be feasible.
Furthermore without loss of generality we assume that for a vertex $v$ with $L(v)=0$ we have $y_v=0$.

We need to show that there exists a constant $\delta$ such that if for a connected component
LP1 has a feasible solution, then one can (in polynomial time) find an
integral feasible solution for $G^\delta$.

\begin{definition}[{\bf $\delta$-feasible solution}]
An assignment is called {\em $\delta$-feasible} if it is feasible for the graph $G^\delta$.
\end{definition}

Note that the only difference between LP1's for the graphs $G$ and $G^\delta$ is constraint (\ref{con6}).


\begin{definition}[{\bf radius$_{(x,y)}$}]
For a $\delta$-feasible solution $(x,y)$ to LP1 we define a function $\radius_{(x,y)} : V \ra \{0, \ldots, \delta\}$
which for a vertex $u$ assigns the greatest integer $i$ such that there exists a vertex $v$ with $\dist_G(v,u) = i$ and $x_{u,v} > 0$
(if no such $i$ exists then $\radius_{(x,y)}(u)=0$).
\end{definition}

We give a brief overview of the rounding algorithm described in subsequent
subsection of Section~\ref{sec:rounding}.
Initially we start with a $1$-feasible (fractional) solution $(x,y)$ to LP1
and our goal is to make it integral. 
We perform several steps where in each step we get more structure
on the $\delta$-feasible solution but at the same time the value
of $\delta$ will increase. 

In Sections~\ref{sec:path-structure}-\ref{sec:rounding-flow}
in four non-trivial steps we round the $y$-values of a feasible solution.
First, in Section~\ref{sec:path-structure} 
we define a {\em caterpillar structure} which is a key structure in the
rounding process.
We show that in polynomial time we can find a $5$-feasible solution 
together with a caterpillar structure $(P,P')$ such that all vertices
outside of the caterpillar structure have integral $y$-values.
In Section~\ref{sec:chain-shifting} we define the {\em $y$-flow} and {\em chain shifting}
operations which allow for transferring $y$-values between distant vertices
using intermediate vertices on the caterpillar structure.
Unfortunately, because the capacities are non-uniform and hard,
to find a rounding flow for a caterpillar structure we need more
assumptions. 
To overcome this difficulty in the most challenging part
of the rounding process, that is in Section~\ref{sec:separable},
we define a {\em safe} caterpillar structure and show 
how to split a given caterpillar structure into a set of safe caterpillar structures
(at the cost of increasing radius of the $\delta$-feasible solution).
In Section~\ref{sec:rounding-flow} we design a rounding procedure
for a safe caterpillar structure, obtaining a $c$-feasible solution
with integral $y$-values, for some constant $c$. 
We would like to note, that for uniform capacities every caterpillar structure
is safe, therefore for non-uniform capacities we have to design much more
involved tools comparing to the previously known uniform capacities case.

Finally in Section~\ref{sec:rounding-x} we show, that using standard techniques,
when we have integral $y$-values then rounding $x$-values is simple, 
obtaining a constant factor approximation algorithm.




\section{LP rounding for hard-capacities}
\label{sec:rounding}

\vspace*{-0.3cm}
\subsection{Group shifting and caterpillar structure}
\label{sec:path-structure}

In the first phase of our procedure we obtain a path-like structure
containing all vertices with non-integral $y$-values.
We first define the notion of shifting values
between variables of LP1 relaxation.

\begin{definition}[\bf shifting]
For an assignment $(x,y)$ for the $LP$, two distinct vertices $a,b \in V$ 
and a positive real $\alpha \le \min(y_a, 1-y_b)$ such that $L(a) \le L(b)$
by {\em shifting} $\alpha$ from $a$ to $b$
we consider the following operation:
\begin{enumerate}
  \item Let $\eps = \frac{\alpha}{y_a}$;
for each $v \in V$ let $\Delta_v=\eps x_{a,v}$, 
decrease $x_{a,v}$ by $\Delta_v$ and increase $x_{b,v}$ by $\Delta_v$.
\item Increase $y_b$ by $\alpha$, and decrease $y_a$ by $\alpha$.
\end{enumerate}
\end{definition}

\begin{lemma}
\label{lem:shift}
Let $(x,y)$ be a $\delta$-feasible solution to $LP$.
Let $(x',y')$ be a result of shifting $\alpha$ from $a$ to $b$, for some $\alpha,a,b$ such that $L(a) \le L(b)$, $0 < \alpha \le \min(y_a,1-y_b)$.
Then $(x',y')$ is a $(\delta+\dist_G(a,b))$-feasible solution 
and for each vertex $v \not= b$ we have $\radius_{(x',y')}(v) \le \radius_{(x,y)}(v)$
whereas $\radius_{(x',y')}(b) \le \max(\radius_{(x,y)}(a)+\dist_G(a,b),\radius_{(x,y)}(b))$.
\end{lemma}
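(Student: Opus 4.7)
The proof will be a routine verification that each of the seven LP constraints is preserved under the shifting operation, plus a triangle-inequality argument for the radius bound. The key observation, which I would state up front, is that shifting rescales $x_{a,v}$ uniformly by a factor $(1-\eps)$ and adds the same absolute amount $\eps x_{a,v}$ to $x_{b,v}$; in particular $y'_a = y_a(1-\eps)$, so the ratios $x'_{a,v}/y'_a$ equal $x_{a,v}/y_a$.

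I would first dispose of the "easy" constraints. Constraint~(\ref{con1}) and (\ref{con4}) are preserved because the total change in $y$ (resp.\ in $x_{\cdot,v}$ for each $v$) is zero. Constraint~(\ref{con5}) holds because $0 \le y_a - \alpha$ and $y_b + \alpha \le 1$ by the hypothesis $\alpha \le \min(y_a, 1 - y_b)$. Constraint~(\ref{con7}) and constraint~(\ref{con2}) at $u = a$ follow from the rescaling; at $u = b$ it suffices to add the valid inequalities $x_{b,v} \le y_b$ and $\eps x_{a,v} \le \eps y_a = \alpha$. For constraint~(\ref{con3}) at $u=a$, summing gives $\sum_v x'_{a,v} = (1-\eps)\sum_v x_{a,v} \le (1-\eps)L(a)y_a = L(a)y'_a$, while at $u=b$ we use the hypothesis $L(a) \le L(b)$: $\sum_v x'_{b,v} \le L(b)y_b + \eps L(a)y_a \le L(b)(y_b+\alpha) = L(b)y'_b$. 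This is the one place where the assumption $L(a) \le L(b)$ is needed.

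The interesting constraint is~(\ref{con6}) in the graph $G^{\delta+\dist_G(a,b)}$. For $x'_{u,v}$ with $u \notin \{a,b\}$, nothing changes; for $u=a$, $x'_{a,v}>0$ implies $x_{a,v}>0$ so $\dist_G(a,v) \le \delta$. For $u=b$, if $x'_{b,v}>0$ then either $x_{b,v}>0$ (giving $\dist_G(b,v) \le \delta$) or $x_{a,v}>0$, in which case the triangle inequality yields $\dist_G(b,v) \le \dist_G(a,v) + \dist_G(a,b) \le \delta + \dist_G(a,b)$. This handles $(\delta+\dist_G(a,b))$-feasibility.

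For the radius claim, for $v \notin \{a,b\}$ the values $x'_{v,\cdot}$ are literally unchanged, so $\radius_{(x',y')}(v) = \radius_{(x,y)}(v)$. For $v=a$, the support of $x'_{a,\cdot}$ is contained in that of $x_{a,\cdot}$, hence $\radius_{(x',y')}(a) \le \radius_{(x,y)}(a)$. For $v=b$, any $u$ with $x'_{b,u}>0$ has either $x_{b,u}>0$, in which case $\dist_G(b,u) \le \radius_{(x,y)}(b)$, or $x_{a,u}>0$, in which case $\dist_G(b,u) \le \radius_{(x,y)}(a) + \dist_G(a,b)$ by the triangle inequality, yielding the stated maximum. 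The whole proof is mechanical; the only two substantive uses of the hypotheses are $L(a) \le L(b)$ for the capacity constraint at $b$ and the triangle inequality in $G$ for both constraint~(\ref{con6}) and the radius bound at $b$.
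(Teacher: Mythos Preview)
Your proof is correct and follows essentially the same approach as the paper's: a constraint-by-constraint verification, using $L(a)\le L(b)$ exactly once for constraint~(\ref{con3}) at $b$, and the triangle inequality for constraint~(\ref{con6}) and the radius bound at $b$. The only cosmetic difference is that you make explicit the rescaling identity $y'_a=(1-\eps)y_a$ and the triangle inequality, which the paper leaves implicit.
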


\begin{proof}
First we prove that $(x',y')$ is a $(\delta+\dist_G(a,b))$-feasible solution.
Since the sum of all $y$-values in $(x,y)$ is equal to the sum of all $y$-values in $(x',y')$, constraint $(\ref{con1})$ of LP1 is satisfied.
To prove $(\ref{con2})$ of LP1, it is enough to consider the variables $x'_{a,v},x'_{b,v}$ for each vertex $v$,
that is:

\begin{align*}
x'_{b,v} & = x_{b,v}+\eps x_{a,v} \le y_b + \eps y_a = y_b + \alpha = y_b' \\
x'_{a,v} & = x_{a,v}-\eps x_{a,v} \le y_a(1-\eps) = y_a' \,.
\end{align*}

For $(\ref{con3})$ of LP1, we only verify $v=a$ and $v=b$ since for other vertices the sum did not change.
\begin{align*}
\sum_{v \in V} x'_{b,v} & = \sum_{v \in V} (x_{b,v}+\eps x_{a,v}) = (\sum_{v \in V}x_{b,v})+\eps(\sum_{v \in V} x_{a,v}) \le L(b)y_b+\eps L(a)y_a \le L(b)y_b+\eps L(b)y_a \\
   & = L(b)(y_b + \eps y_a) = L(b)y_b' \\
\sum_{v \in V} x'_{a,v} & = \sum_{v \in V} (x_{a,v}-\eps x_{a,v}) = (1-\eps) \sum_{v \in V}x_{a,v} \le (1-\eps) L(a)y_a = L(a)y_a'
\end{align*}
For each vertex $u$ the sum $\sum_{v \in V} x_{v,u}$ is equal to $\sum_{v \in V} x_{v,u}'$ hence constraint $(\ref{con4})$ is satisfied.
Constraints (\ref{con5}), (\ref{con7}) may be checked directly, since $\alpha \le \min(y_a, 1-y_b)$.

Since for each vertex $v \not= b$ a variable $x_{v,u}$ can only be decreased (when $v=a$),
therefore $\radius_{(x',y')}(v) \le \radius_{(x,y)}(v)$.
For $v=b$ the new radius may increase but it will not exceed $\radius_{(x,y)}(a)+\dist_G(a,b)$
since if $x'_{b,u} > 0$ then either $x_{b,u}>0$ or $x_{a,u} > 0$.
Consequently constraint (\ref{con6}) is satisfied for $G^{\delta'}$ where $\delta' = \delta+\dist_G(a,b)$ and the lemma follows.
\end{proof}

\begin{definition}[\bf group shifting]
For a $\delta$-feasible solution $(x,y)$ and a set $V_0 \subseteq V$ by a {\em group shifting}
we denote the following operation. Assume $V_0 = \{v_1,\ldots,v_{\ell}\}$, where $L(v_i) \le L(v_{i+1})$ for $1 \le i < \ell$.
As long as there are at least two vertices in $V_0$ with fractional $y$-values,
let $a$ be the smallest, and $b$ the greatest integer such that $v_a, v_b \in V_0$
are vertices with fractional $y$-values.
Shift $\min(y_a,1-y_b)$ from $a$ to $b$.
\end{definition}

\begin{lemma}
\label{lem:group-shift}
Let $(x,y)$ be a $\delta$-feasible solution, $V_0$ be a subset of $V$
and $d=\max_{a,b \in V_0} \dist_G(a,b)$.
After group shifting on $V_0$ we obtain a $(\delta+d)$-feasible solution $(x',y')$,
where there is at most one vertex in $V_0$ with fractional $y$-value
and moreover for $v \in V\setminus V_0$ we have $\radius_{(x',y')}(v) \le \radius_{(x,y)}(v)$.
\end{lemma}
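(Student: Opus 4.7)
The plan is to iterate Lemma~\ref{lem:shift} through the course of the group shifting procedure while keeping careful track of which role, source or target, each vertex of $V_0$ can play. I begin by checking the procedure is well-defined and terminates: at every iteration both $y_a$ and $1-y_b$ are strictly positive (both $v_a$ and $v_b$ are fractional), so the shift $\min(y_a,1-y_b)$ is positive and forces at least one of $y_a,y_b$ into $\{0,1\}$. Thus the number of fractional vertices in $V_0$ drops by at least one per step, which both guarantees termination and gives conclusion (ii), the ``at most one fractional vertex'' part. The ordering $L(v_1)\le\dots\le L(v_\ell)$ ensures $L(v_a)\le L(v_b)$ at every call, so Lemma~\ref{lem:shift} is applicable and each intermediate assignment is a valid solution of the LP relaxation (with only the distance constraint~(\ref{con6}) to be tracked separately).

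The core structural observation is that throughout the whole execution \emph{every vertex of $V_0$ plays at most one role} — it is either always a source ($a$) or always a target ($b$). Indeed, suppose $v_i$ is used as a source at some step, so at that step there is some fractional vertex $v_j$ with $j>i$. For $v_i$ later to act as a target, it would have to be the largest-indexed fractional vertex, i.e. every $v_j$ with $j>i$ would have an integer $y$-value. But then $v_i$ would also be the smallest-indexed fractional vertex (else we would already have stopped before $v_i$ became a source again\footnote{More directly: if $v_i$ is the largest fractional vertex then it is the only fractional vertex in $V_0$, because the loop stops otherwise.}), so $v_i$ would be the unique fractional vertex and the loop would have halted. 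Symmetrically, a target never becomes a source.

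I then read off the radius behaviour. Each shift touches only the $x$-rows $x_{a,\cdot}$ and $x_{b,\cdot}$, so for any $v\in V\setminus V_0$ the row $x_{v,\cdot}$ is untouched and $\radius(v)$ is unchanged — this is claim~(iii). For a source vertex, its $x$-row only loses mass, so by Lemma~\ref{lem:shift} its radius only weakly decreases; in particular it remains $\le\delta$ for the entire execution. For a target vertex $v_b$, every incoming shift has the form ``from $v_a$ to $v_b$'' with $v_a$ a source, and the radius update from Lemma~\ref{lem:shift} gives
\[
\radius_{\mathrm{new}}(v_b)\le\max\bigl(\radius(v_a)+\dist_G(v_a,v_b),\radius_{\mathrm{old}}(v_b)\bigr).
\]
By the single-role observation $\radius(v_a)\le\delta$ every time a source participates, and $\dist_G(v_a,v_b)\le d$, so a simple induction on the number of shifts into $v_b$ yields $\radius(v_b)\le\delta+d$ throughout. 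Combined with the preservation of constraints~(\ref{con1})–(\ref{con5}) and~(\ref{con7}) by every single shift (Lemma~\ref{lem:shift}), this shows the final assignment is $(\delta+d)$-feasible, giving~(i).

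The main obstacle is avoiding a naive $\delta+(\text{number of shifts})\cdot d$ blow-up of the radius: the single-role observation is exactly what rules out chains of the form ``shift into $v_b$, then shift from $v_b$ into $v_{b'}$'' that would have compounded the additive error. Everything else reduces to bookkeeping on top of Lemma~\ref{lem:shift}.
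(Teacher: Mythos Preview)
Your proof is correct and follows the same approach as the paper's (very terse) argument: the key observation is exactly that a vertex of $V_0$ is never both a source and a target, so one application of Lemma~\ref{lem:shift} per step suffices without compounding the radius increase. One minor remark: your justification of the single-role claim is a bit muddled (the footnote ``if $v_i$ is the largest fractional vertex then it is the only fractional vertex'' is not true in general---the missing sentence is that when $v_i$ was a source all smaller-indexed vertices were already integral and, since shifts touch only fractional vertices, they remain integral thereafter), but the intended argument is the right one.
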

To make a graph Hamiltonian we use the following lemma
known from 1960~\cite{karaganis, sekanina}. 

\begin{lemma}
\label{lem:ham}
For any undirected connected graph $G$ there always exists a Hamiltonian path in $G^3$
and one can find it in polynomial time.
\end{lemma}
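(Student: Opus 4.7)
I would prove this via the well-known reduction to the tree case. The plan is to pick any spanning tree $T$ of $G$ (which exists since $G$ is connected and can be found in polynomial time), prove that $T^3$ admits a Hamiltonian path, and then observe that $T^3$ is a subgraph of $G^3$, since every edge $uv$ of $T^3$ satisfies $\dist_G(u,v) \le \dist_T(u,v) \le 3$. Thus it suffices to prove the tree case constructively.

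The main step is the following strengthened claim, proved by induction on $|V(T)|$:

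\begin{quote}
\emph{For every tree $T$ with at least two vertices and every edge $uv \in E(T)$, the graph $T^3$ contains a Hamiltonian path whose two endpoints are $u$ and $v$.}
\end{quote}

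The base case $|V(T)|=2$ is immediate. For the inductive step, I would delete the edge $uv$, obtaining two subtrees $T_u \ni u$ and $T_v \ni v$. If both subtrees have at least two vertices, pick an arbitrary neighbor $u'$ of $u$ in $T_u$ and a neighbor $v'$ of $v$ in $T_v$. By induction, $T_u^3$ has a Hamiltonian path $P_u$ from $u$ to $u'$ and $T_v^3$ has a Hamiltonian path $P_v$ from $v'$ to $v$. Concatenating gives the path $u, \ldots, u', v', \ldots, v$ in $T^3$, where the bridging edge $u'v'$ is valid because $\dist_T(u', v') \le \dist_T(u',u) + \dist_T(u,v) + \dist_T(v,v') = 3$. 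If one subtree is a single vertex, say $T_u = \{u\}$, then $|V(T_v)| \ge 2$, so we can pick a neighbor $w$ of $v$ in $T_v$ and, by induction, take a Hamiltonian path of $T_v^3$ from $w$ to $v$; prepending $u$ works because $\dist_T(u,w)=2\le 3$.

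The key obstacle to foresee is the choice of inductive invariant: simply asserting the existence of \emph{some} Hamiltonian path in $T^3$ does not compose, because joining two paths requires controlling their endpoints so that the bridging distance stays within $3$. Fixing both endpoints to be the two ends of the chosen edge $uv$ is exactly what makes the distance bookkeeping $1+1+1=3$ work out. Finally, polynomial running time is immediate: the recursion does $O(1)$ work per internal node of $T$ (selecting $u'$, $v'$, performing the concatenation), yielding an overall polynomial algorithm for producing the Hamiltonian path in $G^3$.
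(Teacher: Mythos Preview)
Your proof is correct and follows the classical Sekanina--Karaganis argument that the paper cites; the paper itself does not include a proof in the compiled text (it only gives the reference), though its suppressed proof sketch uses the same spanning-tree reduction with a slightly different but equivalent invariant (rooted recursion where the path starts at the root and ends at one of its children, rather than your unrooted edge-based formulation). Both invariants serve the same purpose of controlling endpoints so that the bridging distance is at most $1+1+1=3$, so the approaches are essentially the same.
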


We define a caterpillar structure which is one of the key ingredients 
of our rounding process.
Intuitively we want to define an auxiliary path-like tree, 
where adjacent vertices are close in the original graph $G$,
vertices with fractional $y$-values are leaves of the tree,
and all non-leaf vertices have $y$-values equal to $1$.
\begin{definition}[\bf caterpillar structure]
\label{def:path-structure}
By a $\delta$-caterpillar structure for an assignment $(x,y)$ we denote a sequence of distinct vertices $P=(v_1,\ldots,v_p)$
together with a sequence $P'=(v_0',\ldots,v_{p+1}')$ where:
\begin{enumerate}
  \item for each $i=1,\ldots,p$ we have $y_{v_i}=1$,
  \item for each $i=1,\ldots,p-1$ we have $\dist_G(v_i,v_{i+1}) \le \delta$,
  \item for each $i=0,\ldots,p+1$ either $v_i'=\nil$ or $v_i' \in V \setminus \{v_j : j=1,\ldots,p\}$,
  \item\label{def:path:point4} for each $i=1,\ldots,p$ if $v_i'\not=\nil$ then $L(v_i) \ge L(v_i')$, $0 < y_{v_i'} < 1$, $\dist_G(v_i,v_i') \le \delta$,
  \item if $v_0' \not= \nil$ then $\dist_G(v_0',v_1) \le \delta$, $0 < y_{v_0'} < 1$,
  \item if $v_{p+1}' \not= \nil$ then $\dist_G(v_{p+1}',v_p) \le \delta$, $0 < y_{v_{p+1}'} < 1$,
  \item for each $0 \le i < j \le p+1$ if $v_i'\not=\nil$ and $v_j'\not=\nil$ then $v_i'\not=v_j'$,
  \item\label{def:path:point8} $\sum_{v \in V(P')} y_v$ is integral.
\end{enumerate}
\end{definition}

We sometimes omit $\delta$ and simply write ``caterpillar structure'' 
when the value of $\delta$ is irrelevant.

\begin{figure}[h]
\begin{center}
\includegraphics{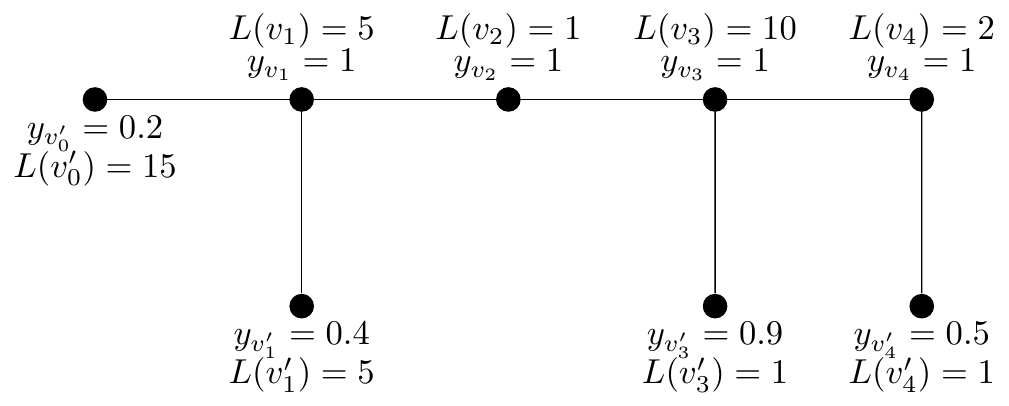}
\caption{Example of a $\delta$-caterpillar structure $((v_1,v_2,v_3,v_4),(v_0',v_1',\nil,v_3',v_4',\nil))$.
  Vertices connected by edges are within distance $\delta$ in the graph $G$.
    Note that the sum of $y$-values over all vertices is integral.
}
\label{fig2}
\end{center}
\end{figure}

\begin{lemma}
\label{lem:path-structure}
For a given feasible LP solution $(x,y)$ we can find a $5$-feasible solution $(x',y')$
together with a $21$-caterpillar structure $(P,P')$ such that each vertex $v \in V \setminus (V(P) \cup V(P'))$ 
has an integral $y$-value in $(x',y')$, and the first and last element of the sequence $P'$ equals $\nil$.
\end{lemma}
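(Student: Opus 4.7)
The plan is to use a Hamiltonian path in $G^3$ as a linear order on $V$, then apply a controlled sequence of group-shifts to consolidate fractional $y$-mass into a caterpillar whose spine consists of $y=1$ vertices and whose legs are the leftover fractional vertices.

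First, I apply Lemma~\ref{lem:ham} to obtain a Hamiltonian path $H = (u_1,\ldots,u_n)$ in $G^3$, so that $\dist_G(u_i,u_{i+1})\le 3$ for every consecutive pair. I then sweep through $H$ and inside each short Ham-consecutive window (essentially a pair, so the $G$-diameter is at most $3$) apply the group-shifting procedure of Lemma~\ref{lem:group-shift} on the fractional vertices of the window. By Lemma~\ref{lem:shift} the radius of any vertex grows only by the $G$-diameter of the window it participates in, so after this sweep the assignment is $4$-feasible and each window contains at most one fractional vertex, with the absorbed ``$y=1$'' mass concentrated on the highest-capacity vertex of the window.

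Second, I extract the caterpillar from the resulting assignment. The spine $P$ is obtained by listing the $y=1$ vertices in Ham-path order; as long as consecutive spine vertices are at most $7$ Ham-edges apart, their $G$-distance is at most $21$, meeting the $21$-caterpillar requirement. The legs $P'$ are the remaining fractional vertices: each one is placed as $v_i'$ at the spine vertex $v_i$ sitting in its window. Since group-shifting is directed by capacity (small $L$ shifts into large $L$), the fractional remnant in any window is the smallest-capacity fractional there and its absorbing $y=1$ neighbor has at least as large a capacity, so the constraint $L(v_i)\ge L(v_i')$ comes for free. To enforce $v_0'=v_{p+1}'=\nil$ I adjust the boundary: if the first or last nonzero vertex of $H$ is a leftover fractional, I apply a single short shift (of distance at most~$1$) that either integerizes it or pushes its mass one window inwards, which is what forces the extra $+1$ in the final $5$-feasibility.

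The main obstacle is maintaining enough spine density under non-uniform capacities. If a stretch of roughly $7$ consecutive Ham-pair windows happens to contain no original $y=1$ vertex and every window-sum is strictly less than $1$, then after the pair-shift sweep the spine will have a $G$-gap exceeding $21$, and any fractional leg in that stretch has no feasible spine to attach to. I resolve each such bad stretch by performing an additional group-shift across a slightly larger window whose total $y$-mass is at least $1$ (such a window always exists because $\sum_u y_u=k$ is globally large and $H$ linearizes $V$), thereby promoting one vertex in the stretch to $y=1$ to serve as a bridging spine. The delicate part of the argument will be choreographing these promotions on top of the capacity-directed pair-shifts so that no single vertex ever accumulates shifts of total $G$-distance more than $4$, preserving both the $5$-feasibility of $(x',y')$ and the leg-capacity condition $L(v_i)\ge L(v_i')$ everywhere along the caterpillar.
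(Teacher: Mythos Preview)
Your approach differs substantially from the paper's, and the spine-density step contains a genuine gap.

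After the pair-window sweep, a window $\{u_i,u_{i+1}\}$ with $y_{u_i}+y_{u_{i+1}}<1$ produces one fractional leg and \emph{no} spine vertex. Nothing in your argument bounds how many such windows occur, so you may end up with far more legs than spine positions, which already violates the caterpillar definition. Your fix---enlarge the window until its $y$-mass reaches~$1$---carries no size bound: consecutive Ham-path vertices are only guaranteed to be within $G$-distance~$3$, so a window of $w$ Ham-edges can have $G$-diameter up to~$3w$, and nothing forces any fixed-length Ham-segment to carry $y$-mass~$\ge 1$. The sentence ``such a window always exists because $\sum_u y_u=k$'' is a global fact and gives no local control; your final paragraph effectively concedes that the required ``choreography'' is not in hand.

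The missing ingredient is the local inequality $\sum_{u\in N_G[v]} y_u \ge 1$, which follows from constraints~(\ref{con2}),~(\ref{con4}),~(\ref{con6}) and holds for every~$v$. The paper exploits this \emph{before} invoking any Hamiltonian path: it greedily builds a maximal independent set $S$ in $G^2$ (highest capacity first), so the closed neighbourhoods $\{N_G[v]:v\in S\}$ are pairwise disjoint and each is guaranteed $y$-mass~$\ge 1$. Shifting inside each such neighbourhood (diameter~$\le 2$) then produces one $y=1$ vertex $f(v)$ of maximum capacity per cluster---these form the spine $S'$, with exactly as many spine vertices as clusters. Maximality of $S$ makes $G^5[S]$, hence $G^7[S']$, connected, and only now is Lemma~\ref{lem:ham} applied, to $S'$ alone, giving the $21$ in the caterpillar bound. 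A final group-shift inside each cluster $\Phi^{-1}(f(v))\setminus S'$ (diameter~$\le 4$) leaves at most one fractional leg per spine vertex---automatically of smaller capacity than $f(v)$ by the greedy choice---and yields the $5$-feasibility. In short: cluster via the LP inequality first, then apply the Ham-path lemma to the cluster representatives; not Ham-path on all of $V$ first and then hope the spine materialises along it.
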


\begin{proof}
Consider the following algorithm for constructing sets $S$, $S'$ and a function $\Phi: V \ra S'$.
The set $S$ will be an inclusionwise maximal independent set in $G^2$ and moreover we ensure that $L(\Phi(v)) \ge L(v)$, for any $v \in V$.

\begin{enumerate}
  \item Set $V_0 := V$ and $S := S' := \emptyset$.
  \item As long as $V_0 \not= \emptyset$ let $v$ be a highest capacity vertex in $V_0$.
    \begin{itemize}
      \item Let $f(v)$ be a highest capacity vertex in $N_G[v]$ (potentially $f(v) \not\in V_0$).
      \item Add $f(v)$ to $S'$ and for each $u \in N_G[N_G[v]] \cap V_0$ set $\Phi(u)=f(v)$.
      \item Add $v$ to $S$ and set $V_0 := V_0 \setminus N_G[N_G[v]]$.
    \end{itemize}
\end{enumerate}

Observe that each time we remove from the set $V_0$ all vertices that are within distance two from $v$,
hence the set $S$ is an inclusion maximal independent set in $G^2$.
For this reason vertices in the set $S$ have disjoint neighborhoods
and moreover by constraints (\ref{con4}) and (\ref{con2}) of the LP1 we infer that for each $v \in V$ we have:

\begin{equation}
\label{eq1}
\sum_{u \in N[v]} y_u \ge \sum_{u \in N[v]} x_{u,v} = 1\,
\end{equation}
We perform shifting operations to make sure all vertices in the set $S'$ have $y$-value equal to one.
Consider a vertex $v \in S$ and the corresponding vertex $f(v)$ chosen by the algorithm.
As long as $y_{f(v)} < 1$ take any $u \in N[v], u \not= f(v)$ such that $y_u > 0$
and shift $\min(y_u,1-y_{f(v)})$ from $u$ to $f(v)$.
Note that $L(u) \le L(f(v))$ by the definition of $f(v)$ and for this reason shifting is possible.
By Lemma~\ref{lem:shift} after all the shifting operations we have a $3$-feasible solution $(x,y)$,
since before a shift from $u$ to $f(v)$ we have $\radius_{(x,y)}(u) \le 1$, $\radius_{(x,y)}(f(v)) \le 3$ and $\dist_G(u,f(v))\le 2$. 
Moreover by Inequality (\ref{eq1}) we infer,
that all the vertices in the set $S'$ have $y$-value equal to one,
     since otherwise a shifting operation from some $u \in N[v]$ to $f(v)$ would be possible.

Observe that by the maximality of the independent set $S$ in $G^2$ the graph $G^5[S]$ is connected,
otherwise we could add a vertex to $S$ still obtaining an independent set in $G^2$.
Moreover for any two adjacent vertices $u,v \in S$ in $G^5[S]$,
the vertices $f(u),f(v)$ are adjacent in $G^7[S']$.
By the connectivity of $G^5[S]$, the graph $G^7[S']$ is also connected.
By Lemma~\ref{lem:ham} we can in polynomial time order the vertices of $S'$ to obtain a Hamiltonian path $P$ in $G^{21}[S']$.

Currently for each vertex $v$ from the set $V \setminus S'$ we have $\radius_{(x,y)}(v) \le 1$.
For each $v\in S$ we use group shifting on the set $\Phi^{-1}(f(v)) \setminus S'$.
Since
\begin{align*}
\max_{a,b \in \Phi^{-1}(f(v)) \setminus S'}\dist_G(a,b) \le \max_{a,b \in \Phi^{-1}(f(v)) \setminus S'} \dist_G(a,v)+\dist_G(v,b) \le 4\,,
\end{align*}
by Lemma~\ref{lem:group-shift} we obtain a $5$-feasible solution $(x,y)$ 
such that all vertices in the set $S'$ have $y$-value equal to one 
and moreover for each $f(v) \in S'$ the set $\Phi^{-1}(f(v)) \setminus S'$ 
contains at most one vertex with fractional $y$-value.
Let us assume that the already constructed path $P$ is of the form $P=(v_1,\ldots,v_p)$.
We construct a sequence $P'=(\nil,v_1',\ldots,v_p',\nil)$ where as $v_i'$ we take the only vertex from $\Phi^{-1}(v_i) \setminus S'$
that has fractional $y$-value, or we set $v_i':=\nil$ if $\Phi^{-1}(v_i) \setminus S'$ has no vertices with fractional $y$-value.
Note that since the way we select vertices to the sets $S,S'$ is capacity driven (recall as $v$ we select
the highest capacity vertex in $V_0$ and as $f(v)$ we select a highest capacity vertex in $N[v]$),
for each vertex $u \in \Phi^{-1}(v_i)$ we have $L(u) \le L(v_i)$.
In this way we have constructed a $5$-feasible solution $(x,y)$ together with a desired $21$-caterpillar structure $(P,P')$.
\end{proof}

As the reader might notice in the above proof we always construct a caterpillar structure with $v_0'=v_{p+1}'=\nil$.
The reason why the definition of a caterpillar structure allows for $v_0'$ and $v_{p+1}'$ have non-$\nil$ values
is that in Section~\ref{sec:separable} we will split a caterpillar structure into two smaller pieces
and in order to have those pieces satisfy Definition~\ref{def:path-structure} we need $v_0'$ and $v_{p+1}'$.

\vspace*{-0.3cm}
\subsection{$y$-flow and chain shifting}
\label{sec:chain-shifting}

In the previous section we defined a group shifting operation.
Unfortunately we can only perform such an operation if vertices are close.
In this section we define notions of {\em$y$-flow} and {\em chain shifting}
which allow us to transfer $y$-value between distant vertices.
We will use those tools in Sections~\ref{sec:separable} and \ref{sec:rounding-flow}.

\begin{definition}[\bf $y$-flow]
For a given assignment $(x,y)$ let $S \subseteq V$ and $T \subseteq V$ be two disjoint sets
and let $\cF$ be a set containing sequences of the form $(\alpha,v_1,\ldots,v_t)$ representing paths, 
where $\alpha$ is a positive real, each $v_i \in V$ is a vertex (for $i=1,\ldots,t$), $v_1 \in S$, $v_t \in T$, $L(v_1) \le L(v_t)$ and for $i=2,\ldots,t-1$ we have $v_i \not\in S \cup T, y_{v_i} = 1, L(v_i) \ge L(v_1)$.
We call $(\alpha,v_1,\ldots,v_t)$ a {\em path} transferring $\alpha$ from $v_1$ to $v_t$ through $v_2,\ldots,v_{t-1}$.
We denote $v_2,\ldots,v_{t-1}$ as {\em internal} vertices of the path  $(\alpha,v_1,\ldots,v_t)$.

The set $\cF$ is a {\em$y$-flow} from $S$ to $T$ iff:
\begin{itemize}
  \item for each $v \in S$ the sum of values transferred from $v$ in $\cF$ is at most $y_v$,
  \item for each $v \in T$ the sum of values transferred to $v$ in $\cF$ is at most $1-y_v$,
  \item for each $v \in V \setminus (S \cup T)$ the sum of values transferred through $v$ in $\cF$ is at most $1$.
\end{itemize}
\end{definition}

For a given $y$-flow $\cF$ from $S$ to $T$ we define $G_{\cF}=(V,A)$ 
as an auxiliary directed graph with the same vertex set as $G$,
where an arc $(u,v)$ belongs to $A$ iff there is a path
in $\cF$ containing $u$ and $v$ as consecutive vertices in exactly this order.
We call the $y$-flow $\cF$ {\em acyclic} iff the directed flow graph $G_{\cF}$ is acyclic.
Furthermore we define a function $f_{\cF} : A \ra (0,1]$, which for an arc $(u,v)$
assigns the sum of $\alpha$ values in all the paths in $\cF$ that contain $u$ 
and $v$ as consecutive vertices.
Moreover by $fl_{\cF} : A \ra \mathbb{R}_+$ we denote a function, which
for an arc $(u,v)$ assigns the sum of terms $L(s)\alpha$
over all paths from ${\cF}$ that start with $\alpha$ and $s \in S$
and contain $u,v$ as consecutive elements.
Intuitively by $f_{\cF}((u,v))$ we denote the fractional number of centers that are transferred
from $u$ to $v$, whereas by $fl_{\cF}((u,v))$ we denote the fractional number of vertices (clients)
that were previously covered by $u$ and will be covered by $v$ after the shifting operation (see Fig.~\ref{fig3}).

\begin{figure}[h]
\begin{center}
\includegraphics{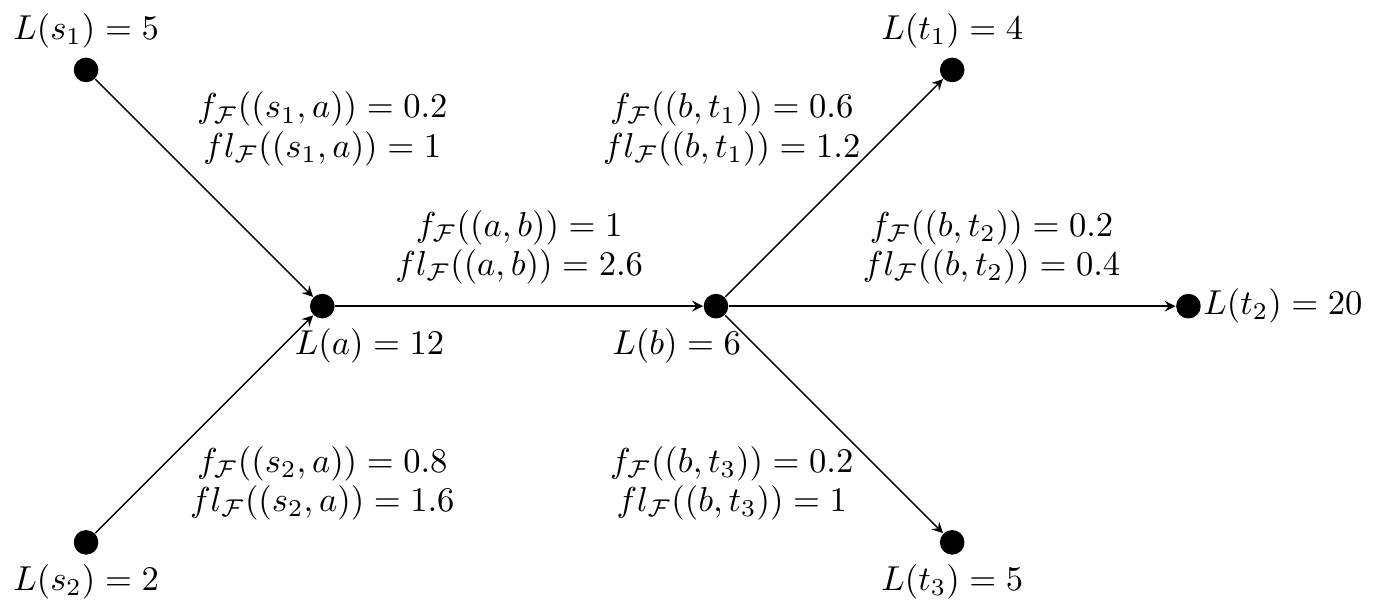}
\caption{The graph $G_{\cF}$ for an acyclic $y$-flow $\cF=\{(0.2,s_1,a,b,t_3),(0.6,s_2,a,b,t_1),(0.2,s_2,a,b,t_2)\}$ from $S=\{s_1,s_2\}$ to $T=\{t_1,t_2,t_3\}$, 
where $y_{s_1}=0.4$, $y_{s_2}=y_a=y_b=1$, $y_{t_1}=0$, $y_{t_2}=0.8$, $y_{t_3}=0.1$.
Note that even though each path in $\cF$ has starting point capacity not greater than its ending point capacity
the vertex $t_1 \in T$ is reachable from $s_1 \in S$ in $G_{\cF}$ despite the fact that $L(s_1) > L(t_1)$.
}
\label{fig3}
\end{center}
\end{figure}

Now we show that if we are given an acyclic $y$-flow $\cF$
then we can transfer $y$-values using a {\em chain shifting} method without increasing the $\radius$ of vertices by too much.
Formal definitions and lemmas follow.


\begin{definition}[\bf chain shifting]
Let $\cF$ be an acyclic $y$-flow from $S$ to $T$ and let $(x,y)$ be a $\delta$-feasible solution.
Let $G_{\cF}=(V,A)$ be the auxiliary acyclic flow graph. 

By {\em chain shifting} we denote the following operation:
\begin{itemize}
  \item For each $u,v \in V$, set $\Delta_{u,v}=0$.
  \item For each arc $(u,a) \in A$ in  reverse topological ordering of $G_\cF$:
    \begin{itemize}
      \item For each $v \in V$, let $\Delta=x_{u,v}fl_{\cF}(u,a)/(L(u)y_u)$,
      set $\Delta_{a,v}=\Delta_{a,v}+\Delta$ and $\Delta_{u,v}=\Delta_{u,v}-\Delta$.
    \end{itemize}
  \item For each $u,v \in V$, set $x_{u,v}=x_{u,v}+\Delta_{u,v}$.
  \item For each $s\in S$ decrease $y_s$ by $\sum_{(s,u) \in A} f_{\cF}((s,u))$.
  \item For each $t\in T$ increase $y_t$ by $\sum_{(u,t) \in A} f_{\cF}((u,t))$.
\end{itemize}
\end{definition}

For a directed graph $G=(V,A)$, for a vertex $v$, we denote $N^{in}(v)=\{u : (u,v) \in A\}$ and $N^{out}(v)=\{u : (v,u) \in A\}$.

\begin{lemma}
\label{lem:chain-shifting}
Let $(x',y')$ be the result of the chain shifting operation on a $\delta$-feasible solution $(x,y)$ according to an acyclic $y$-flow $\cF$ from $S$ to $T$.
If $d$ is the greatest distance in $G$ between two adjacent vertices in $G_{\cF}$, then
$(x',y')$ is a $(\delta+d)$-feasible solution, and for each vertex $v$ of indegree zero in $G_{\cF}$, we have $\radius_{(x',y')}(v) \le \radius_{(x,y)}(v)$,
whereas for other vertices $v$, we have 
$$\radius_{(x',y')}(v) \le \max(\radius_{(x,y)}(v), \max_{a\in N^{in}_{G_{\cF}}(v)} (\radius_{(x,y)}(a)+\dist_G(a,v)))\,.$$
Furthermore for each $v \in V \setminus (S \cup T)$ its $y$-value is the same in $(x,y)$ and $(x',y')$.
\end{lemma}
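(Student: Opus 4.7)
The plan is to first derive closed-form expressions for $(x',y')$, verify each LP1 constraint in turn (establishing $(\delta+d)$-feasibility), and finally bound how the radii change using the adjacency structure of $G_\cF$.

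I begin by noting that throughout the loop only the accumulators $\Delta_{u,v}$ are modified; the values $x_{u,v}$ stay fixed at their originals until the single batch update at the end. Thus the outcome does not depend on the order in which arcs are processed, and one can read off
\begin{align*}
x'_{u,v} &= \Bigl(1 - \sum_{a \in N^{out}_{G_\cF}(u)} \tfrac{fl_\cF(u,a)}{L(u)y_u}\Bigr) x_{u,v} \;+\; \sum_{w \in N^{in}_{G_\cF}(u)} \tfrac{fl_\cF(w,u)}{L(w)y_w}\, x_{w,v},\\
y'_u &= y_u - \sum_{a \in N^{out}_{G_\cF}(u)} f_\cF(u,a) + \sum_{w \in N^{in}_{G_\cF}(u)} f_\cF(w,u).
\end{align*}
Because the algorithm touches $y$ only at $S$ (decrease) and $T$ (increase), and sources have no in-arcs while sinks have no out-arcs, we have $y'_u = y_u$ for every $u \notin S \cup T$, which is the last assertion of the lemma.

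Next I would verify the LP1 constraints. Constraint (\ref{con1}) is immediate since each path $(\alpha,v_1,\ldots,v_t)$ contributes $-\alpha$ at $v_1$ and $+\alpha$ at $v_t$, so $\sum_u y'_u = \sum_u y_u$. The bounds $y'_s \ge 0$ and $y'_t \le 1$ in (\ref{con5}) are exactly the two defining inequalities of a $y$-flow. Constraint (\ref{con4}) is preserved arc-by-arc because each pair of updates to $\Delta_{u,v}$ and $\Delta_{a,v}$ is zero-sum. Non-negativity $x'_{u,v} \ge 0$ reduces to $\sum_{a} fl_\cF(u,a) \le L(u)y_u$, which is just the capacity constraint at $u$: at an internal vertex $y_u=1$ and the total load through it is at most $L(u)$, while at a source $s$ the left-hand side equals $L(s)\sum_a f_\cF(s,a) \le L(s)y_s$. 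For constraint (\ref{con3}), summing the formula for $x'_{u,v}$ over $v$ and using $\sum_v x_{u,v} \le L(u)y_u$ together with $fl_\cF(w,u) \le L(w)\, f_\cF(w,u)$ (which holds because $L(v_1) \le L(w)$ along any path) yields $\sum_v x'_{u,v} \le L(u) y'_u$.

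The delicate constraint, and what I expect to be the main obstacle, is the pointwise (\ref{con2}), $x'_{u,v} \le y'_u$. For $u = s \in S$ the update is a uniform scaling by $y'_s/y_s$, so $x'_{s,v}/y'_s = x_{s,v}/y_s \le 1$. For $u = t \in T$ the estimate $x_{w,v} \le y_w$ combined with $fl_\cF(w,t) \le L(w)\, f_\cF(w,t)$ gives $x'_{t,v} \le y_t + \sum_w f_\cF(w,t) = y'_t$. For an internal $u$ a direct per-arc bound breaks down, since consecutive capacities $L(w)$ and $L(u)$ along a path need not be comparable. The trick is to use the already-verified constraint (\ref{con4}) globally: since $y'_u = 1$, the inequalities $x'_{u',v} \ge 0$ for all $u'$ combined with $\sum_{u'} x'_{u',v} = 1$ force $x'_{u,v} \le 1 = y'_u$.

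Finally, for constraint (\ref{con6}) and the radius bounds, I observe: if $v$ has indegree zero in $G_\cF$ then the inward sum in the formula for $x'_{v,\cdot}$ is empty, only the scaling term remains, and the support of $x'_{v,\cdot}$ is contained in that of $x_{v,\cdot}$, so $\radius_{(x',y')}(v) \le \radius_{(x,y)}(v)$. Otherwise, any positive entry $x'_{v,v'}$ must come either from $x_{v,v'} > 0$ (contributing at most $\radius_{(x,y)}(v)$) or from some $x_{w,v'} > 0$ with $w \in N^{in}_{G_\cF}(v)$, in which case $\dist_G(v,v') \le \dist_G(v,w) + \dist_G(w,v') \le d + \radius_{(x,y)}(w)$ by the triangle inequality. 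Taking the maximum yields the stated radius bound, and in particular $\radius_{(x',y')}(v) \le \delta + d$ for every $v$, which is precisely constraint (\ref{con6}) for $G^{\delta + d}$.
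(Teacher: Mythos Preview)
Your proof is correct and follows essentially the same route as the paper's: both write out the closed form for $(x',y')$ (your observation that the arc order is irrelevant is right, since the $x$-values are only updated once at the end) and then check the LP constraints one by one, finishing with the radius bound. Your handling of~(\ref{con2}) at internal vertices---deducing $x'_{u,v}\le 1$ from the already-verified~(\ref{con4}) and~(\ref{con7})---makes explicit a step the paper glosses over; one minor slip is that for~(\ref{con3}) at a sink $t$ the bound you actually need is $fl_\cF(w,t)\le L(t)\,f_\cF(w,t)$ (from $L(v_1)\le L(v_t)$), not the $L(w)$ version you cite, but this follows just as directly from the path condition.
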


\begin{proof}
First observe that for any arc $(u,v)$ in $G_{\cF}$ we have the following inequality
\begin{align}
\label{ineq1}
fl_{\cF}(u,v)/L(u) \le f_{\cF}(u,v),
\end{align}
since on each path in $\cF$ the starting vertex has a smallest capacity. Moreover
by the definition of a $y$-flow, through each vertex $u \in V$ at most $y_u$ units are transferred,
hence by Inequality~(\ref{ineq1}) we have:
\begin{align}
\label{ineq2}
\sum_{a\in N^{out}_{G_{\cF}}(u)} fl_{\cF}(u,a)/L(u) \le \sum_{a\in N^{out}_{G_{\cF}}(u)} f_{\cF}(u,a) \le y_u\,\text{and}
\end{align}

\begin{align}
\label{ineq3}
(1-\sum_{a\in N^{out}_{G_\cF}(u)}\frac{fl_{\cF}(u,a)}{L(u)y_u}) \ge 0\,.
\end{align}

We want to show that $(x',y')$ is a $(\delta+d)$-feasible solution.
Observe that by the definition of chain shifting $\sum_{v\in V}y_v' = \sum_{v\in V} y_v$,
and for each vertex $u \in V$ we have $\sum_{v \in V}x_{v,u}' = \sum_{v\in V} x_{v,u}$.
Therefore constraints (\ref{con1}) and (\ref{con4}) are satisfied.
Now we check constraint (\ref{con7}).
Observe that for any $u,v \in V$ we have:
\begin{align*}
x'_{u,v} &=x_{u,v}+\Delta_{u,v} \ge x_{u,v}-\sum_{a\in N^{out}_{G_{\cF}}(u)}x_{u,v}fl_{\cF}(u,a)/(L(u)y_u) \\
 \text{(by (\ref{ineq2}))}  & \ge x_{u_v}-x_{u,v}y_u/y_u = 0\,.
\end{align*}

Next we check constraint (\ref{con2}).
Consider any $u,v \in V$. Observe that if $u \in V \setminus (S \cup T)$ then either $x_{u,v}'=x_{u,v}\le y_u=y_u'$ (which happens if $u$ 
does not belong to any path in $\cF$) or $y_u' = 1 \ge x_{u,v}'$ (when $u$ is an internal vertex in a path from $\cF$).
On the other hand for $s \in S,v\in V$ we have $x_{s,v} \le y_s$ and hence by Inequality~(\ref{ineq1}):

\begin{align*}
x_{s,v}'& =  x_{s,v}(1-\sum_{a \in N^{out}_{G_\cF}(s)}\frac{fl_{\cF}(s,a)}{L(s)y_s}) \le y_s(1-\sum_{a \in N^{out}_{G_\cF}(s)}\frac{f_{\cF}(s,a)}{y_s}) = y_s'\,.
\end{align*}
Similarly for $t \in T, v\in V$ we have
\begin{align*}
x_{t,v}'& =  x_{t,v}+\sum_{a \in N^{in}_{G_\cF}(t)}\frac{fl_{\cF}(a,t)x_{a,v}}{L(a)y_a} \le y_t+\sum_{a \in N^{in}_{G_\cF}(t)}\frac{fl_{\cF}(a,t)}{L(a)} \le y_t+\sum_{a \in N^{in}_{G_\cF}(t)}f_{\cF}(a,t) = y_t'.
\end{align*}

As for constraint (\ref{con3}) for a vertex $u \in V$ we have:
\begin{align*}
\sum_{v \in V} x_{u,v}' &= \sum_{v \in V}x_{u,v}+\sum_{v\in V}\sum_{a \in N^{in}_{G_\cF}(u)}\frac{fl_{\cF}(a,u)x_{a,v}}{L(a)y_a}-\sum_{v\in V}\sum_{a\in N^{out}_{G_\cF}(u)}\frac{fl_{\cF}(u,a)x_{u,v}}{L(u)y_u} \\
 & = \sum_{v \in V}x_{u,v}(1-\sum_{a\in N^{out}_{G_\cF}(u)}\frac{fl_{\cF}(u,a)}{L(u)y_u})+\sum_{v\in V}\sum_{a \in N^{in}_{G_\cF}(u)}\frac{fl_{\cF}(a,u)x_{a,v}}{L(a)y_a} \\
\text{(by (\ref{con3}) of LP1 and (\ref{ineq3}))}  & \le L(u)y_u(1-\sum_{a\in N^{out}_{G_\cF}(u)}\frac{fl_{\cF}(u,a)}{L(u)y_u})+\sum_{a \in N^{in}_{G_\cF}(u)}\sum_{v\in V}\frac{fl_{\cF}(a,u)x_{a,v}}{L(a)y_a} \\
 & = L(u)y_u-\sum_{a\in N^{out}_{G_\cF}(u)}fl_{\cF}(u,a)+\sum_{a \in N^{in}_{G_\cF}(u)}fl_{\cF}(a,u)/(L(a)y_a)\sum_{v\in V}x_{a,v} \\
\text{(by (\ref{con3}) of LP1)}& \le L(u)y_u-\sum_{a\in N^{out}_{G_\cF}(u)}fl_{\cF}(u,a)+\sum_{a \in N^{in}_{G_\cF}(u)}fl_{\cF}(a,u).
\end{align*}
Since for $u \in V\setminus (S \cup T)$ by the definition of $y$-flow $\sum_{a\in N^{out}_{G_\cF}(u)}fl_{\cF}(u,a)=\sum_{a \in N^{in}_{G_\cF}(u)}fl_{\cF}(a,u)$ 
we infer $\sum_{v \in V} x_{u,v}' \le L(u)y_u = L(u)y_u'$.
For $s \in S$ we have $N^{in}_{G_\cF}(s)=\emptyset$ and $fl_{\cF}(s,a)=f_{\cF}(s,a)L(s)$ hence $\sum_{v \in V} x_{s,v}' \le L(s)(y_s-\sum_{a\in N^{out}_{G_\cF}(s)}f_{\cF}(s,a))=L(s)y_s'$.
Similarly for $t \in T$ we have $N^{out}_{G_\cF}(s)=\emptyset$ and $fl_{\cF}(a,t) \ge f_{\cF}(a,t)L(t)$ hence $\sum_{v \in V} x_{s,v}' \le L(t)(y_s+\sum_{a\in N^{in}_{G_\cF}(t)}f_{\cF}(a,t)) = L(t)y_t'$.
Therefore constraint (\ref{con3}) is satisfied.

Constraint (\ref{con5}) may be checked directly by the definition of $y$-flow.
To check the radius of each vertex in $(x',y')$ (that is to verify $(\delta+d)$-feasibility of constraint (\ref{con6}))
we observe that if $x'_{u,v} > 0$ then either $x_{u,v} > 0$ or $x_{a,v} > 0$ for some $a \in N^{in}_{G_\cF}(v)$.

\end{proof}

\subsection{Separable caterpillar structure}
\label{sec:separable}

If we knew that in the caterpillar structure $(P,P')$ produced by Lemma~\ref{lem:path-structure} the capacity of each vertex in $P$
is not smaller than the capacity of each vertex in $P'$ then we could skip this section.
Unfortunately some vertices of $V(P)$ may have smaller capacity than
some vertices of $V(P')$
and for this reason we define the notion of {\em dangerous}, {\em safe} and 
{\em separable} caterpillar structures.

\begin{definition}[\bf safe, dangerous]
\label{def:gamma}
For a caterpillar structure $\cP=(P=(v_1,\ldots,v_p),(v'_0,\ldots,v'_{p+1}))$, by $\Gamma(\cP) \subseteq V(P)$ we denote 
the set containing all vertices $v_i$,
such that there exist $0 \le i_0 < i < i_1 \le p+1$, such that
$v_{i_0}'\not=\nil$, $L(v'_{i_0}) > L(v_i)$ and $v_{i_1}'\not=\nil$, 
$L(v_{i_1}') > L(v_i)$.

A caterpillar structure $\cP$ is {\em safe} if $\Gamma(\cP)=\emptyset$ and {\em dangerous} otherwise.
\end{definition}

\begin{definition}[\bf separable]
\label{def:separable}
Let $(x,y)$ be a $\delta$-feasible solution and let $\cP=(P=(v_1,\ldots,v_p)$, \linebreak $P'=(v_0',\ldots,v_{p+1}'))$ be a dangerous caterpillar structure.
We call $\cP$ {\em separable} iff there exists $1 \le i \le p$ such that $v_i \in \Gamma(\cP)$, $L(v_i)=\min_{v \in \Gamma(\cP)} L(v)$
and either:
\begin{itemize}
  \item $S_1 \ge \lceil S_2 \rceil - S_2$, where $S_2 = \sum_{j=i+1,\ldots,p+1 \atop v_j' \not= \nil} y_{v_j'}$
  and $S_1$ is the sum of values $(1-y_v)$ where $v \in V, v=v_j', L(v) > L(v_i)$ for some $i < j \le p+1$, or,
  \item $S_1 \ge \lceil S_2 \rceil - S_2$, where $S_2 = \sum_{j=0,\ldots,i-1 \atop v_j' \not= \nil} y_{v_j'}$
  and $S_1$ is the sum of values $(1-y_v)$ where $v \in V, v=v_j', L(v) > L(v_i)$ for some $0 \le j < i$.
\end{itemize}
We call such $i$ as above a {\em witness of separability} of $\cP$.
A caterpillar structure that is not separable is called {\em non-separable}.
\end{definition}

The intuition behind the sums $S_1$, $S_2$ is as follows.
The sum $S_2$ contains all the $y$-values of vertices of $P'$ to the right (or left) of $i$.
Since we want to round all the $y$-values of vertices of $P'$, if we want to split the caterpillar structure $(P,P')$
by removing the edge $v_iv_{i+1}$ (or $v_{i-1}v_i$), we need to send $\lceil S_2 \rceil - S_2$
units of flow to the part that does not contain $v_i$, in order to make the sum of $y$-values over all the leaves in both new caterpillar structures integral.
That is to satisfy (\ref{def:path:point8}) of Definition~\ref{def:path-structure}.
In $S_1$ we sum over all vertices, that can potentially receive flow if we start a path at $v_i$,
and the value $(1-y_v)$ is the $y$-value a vertex $v$ may receive.

An example of a separable caterpillar structure is depicted in Fig.~\ref{fig4}. Observe that a non-separable path
structure may be dangerous as in Fig.~\ref{fig5}.

\begin{lemma}
\label{lem:gamma}
Let $\cP=((v_1,\ldots,v_p),(v_0',\ldots,v_{p+1}'))$ be a dangerous caterpillar structure and let $i$ be an index such
that $v_i \in \Gamma(\cP)$ and $L(v_i)=\min_{v \in \Gamma(\cP)} L(v)$.
Moreover let $j$ be an index such that $v_j'\not=\nil,L(v_j') > L(v_i)$.
Then for any $a \in [\min(i,j), \max(i,j)]$ we have $L(v_a) \ge L(v_i)$.
\end{lemma}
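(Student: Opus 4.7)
The plan is to argue by contradiction: suppose there is some $a \in [\min(i,j), \max(i,j)]$ (with $v_a$ a valid vertex of $P$, i.e.\ $a \in \{1,\ldots,p\}$) such that $L(v_a) < L(v_i)$. I will show that this forces $v_a \in \Gamma(\cP)$, which contradicts the minimality of $L(v_i)$ over $\Gamma(\cP)$. By symmetry I may assume $j > i$; the case $j < i$ is analogous, just with the roles of the left and right witnesses in Definition~\ref{def:gamma} swapped.

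First I would dispose of the two boundary positions. If $a = i$, then $L(v_a) = L(v_i)$ and the conclusion is immediate. If $a = j$ and $j \le p$ (so that $v_j$ is a vertex of $P$), then property~\ref{def:path:point4} of Definition~\ref{def:path-structure} applied to the caterpillar structure at index $j$ gives $L(v_j) \ge L(v_j')$, and combining with the hypothesis $L(v_j') > L(v_i)$ yields $L(v_a) > L(v_i)$. So the nontrivial case is $i < a < j$.

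In this remaining case I would verify both conditions of Definition~\ref{def:gamma} directly at $a$. Since $v_i \in \Gamma(\cP)$, there is some index $i_0 < i$ with $v_{i_0}' \ne \nil$ and $L(v_{i_0}') > L(v_i) > L(v_a)$; because $i_0 < i < a$, this is a valid left witness for $v_a$. For the right witness I would simply reuse the hypothesized index $j$: since $a < j$, $v_j' \ne \nil$, and $L(v_j') > L(v_i) > L(v_a)$, it plays the role of $i_1$ for $v_a$. Consequently $v_a \in \Gamma(\cP)$, but $L(v_a) < L(v_i) = \min_{v\in \Gamma(\cP)} L(v)$, the desired contradiction.

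There is no real obstacle here: the lemma is a purely structural index-chasing argument, and the only bookkeeping point to be careful about is ensuring that the boundary possibilities $a \in \{i, j\}$ are handled separately from the strict interior, since only in the interior can one simultaneously place both Definition~\ref{def:gamma}-witnesses on either side of $a$. Once the min-capacity choice of $v_i$ within $\Gamma(\cP)$ is invoked, the proof closes immediately.
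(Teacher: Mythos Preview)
Your proof is correct and follows essentially the same approach as the paper's own proof: argue by contradiction, dispose of the endpoints $a=i$ and $a=j$ (the latter via property~\ref{def:path:point4} of Definition~\ref{def:path-structure}), and in the interior use the left witness $i_0$ inherited from $v_i\in\Gamma(\cP)$ together with $j$ itself as the right witness to conclude $v_a\in\Gamma(\cP)$, contradicting minimality. Your explicit restriction to $a\in\{1,\ldots,p\}$ is a nice bit of care that the paper leaves implicit.
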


\begin{proof}
Assume the contrary, that is assume that such $a$ exists.
Clearly $a \not= i$ (since then $L(v_a)=L(v_i)$)
and also $a \not= j$ because $L(v_j) \ge L(v_j') > L(v_i)$
(where the first inequality follows by (\ref{def:path:point4}) of Def.~\ref{def:path-structure}). 
W.l.o.g. assume that $i < j$ (the other case is symmetric).
By the definition of $\Gamma$ (Def.~\ref{def:gamma}) we infer that there exists $0 \le i_0 < i$
such that $v_{i_0}' \not= \nil$ and $L(v_{i_0}') > L(v_i) > L(v_a)$.
Consequently $v_a \in \Gamma(\cP)$ which contradicts the fact that 
$v_i$ is a smallest capacity vertex in $\Gamma(\cP)$.
\end{proof}

\begin{lemma}
\label{lem:dangerous-non-sep}
Let $\cP=((v_1,\ldots,v_p),(v_0,\ldots,v_{p+1}))$ be a dangerous 
non-separable caterpillar structure and let $\ell=\min_{v\in \Gamma(\cP)} L(v)$.
For $I=\{i\ :\ 0 \le i \le p+1 \wedge v_i'\not=\nil \wedge L(v_i') > \ell \}$ we have $\sum_{i \in I} (1-y_{v_i'}) < 2$.
\end{lemma}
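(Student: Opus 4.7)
The plan is to pick a single witness vertex in $\Gamma(\cP)$ of minimum capacity, split $I$ into its left and right parts around that witness, and read the two bounds off the non-separability condition directly.

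First I would fix any index $i \in \{1,\ldots,p\}$ with $v_i \in \Gamma(\cP)$ and $L(v_i)=\ell$; such an index exists because $\cP$ is dangerous, so $\Gamma(\cP)$ is nonempty, and $\ell$ is attained. I would then split $I$ as the disjoint union $I_{<}\cup I_{>}$, where $I_{<}=\{j\in I\ :\ j<i\}$ and $I_{>}=\{j\in I\ :\ j>i\}$. The index $i$ itself does not belong to $I$: by point~(\ref{def:path:point4}) of Definition~\ref{def:path-structure}, if $v_i'\neq\nil$ then $L(v_i')\le L(v_i)=\ell$, so the strict inequality $L(v_i')>\ell$ demanded by membership in $I$ fails. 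Hence $I=I_{<}\cup I_{>}$.

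Next I would match the two partial sums against the quantities appearing in Definition~\ref{def:separable} at this $i$. The right-hand $S_1$ in Definition~\ref{def:separable} is $\sum(1-y_{v_j'})$ taken over $j$ with $i<j\le p+1$, $v_j'\neq\nil$, and $L(v_j')>L(v_i)=\ell$; this is exactly the definition of $I_{>}$, so $\sum_{j\in I_{>}}(1-y_{v_j'})=S_1^{\text{right}}(i)$. Symmetrically $\sum_{j\in I_{<}}(1-y_{v_j'})=S_1^{\text{left}}(i)$.

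Finally I would invoke non-separability at $i$. Because $\cP$ is non-separable, neither alternative in Definition~\ref{def:separable} may hold for this (minimum-capacity) witness, so both $S_1^{\text{right}}(i)<\lceil S_2^{\text{right}}(i)\rceil - S_2^{\text{right}}(i)$ and $S_1^{\text{left}}(i)<\lceil S_2^{\text{left}}(i)\rceil - S_2^{\text{left}}(i)$ hold. Each right-hand side is strictly smaller than $1$: if $S_2$ were an integer, the ceiling defect would be $0$ and the strict inequality $S_1<0$ would be impossible since $S_1\ge 0$; hence non-separability forces $S_2\notin\mathbb{Z}$, in which case $\lceil S_2\rceil - S_2\in(0,1)$. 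Adding the two inequalities yields
\[
\sum_{i\in I}(1-y_{v_i'})\ =\ S_1^{\text{left}}(i)+S_1^{\text{right}}(i)\ <\ 1+1\ =\ 2,
\]
as required. The only mildly delicate point is the integrality edge case for $S_2$, but as observed it is ruled out automatically by non-separability, so no extra work is needed. Otherwise the argument is a direct unfolding of the definitions at a carefully chosen witness.
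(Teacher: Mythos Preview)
Your argument is correct and follows essentially the same route as the paper: fix a minimum-capacity $v_i\in\Gamma(\cP)$, split $I$ around $i$, and use non-separability at $i$ to bound each partial sum by $1$. One small remark: your detour through the integrality of $S_2$ is unnecessary, since $\lceil S_2\rceil-S_2\in[0,1)$ holds unconditionally; the paper simply observes that if $S_1\ge 1$ then $S_1\ge 1\ge\lceil S_2\rceil-S_2$, contradicting non-separability.
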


\begin{proof}
Consider any $v_i \in \Gamma(\cP)$ such that $L(v_i) = \ell$.
Let $I_1 = I \cap [0,i-1]$ and $I_2 = I \cap [i+1,p+1]$ (note that $I = I_1 \cup I_2$).
We know that $v_i$ is not a witness of separability
hence each of the two sums $S_1$ in Definition~\ref{def:separable}
is strictly smaller than $1$, since otherwise we would have $S_1 \ge 1 \ge \lceil S_2 \rceil - S_2$.
Consequently $\sum_{i \in I_1} (1-y_{v_i'}) < 1$
and similarly $\sum_{i \in I_2} (1-y_{v_i'}) < 1$.
\end{proof}

\begin{figure}[h]
\begin{center}
\includegraphics{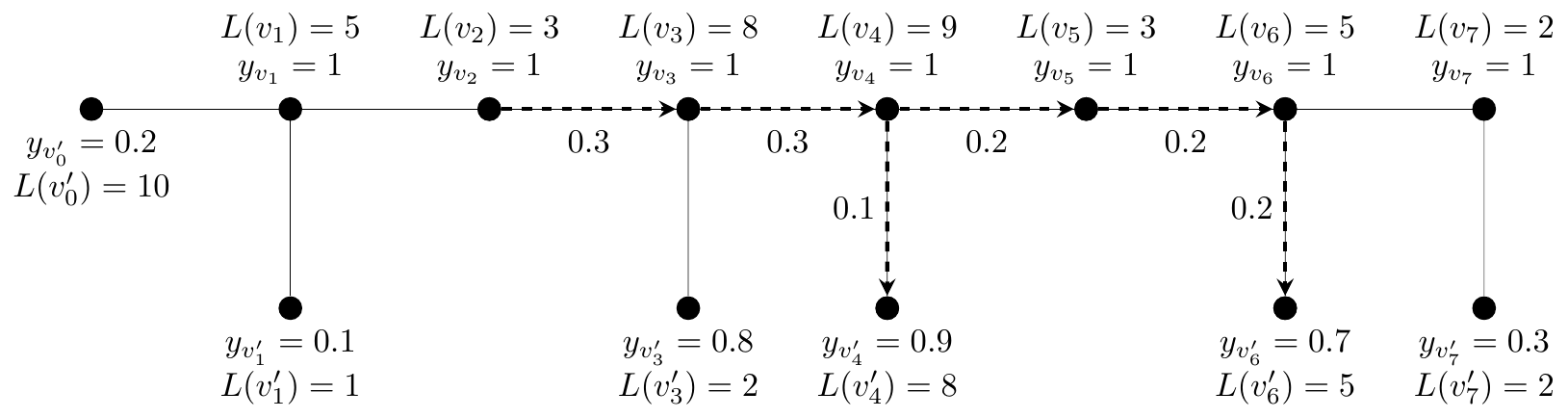}
\caption{A separable caterpillar structure $(P=(v_1,\ldots,v_7),P'=(v_0',v_1',\nil,v_2',v_3',\nil,v_5',v_6',\nil,\nil))$, where $\Gamma((P,P'))=\{v_1,v_2,v_5\}$ (note that $v_7 \not\in \Gamma((P,P'))$, since $v_8'=\nil$).
By dashed edges an acyclic flow $\cF=\{(0.1,v_2,v_3,v_4,v_4'),(0.2,v_2,v_3,v_4,v_5,v_6,v_6')\}$ from $\{v_2\}$ to $\{v_4', v_6'\}$ is marked with values $f_\cF$ printed in the middle of each arc.}
\label{fig4}
\end{center}
\end{figure}

In the following lemma we use the procedure \textsc{Separate}$(\cP=(P,P'))$ which given a $\delta$-caterpillar structure produces a set of non-separable $\delta$-caterpillar structures.
The pseudocode of \textsc{Separate} is given as Algorithm~\ref{alg:separate}
and at high level it performs the following steps:
\begin{enumerate}
  \item(Lines 1-2) If $(P,P')$ is non-separable then return $\cP$.
  \item(Line 3) Otherwise let $i$ be a witness from Definition~\ref{def:separable} with the smallest value of 
  $L(v_i)$ and assume that $i$ is a witness due to the first conditition in the definition (the other case is symmetric).
  \item(Lines 4-5) If the sum $S_2$ is integral, then run the procedure recursively on two caterpillar structures $((v_1,\ldots,v_i),(v_0',\ldots,v_i',\nil))$,
  $((v_{i+1},\ldots,v_p),(\nil,v_{i+1}',\ldots,v_{p+1}'))$ and return the set of obtained non-separable $\delta$-caterpillar structures.
  \item(Lines 6-9)  Let $I=\{j : i < j \le p+1, v_j'\not=\nil, L(v_j') > L(v_i)$, that is the set of indices used
 in the definition of the sum $S_1$ and denote $I=\{i_1,\ldots,i_r\}$, where $i_1 < \ldots < i_r$ and let $\ell_0$ be the smallest integer such 
 that $\sum_{j \in I, j \le \ell_0} (1-y_{v_j'}) \ge \lceil S_2 \rceil - S_2$ (such $\ell_0$ exists since $i$ is a witness of separability).
 \item(Lines 10-11) Construct an acyclic $y$-flow $\cF$ from $\{v_i\}$ to $\{v_j': j \in I, j \le \ell_0\}$.
   That is for each $j \in I$, $j < \ell_0$ add to $\cF$ a sequence $(1-y_{v_j'},v_i,v_{i+1},\ldots,v_j,v_j')$
   and for $j=\ell_0$ add to $\cF$ a sequence $(\alpha,v_i,v_{i+1},\ldots,v_{\ell_0},v_{\ell_0}')$,
   where $\alpha=(\lceil S_2 \rceil - S_2) - \sum_{j \in I, j < \ell_0}(1-y_{v_j'})$ (see Fig.~\ref{fig4} for illustration);
   perform chain shifting according to $\cF$.
  \item(Lines 12-18) If $i < p$ then run the procedure recursively on the caterpillar structure $((v_{i+1},\ldots,v_p)$, $(\nil,v_{i+1}'',\ldots,v_{p+1}''))$,
 where $v_{j}''=v_j'$ if $y_{v_j'} < 1$ and $v_j''=\nil$ otherwise for $i+1\le j \le p+1$ (if $i=p$ we already know that $y_{v_{p+1}'}=1$).
  \item(Lines 19-21) If $i = 1$ then perform group shifting on the set $\{v_0',v_1,v_1'\} \setminus \{\nil\}$ and return.
  \item(Lines 22-23) If $v_i'\not=\nil$ then shift $\min(y_{v_i'}, 1-y_{v_i})$ from $y_{v_i'}$ to $y_{v_i}$ (by the properties
    of a caterpillar structure we know that $L(v_i) \ge L(v_i')$).
  \item(Lines 24-29) If $y_{v_i}=1$ then run the procedure recursively on the caterpillar structure $((v_1,\ldots,v_i)$, $(v_0',\ldots,v_{i-1}',u,\nil))$,
   where $u=v_i'$ if $v_i' \not= \nil, y_{v_i'}>0$, or $u=\nil$ otherwise.
  \item(Lines 30-31) Otherwise (if $y_{v_i} < 1$) run the procedure recursively on the caterpillar structure $((v_1,\ldots,v_{i-1})$, $(v_0',\ldots,v_{i-1}',v_i))$.
\end{enumerate}

\begin{algorithm}[thp]
\caption{\algo{\textsc{Separate}}\label{alg:separate}}
\textbf{Input:} A $\delta$-caterpillar structure $(\cP=(P,P'))$.\\
\textbf{Output:} A set of non-separable $\delta$-caterpillar structures $\cS$, such that each vertex $v$ which belongs to $\cP$ and does not belong to any caterpillar structure of $\cS$, has integral $y$-value.

\begin{algorithmic}[1]

\IF {$(P,P')$ is non-separable}
\RETURN $\{\cP\}$
\ENDIF
\STATE $i\gets$ a witness from Definition~\ref{def:separable} with the smallest value of $L(v_i)$
\begin{flushright}
 {\it /*assume that $i$ is a witness due to the first condition*/}\\
 {\it /*in the definition, since the other case is symmetric*/}
\end{flushright}
\IF {$S_2$ is integral}
  \RETURN \textsc{Separate}$((v_1,\ldots,v_i),(v_0',\ldots,v_i',\nil))\ \cup\ $\textsc{Separate}$((v_{i+1},\ldots,v_p),(\nil,v_{i+1}',\ldots,v_{p+1}'))$ 
\ENDIF
\STATE $\cS\gets\emptyset$
\STATE $I\gets \{j : i < j \le p+1, v_j'\not=\nil, L(v_j') > L(v_i)\}$
  \begin{flushright}
{\it /*$I$ is the set of indices used in the definition of the sum $S_1$*/}
  \end{flushright}
\STATE Denote $I=\{i_1,\ldots,i_r\}$, where $i_1 < \ldots < i_r$
\STATE $\ell_0\gets$ the smallest integer such that $\sum_{j \in I, j \le \ell_0} (1-y_{v_j'}) \ge \lceil S_2 \rceil - S_2$ 
\begin{flushright}
 {\it /*such $\ell_0$ exists since $i$ is a witness of separability*/}
\end{flushright}
\STATE Construct an acyclic $y$-flow $\cF$ from $\{v_i\}$ to $\{v_j': j \in I, j \le \ell_0\}$.  That is for each $j \in I$, $j < \ell_0$ add to $\cF$ a sequence $(1-y_{v_j'},v_i,v_{i+1},\ldots,v_j,v_j')$ and for $j=\ell_0$ add to $\cF$ a sequence $(\alpha,v_i,v_{i+1},\ldots,v_{\ell_0},v_{\ell_0}')$, where $\alpha=(\lceil S_2 \rceil - S_2) - \sum_{j \in I, j < \ell_0}(1-y_{v_j'})$ (see Fig.~\ref{fig4} for illustration).
\STATE Perform chain shifting according to $\cF$.
\IF {$i < p$}

\item[]
\begin{flushright}
{\it /*if $i=p$ we already know that $y_{v_{p+1}'}=1$*/}
\end{flushright}
  \FOR {j=i+1 {\bf to} p+1}
    \IF {$v_j'\not=\nil$ {\bf and} $y_{v_j'} < 1$} 
      \STATE $v_{j}'' \gets v_j'$
    \ELSE
      \STATE $v_j''\gets \nil$
    \ENDIF
  \ENDFOR
  \STATE $\cS=\cS\ \cup\ $\textsc{Separate}$((v_{i+1},\ldots,v_p),(\nil,v_{i+1}'',\ldots,v_{p+1}''))$,
\ENDIF

\IF {$i = 1$}
  \STATE Perform group shifting on $\{v_0',v_1,v_1'\} \setminus \{\nil\}$
  \RETURN $\cS$
\ENDIF

\IF {$v_i'\not=\nil$}
  \STATE Shift $\min(y_{v_i'}, 1-y_{v_i})$ from $y_{v_i'}$ to $y_{v_i}$ 
  \begin{flushright}
   {\it /*by the properties of a caterpillar structure we know that $L(v_i) \ge L(v_i')$*/}
  \end{flushright}
\ENDIF

\IF {$y_{v_i}=1$}
  \IF {$v_i' \not= \nil$ {\bf and} $y_{v_i'}>0$}
    \STATE $u=v_i'$
  \ELSE
     \STATE $u=\nil$
  \ENDIF
  \STATE $\cS\gets \cS\ \cup\ $\textsc{Separate}$((v_1,\ldots,v_i),(v_0',\ldots,v_{i-1}',u,\nil))$,
\ELSE
  \STATE $\cS\gets \cS\ \cup\ $\textsc{Separate}$((v_1,\ldots,v_{i-1}),(v_0',\ldots,v_{i-1}',v_i))$
\ENDIF
\RETURN $\cS$
\end{algorithmic}
\end{algorithm}

\begin{lemma}
\label{lem:non-sep-path-structures}
For a given feasible LP solution $(x,y)$ we can find a $68$-feasible solution $(x',y')$
together with a set of vertex disjoint non-separable $21$-caterpillar structures $\mathcal{S}$ such
that each vertex $v$ outside of the set has an integral $y$-value in $(x',y')$.
Furthermore for each vertex $v$ that belongs to some caterpillar structure from $\mathcal{S}$
we have $\radius_{(x',y')}(v) \le 47$.
\end{lemma}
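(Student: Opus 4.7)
The plan is to first apply Lemma~\ref{lem:path-structure} to obtain a $5$-feasible solution $(x,y)$ together with a single $21$-caterpillar structure $\cP_0=(P_0,P_0')$, so that every vertex outside $V(P_0)\cup V(P_0')$ already has integral $y$-value. Next, I would feed $\cP_0$ into the \textsc{Separate} procedure (Algorithm~\ref{alg:separate}) and take $\mathcal{S}$ to be the set of non-separable $21$-caterpillar structures returned by the recursion.

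First I would verify correctness of \textsc{Separate}. Termination is immediate because each recursive call acts on a strictly shorter sub-sequence of the input caterpillar (at least the witness $v_i$ is dropped from at least one side). Then I would check that each recursive invocation receives a legal $21$-caterpillar structure in the sense of Definition~\ref{def:path-structure}; items 1--7 are preserved syntactically by the construction in lines 6, 18 and 25--31, and the only delicate condition is item~\ref{def:path:point8}, the integrality of $\sum_{v\in V(P')} y_v$. This is exactly why the algorithm branches on whether $S_2$ is integral: when it is, both halves inherit integral $y$-mass for free; when it is not, the chain-shifting step in lines 10--11 pushes exactly $\lceil S_2\rceil-S_2$ units of $y$-mass from $v_i$ across the cut into the right sub-caterpillar, which is possible precisely because $v_i$ is a witness of separability (so $S_1\ge \lceil S_2\rceil-S_2$) and because Lemma~\ref{lem:gamma} guarantees that every intermediate $v_{i+1},\dots,v_{\ell_0}$ has $y$-value $1$ and capacity at least $L(v_i)$, which is what the definition of a $y$-flow requires. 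The shifts in lines 22--23 and the group shift of lines 19--21 are what make each witness $v_i$ that is removed from the caterpillar end up with an integral $y$-value, so all vertices outside the final output structures are integral. Vertex-disjointness of $\mathcal{S}$ is then immediate since the two recursive calls operate on disjoint vertex sets.

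The main obstacle is tracking how much $\radius_{(x,y)}(v)$ can grow. The plan is to charge the growth to a small number of $+21$ events. Initially $\radius(v)\le 5$ for every vertex by Lemma~\ref{lem:path-structure}. Three kinds of operations can raise $\radius(v)$, each by at most $21$ since consecutive vertices of a $21$-caterpillar lie within distance $21$ in $G$: (i) $v$ being an internal vertex of a chain-shift path, to which Lemma~\ref{lem:chain-shifting} applies; (ii) $v$ being the target of a single shift along a caterpillar edge in line 23 (via Lemma~\ref{lem:shift}); (iii) $v$ participating in the group shift of lines 19--21 (via Lemma~\ref{lem:group-shift}). The hard part will be the amortised argument that a vertex which survives in some output caterpillar can be an internal vertex of at most two chain-shifts; the key ingredient is the minimum-capacity choice of the witness $v_i$ in $\Gamma(\cP)$ together with Lemma~\ref{lem:gamma}, which forces any later witness on the same side of a cut to lie further inside the sub-caterpillar and have strictly larger capacity, so a given $v$ can fall on the internal part of only two successive chain-shifts before either the sub-caterpillar becomes non-separable (stopping the recursion) or $v$ itself becomes the next witness (and is removed).

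Putting the pieces together, every vertex that remains in some non-separable $21$-caterpillar of $\mathcal{S}$ satisfies $\radius_{(x',y')}(v)\le 5+2\cdot 21=47$, which is the claimed bound. A vertex $v$ that is absorbed by the caterpillar (i.e.\ ends up with integral $y$-value outside every output structure) can suffer, in addition to those two chain-shifts, at most one further event of type (ii) or (iii) at the moment of its absorption, giving $\radius_{(x',y')}(v)\le 5+3\cdot 21=68$. Hence the resulting assignment is $68$-feasible while all vertices inside caterpillars have radius at most $47$, completing the proof of the lemma.
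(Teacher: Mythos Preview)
Your overall strategy---apply Lemma~\ref{lem:path-structure}, feed the resulting $21$-caterpillar into \textsc{Separate}, and take the output as $\mathcal{S}$---is exactly what the paper does, and your verification that each recursive call receives a legitimate $21$-caterpillar structure (in particular the use of chain-shifting to restore item~\ref{def:path:point8}) is correct.

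The gap is in the radius-tracking step. Your proposed amortisation (``a surviving vertex is internal to at most two chain-shifts'') is the right target, but the justification you give is not the right mechanism and would not go through as written. Specifically, the claim that later witnesses on the same side ``have strictly larger capacity'' is false (ties are allowed), and even if it were true it would not by itself confine where the next chain-shift lands. What actually confines the chain-shift is the location of the high-capacity leaves $v_j'$ that make up the set $I$: after a chain-shift, every leaf $v_j'$ in the swept interval with $L(v_j')>L(v_i)$ has its $y$-value pushed to $1$ and is removed from the caterpillar (line~17), so in the sub-caterpillar all remaining leaves in that boundary stretch satisfy $L(v_j')\le L(v_i)\le L(v)$ for every $v\in\Gamma$. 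This, not capacity monotonicity of witnesses, is what prevents the next $I$ from reaching back into the already-swept region.

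The paper formalises this with an explicit invariant: a caterpillar given to \textsc{Separate} is \emph{good} if there exist indices $i_0\le j_0$ such that every vertex has radius at most $5+21\cdot[i\le i_0]+21\cdot[i\ge j_0]$, and additionally every non-$\nil$ leaf $v_j'$ with $j<i_0$ or $j>j_0$ has capacity at most that of any vertex in $\Gamma$. The third clause is the crucial one: it forces $I\subseteq[i_0,j_0]$ for either direction of chain-shift, hence the new witness satisfies $i_0<i<j_0$ and the swept interval stays inside $(i_0,j_0]$. One then checks that each sub-caterpillar produced in lines~5, 18, 29, 31 is again good with updated $i_0',j_0'$. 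The bound $5+21+21=47$ for surviving vertices and $5+21+21+21=68$ for absorbed ones then falls out of the invariant plus one extra event at the moment of absorption (lines~20 or~23). Your $5+2\cdot21$ and $5+3\cdot21$ arithmetic is correct, but to make it rigorous you need this invariant rather than the capacity-monotonicity heuristic.
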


\begin{proof}
First we use Lemma~\ref{lem:path-structure} to obtain a $5$-feasible solution
$(x',y')$ together with a $21$-caterpillar structure $(P_0,P_0')$ such that 
each vertex outside $(P_0,P_0')$ has an integral $y$-value in $(x',y')$.
Next we run the procedure \textsc{Separate} on the caterpillar structure $(P_0,P_0')$ using $(x',y')$ as the assignment.

Let us prove, that given a $\delta$-caterpillar structure the procedure \textsc{Separate}
indeed returns a set of non-separable $\delta$-caterpillar structures.
By Lemma~\ref{lem:gamma} the set $\cF$ created in Line~10 of the procedure \textsc{Separate}
is indeed an acyclic $y$-flow.
Furthermore caterpillar structures created in Lines 5, 18, 29, 31.
are indeed $\delta$-caterpillar structures since they satisfy all the conditions of Definition~\ref{def:path-structure}.
Moreover observe that if a vertex $v$ belongs to $\Gamma(\cP')$, where $\cP'$ is a caterpillar structure created in
one of Lines 5, 18, 29, 31,
then $v \in \Gamma(\cP)$.

Since some vertices of the caterpillar structure created in Lines~18, 29 and 31
may have increased radius we need to argue why the radius does not grow too much in the subsequent recursive calls.

Let $\mathcal{S}$ be the set of non-separable $21$-caterpillar structures, which
is a result of \textsc{Separate}$(P_0,P_0')$.
Observe that the caterpillar structures are vertex disjoint
and moreover each vertex $v$ which does not belong to any caterpillar structure from $S$
has an integral $y$-value in $(x',y')$.

Due to Lemmas~\ref{lem:shift},\ref{lem:group-shift},\ref{lem:chain-shifting} when we modify the assignment $(x',y')$
by shifting, group shifting or chain shifting we satisfy all constraints of the LP possibly except~(\ref{con6}).
To show $68$-feasibility of $(x',y')$ we prove that for each caterpillar structure $((v_1,\ldots,v_p),(v_0',\ldots,v_{p+1}'))$ used as an argument of the procedure \textsc{Separate}
there exist two indices $0 \le i_0 \le j_0 \le p+1$ such that:
\begin{itemize}
  \item for each $i \in [1,p]$ we have $\radius_{(x',y')}(v_i) \le 5+21\cdot[i \le i_0]+21\cdot[i \ge j_0]$,
  \item for each $i \in [0,p+1]$ if $v_i' \not= \nil$ then $\radius_{(x',y')}(v_i') \le 5+21\cdot[i \le i_0]+21\cdot[i \ge j_0]$,
  \item for each $i \in [0,i_0) \cup (j_0,p+1]$ such that $v_i' \not= \nil$ for any $v \in \Gamma(\cP)$ we have $L(v_i') \le L(v)$.
\end{itemize}
In the formulas above we use the Iverson's bracket notation, that is $[\phi]$ equals $1$ when $\phi$ is true, and $0$ otherwise.
We call a caterpillar structure satisfying the above properties a {\em good} caterpillar structure.
The initial caterpillar structure $(P_0,P_0')$ is good since we can set $i_0=0$ and $j_0=p+1$.
Hence we assume that a caterpillar structure $(P,P')$ given as an argument to the \textsc{Separate} procedure is good
and we want to show that all recursive calls are given good caterpillar structures.
If the procedure exits in Line~2 then it is non-separable,
thus we assume that $(P,P')$ is separable and due to the symmetry w.l.o.g. we may assume that 
$i$ is a witness satisfying the first condition of Definition~\ref{def:separable}.
Clearly caterpillar structures constructed in Line~5 are good caterpillar structures.
Now we investigate caterpillar structures constructed in Lines~18,~29 and 31.
Observe that due to the definition of a good caterpillar structure we can prove that $i_0 < i < j_0$.
Indeed if $i \in [1,i_0]$ then there is no $0 \le i' < i$ with $v_i' \not=\nil$ and $L(v_i') > L(v_i)$
and similarly for $i \in [j_0,p]$ there is no $i < i' \le p+1$ with $v_i' \not=\nil$ and $L(v_i') > L(v_i)$.
Note that for $\ell_0$ defined in Line~9 we have $\ell_0 \le j_0$ since due to the last property of good caterpillar structure
$I \subseteq [i+1,j_0]$.
Furthermore for each $j \in [i+1,\ell_0)$ either $v_j'=\nil$ or $L(v_j') \le L(v_i)$ or 
after chain shifting by Lemma~\ref{lem:chain-shifting} the vertex $v_j'$ will have integral $y$-value equal to one
and will be not be included in caterpillar structures handled recursively.
Consequently the caterpillar structure constructed in Line~18 is a good caterpillar structure.
Finally consider caterpillar structures created in Lines~29 and 31
and observe that a pair of indices $i_0,i$ satisfies the properties of a good caterpillar structure.

Since all the caterpillar structures from the set $\mathcal{S}$ are good caterpillar structures
the last part to show before we claim that $(x',y')$ is a $68$-feasible solution 
is that vertices from $X=(V(P_0)\cup V(P_0')) \setminus (\bigcup_{(P,P') \in \mathcal{S}} V(P)\cup V(P'))$
have radius bounded by $68$ (by the definition of a good caterpillar structure all vertices from $\bigcup_{(P,P') \in \mathcal{S}} V(P)\cup V(P')$
have radius bounded by $47$).
The only possibilities for a vertex $v$ to become a member of $X$ is when its $y$-value becomes integral,
which may happen in Line~17 (when $y_{v_j'}'=1$), Line~20 (for each non-nil vertex in $\{v_0',v_1,v_1'\}$), 
Line~23 (when $y_{v_i'}'=0$).
However in all cases because $i_0 < i < j_0$ and by Lemma~\ref{lem:chain-shifting} we infer that $\radius_{(x',y')}(v) \le 68$.
\end{proof}

\begin{figure}[h]
\begin{center}
\includegraphics{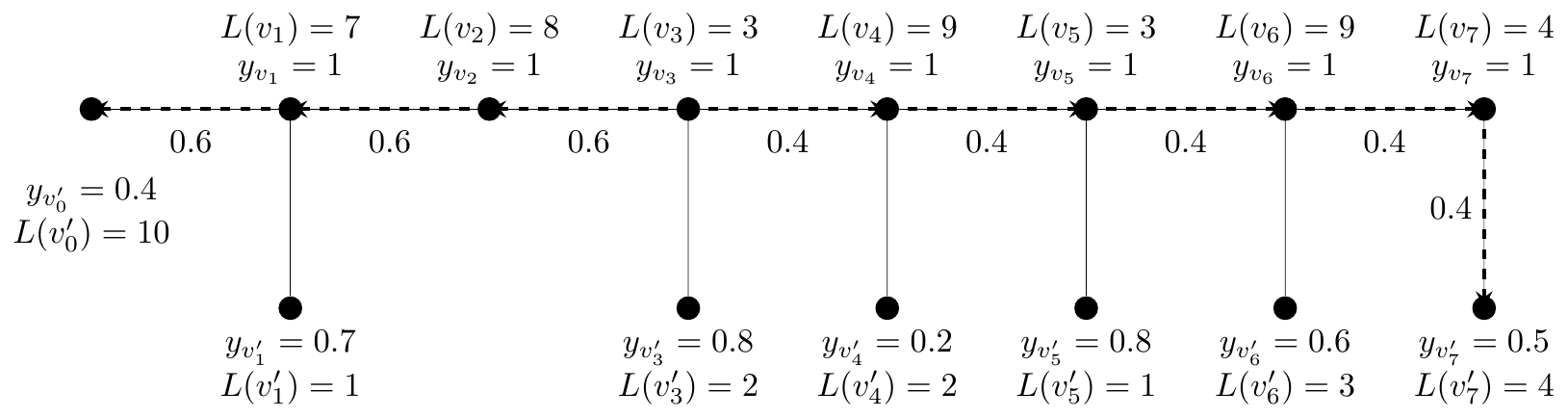}
\caption{A dangerous caterpillar structure $(P=(v_1,\ldots,v_7),P'=(v_0',v_1',\nil,v_2',v_3',v_4',v_5',v_6',v_7',\nil))$, where $\Gamma((P,P'))=\{v_3,v_5\}$.
The caterpillar structure is non-separable because both for $i=3$ and $i=5$ in Definition~\ref{def:separable} the sum $S_1$ is at most $0.6$, while $\lceil S_2 \rceil - S_2$ is equal to $0.9$.
By dashed edges an acyclic flow $\cF=\{(0.6,v_3,v_2,v_1,v_0'),(0.4,v_3,v_4,v_5,v_6,v_7,v_7')\}$ from $\{v_3\}$ to $\{v_0', v_7'\}$ is marked with values $f_\cF$ printed in the middle of each arc.}
\label{fig5}
\end{center}
\end{figure}

In the following lemma we transform non-separable caterpillar structures into safe caterpillar structures.

\begin{lemma}
\label{lem:safe-path-structures}
There exist constants $c,\delta$ such that 
for a given feasible LP solution $(x,y)$ we can find a $c$-feasible solution $(x',y')$
together with a set of vertex disjoint safe $\delta$-caterpillar structures $\mathcal{S}$ such
that each vertex $v$ outside of the set has an integral $y$-value in $(x',y')$.
\end{lemma}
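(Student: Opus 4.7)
The plan is to start with the non-separable $21$-caterpillar structures produced by Lemma~\ref{lem:non-sep-path-structures} and iteratively destroy all dangerous ones, one offending spine vertex at a time, via chain shifting. Each round will strictly shrink $\Gamma$, ensuring termination.

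The core move, applied to a dangerous non-separable caterpillar $\mathcal{P}=((v_1,\ldots,v_p),(v'_0,\ldots,v'_{p+1}))$, is as follows. Let $v_i\in\Gamma(\mathcal{P})$ attain the minimum capacity $\ell=\min_{v\in\Gamma(\mathcal{P})}L(v)$. Lemma~\ref{lem:gamma} guarantees that every spine segment from $v_i$ to a heavy leaf $v'_j$ (with $L(v'_j)>\ell$) consists of spine vertices of capacity at least $\ell$, so for each such $j$ the sequence $(v_i,v_{i\pm 1},\ldots,v_j,v'_j)$ is a valid path in a $y$-flow from $\{v_i\}$ to the heavy-leaf set $I=\{v'_a\neq\nil : L(v'_a)>\ell\}$. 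By Lemma~\ref{lem:dangerous-non-sep} the total deficit $\sum_{j\in I}(1-y_{v'_j})$ is strictly less than $2$, and non-separability sharpens this to $S_L<1$ and $S_R<1$ individually on the two sides of $v_i$.

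I will then choose the flow amounts $\alpha_L,\alpha_R$ so that after chain shifting either all heavy leaves are filled to $y=1$ (and hence leave $P'$) or $v_i$ is emptied to $y=0$ (and hence leaves $P$), in each case preserving the integer $y$-sum invariant (condition~\ref{def:path:point8} of Definition~\ref{def:path-structure}). Concretely, if $S_L+S_R\geq 1$, I route the full $y_{v_i}=1$ to $I$ and split at $v_i$, selecting the split $\alpha_L\in[\max(0,1-S_R),\min(1,S_L)]$ so that the left $P'$-sum becomes integer (the right then follows automatically, since the total was integer). If $S_L+S_R<1$, I send only $S_L+S_R$ units from $v_i$, removing all of $I$ from $P'$ while leaving $v_i$ with $y_{v_i}=1-(S_L+S_R)\in(0,1)$; I then remove $v_i$ from the spine, treat $v_{i-1}$ and $v_{i+1}$ as consecutive spine vertices (inflating the caterpillar parameter from $\delta$ to $2\delta$), and park $v_i$ as one of the boundary leaves $v'_0$ or $v'_{p+1}$ of the merged piece, using the fact that the caterpillar definition imposes no capacity constraint on boundary leaves. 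Integrality of the $y$-sum over the new $P'$ holds because the total sum of $y$-values is unchanged and the net number of integer-valued entries that joined or left $P'$ is $|I|$, an integer.

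After one such operation, $v_i$ exits $\Gamma$; moreover every other $v_k$ that was in $\Gamma$ also leaves. Indeed, if $L(v_k)=\ell$ then its heaviness test is identical to that of $v_i$, so killing every heavy-to-$v_i$ leaf kills every heavy-to-$v_k$ leaf; if $L(v_k)>\ell$ the test for $v_k$ is strictly stronger and is again satisfied vacuously. Since no new leaves are ever introduced, no new vertex can enter $\Gamma$, and after at most $p$ rounds each resulting caterpillar is safe. The main technical obstacle I anticipate is the radius bookkeeping: Lemma~\ref{lem:chain-shifting} allows the radius of a target vertex to grow by as much as the graph distance traversed by the flow path, potentially $\Theta(p\delta)$ in one round. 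To obtain a universal constant $c$ I will rely on the fact that once a vertex acquires an integer $y$-value it leaves the structure and is never touched again, so the radius of any vertex is affected during only a bounded number of rounds; combined with the observation that the caterpillar parameter needs to be inflated at most a constant number of times (because repeated inflations happen only along the small fraction of spine edges incident to successive $v_i$'s chosen from $\Gamma$), a careful amortized analysis in the spirit of the proof of Lemma~\ref{lem:non-sep-path-structures} should yield the required constants $c$ and $\delta$.
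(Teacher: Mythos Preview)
Your overall plan—start from Lemma~\ref{lem:non-sep-path-structures} and, in each dangerous caterpillar, chain-shift from a minimum-capacity vertex of $\Gamma$ to the heavy leaves—is the paper's as well. But the termination/radius analysis has a real gap.

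Your $\Gamma$-emptying argument (``killing every heavy-to-$v_i$ leaf \ldots'') only applies when every heavy leaf is actually filled, i.e.\ in your case $S_L+S_R<1$. In your case $S_L+S_R\geq 1$ you route only one unit and split; heavy leaves survive, so a spine vertex $v_k$ in either piece can still have heavy-to-$v_k$ leaves on both of its sides, and the piece need not even be non-separable (so the inequalities $S_L,S_R<1$ you invoke may fail in the next round). The fallback to ``at most $p$ rounds'' does not give constants: Lemma~\ref{lem:chain-shifting} lets the radius of every internal spine vertex grow by $\delta$ per round, and each merge across a removed $v_i$ doubles the caterpillar parameter. Your amortization sketch (``each vertex is touched only a bounded number of times'') fails for internal flow vertices, which are touched in every round.

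The missing idea is to avoid splitting altogether and exploit Lemma~\ref{lem:dangerous-non-sep} directly. Always send $\min\bigl(1,\sum_{j\in I}(1-y_{v'_j})\bigr)$ from $v_a$ to the heavy leaves, then remove $v_a$ from the spine and group-shift $\{v_a,v'_a,v'_{a-1}\}$ into a single interior leaf at position $a-1$. Since the total heavy-leaf deficit is strictly less than~$2$, it drops by~$1$ or to~$0$ per iteration, so \emph{at most two} iterations per caterpillar are needed, and both $c$ and $\delta$ remain absolute constants.

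A smaller issue: parking the leftover $v_i$ as a boundary leaf $v'_0$ or $v'_{p+1}$ violates points~5--6 of Definition~\ref{def:path-structure}, which force a boundary leaf to lie within $\delta$ of the spine \emph{endpoint}; in general $v_i$ is $\Theta(p\delta)$ away from $v_1$ and $v_p$. The correct placement, used in the paper, is as the interior leaf at position $a-1$; the required capacity inequality $L(v_{a-1})\geq L(u)$ then follows from Lemma~\ref{lem:gamma}.
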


\begin{proof}
We use Lemma~\ref{lem:non-sep-path-structures} to obtain a set $\mathcal{S}$
of vertex disjoint non-separable $21$-caterpillar structures.
Our goal is to transform each dangerous caterpillar structure in $\mathcal{S}$
into a safe caterpillar structure.

Consider a dangerous non-separable $\delta_0$-caterpillar structure $\cP=((v_1,\ldots,v_p),(v_0',\ldots,v_{p+1}')) \in \mathcal{S}$ and let $v_a$ be a minimum capacity vertex in $\Gamma(\cP)$.
Moreover let $I=\{i\ :\ 0 \le i \le p+1 \wedge v_i'\not=\nil \wedge L(v_i') > L(v_a) \}$.
Construct any acyclic $y$-flow which sends $\min(1,\sum_{i \in I} (1-y_{v_i'}))$
from $\{v_a\}$ to $\{v_i'\ :\ i \in I\}$ (see Fig.~\ref{fig5}).
Such flow always exists due to Lemma~\ref{lem:gamma}.

Let $Y=\{v_a,v_a',v_{a-1}'\} \setminus \{\nil\}$ and perform group shifting
on $Y$ (note that $a \ge 1$, since $v_a \in \Gamma(\cP)$).
Replace $\cP$ in $\mathcal{S}$ with the $(2\delta_0)$-caterpillar
structure $((v_1,\ldots,v_{a-1},v_{a+1},\ldots,v_p)$, \linebreak $(v_0',\ldots,v_{a-2}',u,v_{a+1}',\ldots,v_{p+1}'))$, where as $u$ we set the only vertex from $Y$ with fractional
$y$-value after group shifting or we set $u=\nil$ if all vertices in $Y$
have integral $y$-values.
We need to argue, that when $u \not=\nil$, we have $L(u) \le L(v_{a-1})$, in order to satisfy (\ref{def:path:point4}) of Definition~\ref{def:path-structure}.
Observe, that if $v_a' \not= \nil$, then $L(v_a') \le L(v_a)$,
and similarly if $v_{a-1}' \not= \nil$, then $L(v_{a-1}') \le L(v_{a-1})$.
Hence to show $L(u) \le L(v_{a-1})$ it is enough to show $L(v_a) \le L(v_{a-1})$,
but this follows from Lemma~\ref{lem:gamma},
since $v_a \in \Gamma(\cP)$.

Note, that each caterpillar structure
will be modified according to the above procedure at most twice,
since after one iteration the sum $\sum_{i \in I} (1-y_{v_i'})$
either equals zero or decreases by one, and by Lemma~\ref{lem:dangerous-non-sep} we have $\sum_{i \in I} (1-y_{v_i'}) < 2$.
Consequently by Lemmas~\ref{lem:chain-shifting}, \ref{lem:group-shift}
we obtain the desired set of vertex disjoint $\delta$-caterpillar structure
together with a $c$-feasible solution.
\end{proof}

\subsection{Rounding safe caterpillar structures}
\label{sec:rounding-flow}
In this section we describe how to round the $c$-feasible solution $(x',y')$ 
using the set of vertex disjoint safe caterpillar structures $\mathcal{S}$ from Lemma~\ref{lem:safe-path-structures}.
In order to do that we introduce a notion of {\em rounding flow} which is a special kind of $y$-flow
defined for a caterpillar structure.

\begin{definition}[\bf rounding flow]
\label{def:rounding-flow}
For a caterpillar structure $(P,P')$ and an assignment $(x,y)$ we call
$\cF$ a {\em rounding flow} iff $\cF$ is a $y$-flow from $S$ to $T$ where $S \cup T= V(P')$,
for each $v_i' \in S$ we have $f_{\cF}((v_i',v_i))=y_{v_i'}$
and for each $v_i' \in T$ we have $f_{\cF}((v_i,v_i'))=1-y_{v_i'}$.
Furthermore each flow path from $\cF$ can not go through a vertex from $V \setminus (V(P) \cup V(P'))$.
\end{definition}

In order to obtain a rounding flow for each vertex of $V(P')$
(which by definition have fractional $y$-values), we have to decide whether it will be a source (member of $S$)
or a sink (member of $T$).
After chain shifting according to $\cF$ all sources should have $y$-value equal to zero whereas
all sinks should have $y$-value equal to one
and consequently all vertices from the caterpillar structure will have integral $y$-value.
In the following lemma we show that for each non-separable caterpillar structure we can always find
a rounding flow in polynomial time.

\begin{lemma}
\label{lem:rounding-flow}
For any safe $\delta$-caterpillar structure $(P,P')$ and an assignment $(x,y)$ 
there exists a rounding flow $\cF$ such that for any two adjacent
vertices in $G_{\cF}$ their distance in $G$ is at most $\delta$.
Furthermore we can find such a rounding flow in polynomial time.
\end{lemma}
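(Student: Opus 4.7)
I would formulate the existence of a rounding flow as a network flow feasibility question on an auxiliary digraph built from the caterpillar, and then verify feasibility using the safety condition together with the integrality of $m := \sum_{v' \in V(P')} y_{v'}$ guaranteed by condition~8 of Definition~\ref{def:path-structure}. View the caterpillar as a tree with spine $v_1, v_2, \ldots, v_p$ and pendants $v_i'$ attached to their anchors $v_i$. A rounding flow amounts to a partition $V(P') = S \cup T$ with $|T| = m$, together with a multiset of directed caterpillar paths carrying $y_{v'}$ out of each $v' \in S$ and $1-y_{v'}$ into each $v' \in T$, subject to (a) per-spine-vertex transit $\le 1$ (which comes from $y_{v_i} = 1$), and (b) capacity-monotonicity on each individual path (the starting pendant's capacity does not exceed the capacity of any interior spine vertex or of the ending pendant). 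Distance at most $\delta$ between consecutive vertices along each path is automatic from conditions~2 and~4 of Definition~\ref{def:path-structure}.

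\textbf{Building the flow.} I would encode this as an $s$-$t$ max-flow instance with super-source $\sigma$, super-sink $\tau$, arcs $\sigma \to v'$ of capacity $y_{v'}$, arcs $v' \to \tau$ of capacity $1-y_{v'}$, pendant-to-anchor arcs, and spine arcs of capacity $1$ in each direction; constraint~(b) is enforced by layering the network by capacity level so that flow of ``level $\ell$'' is permitted only through spine vertices of capacity $\ge \ell$ and into pendants of capacity $\ge \ell$. A valid rounding flow then corresponds to a feasible flow of value $m$ in this layered network, from which the paths of $\cF$ are recovered by standard flow decomposition. The safety condition enters in proving that such a flow exists: $\Gamma(\cP) = \emptyset$ implies that for every threshold $\ell$, the pendants of capacity exceeding $\ell$ all lie in a single maximal contiguous spine interval whose interior vertices have capacity $\ge \ell$, since otherwise an offending spine vertex would belong to $\Gamma(\cP)$. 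These intervals form a laminar family as $\ell$ varies, and I would argue feasibility inductively from innermost (highest-capacity) intervals outward, matching supply with demand at each level using per-interval integrality of partial $y$-sums, ultimately inherited from global integrality of $m$.

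\textbf{Main obstacle.} The hard part is managing three interacting constraints at once: the transit bound $\le 1$ at each spine vertex, per-path capacity-monotonicity, and the integer balance $|T| = m$. Naively choosing the $m$ highest-capacity pendants as sinks can force all flow to cross a single cut of capacity greater than $1$ (when sources and sinks separate cleanly by position), while alternating sinks by position can violate capacity-monotonicity; neither simple rule works in general. The safety hypothesis is precisely what reconciles these, by allowing a decomposition into nested capacity-stratified subproblems in which each innermost safe interval handles its own rounding locally, leaves at most one unit of residual imbalance, and hands that residual off to its enclosing interval using the spare transit capacity on the boundary spine vertices; global integrality of $m$ then clears the final residual at the outermost level. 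Polynomial running time follows from standard max-flow and flow-decomposition algorithms.
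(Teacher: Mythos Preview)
Your key structural observation is correct and is exactly what the paper uses: safety ($\Gamma(\cP)=\emptyset$) means that once a flow path has $L(\text{source})\le L(\text{sink})$, every intermediate spine vertex $v_b$ automatically satisfies $L(v_b)\ge L(\text{source})$, since otherwise the two endpoints would witness $v_b\in\Gamma(\cP)$. So the internal capacity condition on $y$-flow paths reduces to the endpoint condition, and the laminar picture you describe (for each threshold $\ell$, the pendants of capacity $>\ell$ sit inside one spine interval whose spine vertices all have capacity $>\ell$) is accurate.

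The gap is in the construction. Your ``layered'' network with per-level routing restrictions but a \emph{shared} transit bound of $1$ at every spine vertex is a multicommodity flow, not a single $s$--$t$ max-flow; you cannot just run max-flow on it. And in your fallback inductive argument you assert ``per-interval integrality of partial $y$-sums'', which does not hold (only the global sum is integral), then switch to passing residuals between levels; but residual from an innermost interval cannot be exported outward without making a high-capacity pendant a source routed to a lower-capacity sink. Residuals must therefore flow \emph{into} inner intervals, and you never argue why the cumulative cross-level flow through a fixed spine vertex stays below $1$. That transit bound is the whole difficulty, and your sketch does not establish it.

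The paper takes a different, more hands-on route: it does not use max-flow at all but gives an explicit left-to-right recursion. It takes the shortest prefix of $P'$ whose $y$-values sum to at least $1$, designates the highest-capacity pendant in that prefix as the sink and the rest as sources routed to it along the spine, and recurses on the remainder; when the prefix sum overshoots $1$, the boundary pendant is split, recursed on, and then re-merged by a small case analysis (whether the split piece came back as a source or a sink, and whether the highest-capacity pendant was the boundary one). Because each chunk is a contiguous block with total $y$-mass exactly $1$, the transit through any spine vertex is at most $1$ by construction, and the endpoint condition holds by the choice of sink; safety then handles the intermediate spine vertices. Your capacity-stratified plan is a genuinely different organization and might be completable, but as written it is missing precisely the argument that the paper's chunking supplies for free.
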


\begin{proof}
We present a recursive procedure which constructs a desired rounding flow.
Note that some recursive calls of the procedure might potentially involve 
infeasible assignments $(x',y')$, however we prove that if the initial
call gives the procedure a safe $\delta$-caterpillar structure, then as a result we obtain a valid rounding flow.

Let us describe a procedure which is given a caterpillar structure $(P,P')$ together with an assignment $y$ (the procedure does not need 
the $x$ part of an assignment).
Denote $P=(v_1,\ldots,v_p)$ and $P'=(v_0',\ldots,v_{p+1}')$.
If $V(P')=\emptyset$ then we simply return the empty rounding flow.
Otherwise let $i$ be the smallest integer such that the sum of $y$-values of $X=\{v_0',\ldots,v_i'\} \setminus \nil$ 
is at least one (such $i$ always exists since the sum of all $y$-values in $V(P')$ is integral by (\ref{def:path:point8}) of Def.~\ref{def:path-structure}).
Note that since all vertices in $V(P')$ have fractional $y$-values we have $i>0$.
Let $0 \le i_0 \le i$ be an index such that $v_{i_0}'\not=\nil$ and $v_{i_0}'$ has the biggest capacity in $X$.
Let $\alpha=\sum_{v \in X} y_v$.
If $\alpha=1$ then we recursively construct a rounding flow $\cF$ from $S$ to $T$ for a smaller caterpillar structure $((v_{i+1},\ldots,v_{p}),(\nil,v_{i+1}',\ldots,v_{p+1}'))$
and (i) add to $S$ the set of vertices $X \setminus \{v_{i_0}'\}$ (ii) add to $T$ the vertex $v_{i_0}'$ (iii) 
for each $v_j'\in X\setminus \{v_{i_0}'\}$ add to $\cF$ 
a flow path $(y_{v_j'}, v_j', v_j, \ldots, v_{i_0}, v_{i_0}')$.
In this case we return $\cF$ as a desired rounding flow for $(P,P')$.
Hence from now on we assume $\alpha > 1$ and $\alpha - 1 < y_{v_{i}'}$.
Consider two cases: $i_0 < i$ and $i_0 = i$.

First let us assume that $i_0 < i$.
We store $z:=y_{v_i'}$ and temporarily set $y_{v_i'}=\alpha-1$.
Next recursively construct a rounding flow $\cF$ from $S \subseteq V(P'')$ to $T \subseteq V(P'')$ for a smaller caterpillar structure $((v_i,\ldots,v_p),P'')$, where
$P''=(\nil,v_i',\ldots,v_{p+1}')$ (note that the sum of $y$-values in $P''$ is integral).
Now consider two cases:
\begin{itemize}
  \item if $v_i' \in S$ then: (i) add to $S$ vertices from $X\setminus \{v_i', v_{i_0}'\}$ (ii) add to $T$ the vertex $v_{i_0}'$ (iii) 
  for each $v_j'\in X\setminus \{v_{i_0}', v_i'\}$ add to $\cF$ 
  a flow path $(y_{v_j'}, v_j', v_j, \ldots, v_{i_0}, v_{i_0}')$ (iv) add to $\cF$ a flow path $(z-y_{v_i'}, v_i', v_i, \ldots, v_{i_0}, v_{i_0}')$ (v) set
   $y_{v_i'}:=z$ (vi) return $\cF$.
  \item if $v_i' \in T$ then: (i) add to $S$ vertices from $X\setminus \{v_i', v_{i_0}'\}$ (ii) add to $T$ the vertex $v_{i_0}'$ 
  (iii) out of the flow paths in $\cF$ that end in $v_i'$ leave only that many, that send exactly $1-z$ units of flow 
  and reroute the rest paths to $v_{i_0}'$ through vertices $v_{i-1},v_{i-2},\ldots,v_{i_0}$ 
  (iv) for each $v_j'\in X\setminus \{v_{i_0}',v_i'\}$ add to $\cF$ 
  a flow path $(y_{v_j'}, v_j', v_j, \ldots, v_{i_0}, v_{i_0}')$ (v) return $\cF$.
\end{itemize}

Now assume that $i_0=i$. We create a smaller caterpillar structure $((v_a,v_{i+1},v_{i+2},\ldots,v_{p})$, \linebreak $(\nil, v_a', v_{i+1}', \ldots, v_{p+1}'))$,
where $v_a,v_a'$ are two newly created vertices with $y_{v_a'}:=\alpha-1$ and $L(v_a'):=L(v_a):=L(v_{i_1}')$,
where $v_{i_1}'$ is the second biggest capacity vertex in the set $X$.
Next run recursively our procedure on the newly created caterpillar structure to obtain a rounding flow $\cF$ from $S$ to $T$.
Again, consider two cases:
\begin{itemize}
  \item if $v_a' \in S$ then: (i) set $S := (S \setminus \{v_a'\}) \cup (X \setminus \{v_i'\})$ (ii) set $T := T \cup \{v_i'\}$
  (iii) change in $\cF$ all the paths that start in $v_a'$ to start in $X \setminus \{v_i'\}$
  (iv) add to $\cF$ paths that start in $X$ and transfer $1-y_{v_i'}$ units of flow from $X$ to $v_i'$
  (v) return $\cF$.
  \item if $v_a' \in T$ then: (i) set $S := S \cup (X \setminus \{v_i', v_{i_1}'\})$ (ii) set $T := (T \setminus \{v_a'\}) \cup \{v_i', v_{i_1}'\}$
  (iii) reroute some of the flow paths from $\cF$ that end in $v_a'$ to that transfer exactly $1-y_{v_i'}$
  units of flow to $v_{i'}$ (that is remove $v_a'$ as the last vertex on those paths and 
      extend the paths by $v_i,v_i'$) (iv) reroute all the remaining flow paths in $\cF$ that end in $v_a'$
  to $v_{i_1}'$ (that is remove $v_a'$ and extend those paths by $v_i,v_{i-1},\ldots,v_{i_1},v_{i_1}'$)
  (v) for each $v_j'\in X\setminus \{v_i',v_{i_1}'\}$ add to $\cF$ 
  a flow path $(y_{v_j'}, v_j', v_j, \ldots, v_{i_1}, v_{i_1}')$ (v) return $\cF$.
\end{itemize}

Finally we prove that if the procedure receives a safe caterpillar structure then
it returns a desired rounding flow.
The only property of the rounding flow that needs detailed analysis is the
assumption that each internal vertex of a flow path has capacity not smaller
than its the capacity of its starting point.
Let us assume that there exists a path in $\cF$
that starts in $v_a'$, goes though $v_b$ and ends in $v_c'$,
where $L(v_c') \ge L(v_a') > L(v_b)$.
This contradicts the assumption that $\cP$ is safe because $v_b \in \Gamma(\cP)$.
\end{proof}

The following theorem summarizes Sections~\ref{sec:path-structure}, \ref{sec:chain-shifting}, \ref{sec:separable}, \ref{sec:rounding-flow}.

\begin{theorem}
\label{thm:integral-y}
For a connected graph $G$, if LP1 has a feasible solution then
we can find a $c$-feasible solution with integral $y$-values.
\end{theorem}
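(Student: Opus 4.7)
The plan is to compose the results established in Sections~\ref{sec:path-structure}--\ref{sec:rounding-flow} into a single rounding procedure. Given the initial feasible solution $(x,y)$ to LP1, I would first invoke Lemma~\ref{lem:safe-path-structures} to obtain a $c_0$-feasible solution $(x',y')$ together with a collection $\mathcal{S}$ of pairwise vertex-disjoint safe $\delta_0$-caterpillar structures, for some absolute constants $c_0$ and $\delta_0$, such that every vertex outside $\bigcup_{(P,P')\in\mathcal{S}} (V(P)\cup V(P'))$ already has an integral $y$-value. Consequently, integralising the $y$-values everywhere reduces to integralising them on each caterpillar structure separately.

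For each safe caterpillar structure $(P,P')\in\mathcal{S}$ I would then apply Lemma~\ref{lem:rounding-flow} to produce a rounding flow $\cF_{(P,P')}$ whose auxiliary graph $G_{\cF_{(P,P')}}$ has adjacent vertices within $G$-distance at most $\delta_0$. By Definition~\ref{def:rounding-flow}, every vertex of $V(P')$ is classified either as a source (whose outgoing flow sums to its $y$-value) or as a sink (whose incoming flow sums to $1-y$), while the interior path vertices have $y$-value exactly $1$ and all flow paths stay within $V(P)\cup V(P')$. Performing chain shifting according to $\cF_{(P,P')}$ (Lemma~\ref{lem:chain-shifting}) then drives every source $y$-value to $0$, every sink $y$-value to $1$, and leaves every interior $y$-value equal to $1$, so after this operation the caterpillar structure carries only integral $y$-values.

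Because the structures in $\mathcal{S}$ are vertex-disjoint and each rounding flow is confined to its own structure, the chain shifts for distinct structures neither interfere with one another nor affect any of the already-integral $y$-values elsewhere. Hence, processing the structures one at a time, all $y$-values become integral. Lemma~\ref{lem:chain-shifting} additionally guarantees that a single chain shift grows the $\radius$ at any vertex by at most $\delta_0$ beyond the maximum $\radius$ of its in-neighbors in $G_{\cF_{(P,P')}}$, and since these in-neighbors lie inside the same safe caterpillar structure and have pre-shift radius at most $c_0$, the final assignment is $(c_0+\delta_0)$-feasible. Setting $c := c_0+\delta_0$ yields the desired constant.

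The main conceptual difficulty, namely constructing the rounding flow in the face of non-uniform hard capacities, has already been absorbed into Lemma~\ref{lem:safe-path-structures}: safeness of a caterpillar structure is exactly the hypothesis that allows Lemma~\ref{lem:rounding-flow} to route flow through interior vertices without violating the capacity-ordering requirement of a $y$-flow. Once that reduction is in hand, the proof of Theorem~\ref{thm:integral-y} is a clean concatenation of the preceding lemmas, with only the routine bookkeeping of radius increments remaining.
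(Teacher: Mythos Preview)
Your proposal is correct and follows essentially the same approach as the paper: apply Lemma~\ref{lem:safe-path-structures} to reduce to safe caterpillar structures, obtain a rounding flow for each via Lemma~\ref{lem:rounding-flow}, and chain-shift using Lemma~\ref{lem:chain-shifting} to make all $y$-values integral, with the final radius bound coming directly from the $(\delta+d)$ estimate in Lemma~\ref{lem:chain-shifting}. Your write-up is somewhat more explicit about the vertex-disjointness and the radius bookkeeping than the paper's own proof, but the argument is the same.
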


\begin{proof}
Using a feasible solution to LP1,
by Lemma~\ref{lem:safe-path-structures},
we obtain a $c$-feasible solution $(x',y')$,
together with a set of vertex disjoint safe $\delta$-caterpillar structures $\mathcal{S}$, such that vertices that do not belong to any caterpillar structure in $\mathcal{S}$
have integral $y$-value in $(x',y')$.
Next by Lemma~\ref{lem:rounding-flow} for each $\delta$-caterpillar structure $(P,P') \in \mathcal{S}$ we find a rounding flow $\cF_{(P,P')}$.
Finally for each $\delta$-caterpillar structure $(P,P')$ we perform 
chain shifting with respect to $\cF_{(P,P')}$, and by Lemma~\ref{lem:chain-shifting}
we obtain a $c'$-feasible solution $(x'',y'')$ to LP1.

By Lemma~\ref{lem:safe-path-structures}, vertices outside of $\mathcal{S}$
have integral $y$-value in $(x',y')$. Moreover by Definition~\ref{def:rounding-flow},
after chain shifting all the vertices in each caterpillar structure of $\mathcal{S}$ have integral $y$-values in $(x'',y'')$.
\end{proof}

\subsection{Rounding $x$-values}
\label{sec:rounding-x}

In this section we show how to extend Theorem~\ref{thm:integral-y} to obtain not only integral $y$-values, but also integral $x$-values.
The following lemma is standard (using network flows).

\begin{lemma}
Let $(x,y)$ be a $\delta$-feasible solution such that all $y$-values are integral.
There is a polynomial time algorithm that creates a $\delta$-feasible solution
which has both $x$- and $y$-values integral.
\end{lemma}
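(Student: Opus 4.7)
The plan is to reduce the problem to finding an integral maximum flow in an auxiliary bipartite network, and then invoke the integrality of maximum flows in networks with integral capacities.

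Let $S = \{u \in V : y_u = 1\}$ be the set of open centers, which is well defined since all $y$-values are integral and, by constraint~(\ref{con1}), $|S|=k$. Observe that by constraint~(\ref{con2}), for every $u \notin S$ and every $v$ we have $x_{u,v}=0$, so all the fractional mass of the assignment is routed through vertices in $S$. I would construct a flow network $H$ as follows: introduce a source $s$ and a sink $t$; for every client $v \in V$ add an arc $s \to v$ of capacity $1$; for every center $u \in S$ add an arc $u \to t$ of capacity $L(u)$; and for every pair $(v,u) \in V \times S$ with $\dist_G(u,v) \le \delta$, add an arc $v \to u$ of capacity $1$ (equivalently, we only add arcs on the support allowed by constraint~(\ref{con6}) in $G^\delta$).

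The key observation is that the fractional solution $(x,y)$ induces a fractional $s$--$t$ flow of value $|V|$ in $H$: send $1$ unit along each arc $s \to v$, then $x_{u,v}$ units along each arc $v \to u$ (this is $\le 1$ by constraint~(\ref{con2}) since $y_u \in \{0,1\}$, and is zero unless the arc exists by constraint~(\ref{con6})), and then $\sum_{v} x_{u,v} \le L(u)$ units along $u \to t$ by constraint~(\ref{con3}). Thus the maximum flow value in $H$ is exactly $|V|$. Since all arc capacities are integers, by the integrality theorem for maximum flows there is an integral maximum flow $f^*$ of value $|V|$, which can be computed in polynomial time by any standard max-flow algorithm (or directly by bipartite $b$-matching). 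From $f^*$ we read off integral $x^*_{u,v} \in \{0,1\}$ by setting $x^*_{u,v} = f^*(v \to u)$, and we keep the original (already integral) $y$-values.

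It remains to verify that $(x^*, y)$ is $\delta$-feasible for LP1. Constraint~(\ref{con1}) holds since $y$ is unchanged; constraint~(\ref{con4}) holds because each $s \to v$ arc is saturated in $f^*$, forcing exactly one outgoing unit from $v$; constraint~(\ref{con3}) holds because the $u \to t$ arc has capacity $L(u) = L(u) \cdot y_u$; constraint~(\ref{con2}) holds because $x^*_{u,v} \le 1 = y_u$ whenever $u \in S$ and $x^*_{u,v} = 0 = y_u$ otherwise; and constraints~(\ref{con5}),~(\ref{con6}),~(\ref{con7}) hold by construction of $H$. There is no real obstacle here — once the $y$-values are fixed, the residual problem is literally a bipartite $b$-matching feasibility question, and the fractional LP solution is a witness that this matching polytope is nonempty, so its integrality (equivalently, total unimodularity of the bipartite incidence matrix) finishes the argument.
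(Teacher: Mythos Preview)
Your proof is correct and takes essentially the same approach as the paper, which simply remarks that the lemma ``is standard (using network flows)'' and omits the details; the paper's intended argument (visible in the source but commented out) builds the same bipartite flow network with centers on one side and clients on the other, just with source and sink reversed relative to your construction.
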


As a consequence of Theorem~\ref{thm:integral-y} and the above lemma the proof of Theorem~\ref{thm:main} follows.

\section{Soft capacities}
\label{sec:soft}
\newcommand{\softconst}{11}

This section is devoted to the soft capacities variant,
where one can open an arbitrary number of centers in a node.
We present an $\softconst$-approximation algorithm
for the soft capacitated version of the $k$-center problem
with non-uniform capacities, i.e. we prove Theorem~\ref{thm:soft-apx}.
Let us recall that for this problem the LP relaxation for the natural IP,
which we denote as LP1 has the following form.

\begin{align}
\label{scon1}    & \textstyle{\sum_{u \in V}y_{u} = k;} &  & \\
\label{scon2}    & \textstyle{x_{u,v} \le y_u} & \textstyle{\forall u,v \in V} & \\
\label{scon3}    & \textstyle{\sum_{v \in V} x_{u,v} \le L(u)y_u} & \textstyle{\forall u \in V} & \\
\label{scon4}    & \textstyle{\sum_{u \in V} x_{u,v} = 1} & \textstyle{\forall v \in V} & \\
\label{scon6}    & \textstyle{x_{u,v} = 0} & \textstyle{\forall u,v \in V\ \dist_G(u,v) > 1} &  \\
\label{scon7}    & \textstyle{x_{u,v} \ge 0} & \textstyle{\forall u,v \in V} & 
\end{align}

The following lemma is proved by Khuller and Sussman.

\begin{lemma}[\cite{KS}]
\label{lem:ks}
For a connected graph $G=(V,E)$ one can in polynomial time construct
an inclusionwise maximal independent set $S \subseteq V$ in $G^2$,
such that $G^3[S]$ is connected.
\end{lemma}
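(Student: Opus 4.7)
The plan is to build $S$ by a single greedy sweep along a BFS ordering of $G$, and then to argue connectivity of $G^3[S]$ by an inductive trace along BFS tree edges.

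Concretely, I would pick an arbitrary root $r \in V$, compute a BFS tree $T$ of $G$ rooted at $r$, and fix an ordering $v_1=r, v_2, \ldots, v_n$ of $V$ in non-decreasing BFS distance from $r$ (ties broken arbitrarily). Initialize $S := \emptyset$ and for $i = 1, \ldots, n$ add $v_i$ to $S$ iff $\dist_G(v_i, s) \ge 3$ for every $s$ currently in $S$. Equivalently, $v_i$ is added iff it has no neighbor in $G^2$ among the already-chosen vertices. This procedure runs in polynomial time (the distance tests are one BFS per added vertex, or we can maintain a ``blocked'' marker on $N_G[N_G[\cdot]]$). Maximality of $S$ as an independent set in $G^2$ is immediate from the rule: no two chosen vertices are within distance $2$ in $G$, and every unchosen $v \in V$ was blocked at the moment of its consideration by some $s \in S$ with $\dist_G(v,s) \le 2$, and this $s$ remains in $S$ forever, so $S$ cannot be extended.

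For connectivity of $G^3[S]$, I would show by induction on the BFS position that every $s \in S \setminus \{r\}$ has some $s' \in S$ with $\dist_G(s,s') \le 3$ that was added strictly earlier; since $r \in S$ (vacuously added in the first step), this inductively links every $s \in S$ to $r$ in $G^3[S]$. Given $s \in S \setminus \{r\}$, let $p$ be the parent of $s$ in $T$, so $\dist_G(s,p) = 1$ and $p$ appears before $s$ in the BFS order. Two cases: if $p \in S$, take $s' := p$, so $\dist_G(s,s') = 1 \le 3$. If $p \notin S$, then at the moment $p$ was considered, some $s' \in S$ already satisfied $\dist_G(p,s') \le 2$, and this $s'$ was added before $p$ and hence before $s$; then by the triangle inequality $\dist_G(s,s') \le \dist_G(s,p) + \dist_G(p,s') \le 1 + 2 = 3$.

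There is no real obstacle here, the one place to be a little careful is making sure the witness $s'$ in the $p \notin S$ case indeed precedes $s$ in the order (which it does, since $s'$ was added before $p$, and $p$ precedes $s$), so that the induction on BFS position is well-founded. With that observation, both conclusions (maximal independent set in $G^2$, connected in $G^3$) follow, and the whole construction is clearly polynomial.
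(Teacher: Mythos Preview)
The paper does not supply a proof of this lemma; it is quoted verbatim from Khuller--Sussmann~\cite{KS}. Your argument is correct and is essentially the construction used in~\cite{KS}: greedily build $S$ in BFS order, then for each $s\in S\setminus\{r\}$ use its BFS parent $p$ to locate an earlier $s'\in S$ within distance~$3$.

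One small remark: your case ``$p\in S$'' is in fact vacuous. If $p\in S$ and $s\in S$ with $\dist_G(s,p)=1$, then $s$ and $p$ would be adjacent in $G^2$, contradicting independence of $S$; so only the second case ever occurs. This does not affect correctness---the surviving case already gives what you need---but you may as well drop the redundant branch.
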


The proof of the main theorem of this section is inspired by
the algorithm of~\cite{KS} for uniform soft capacities.

\begin{theorem}
There is a polynomial time algorithm, which given an instance of the soft-capacitated $k$-center problem for a connected graph, and a fractional feasible solution for LP1,
can round it to an integral solution that uses non-zero $x_{u,v}$ variables 
for pairs of nodes with distance at most $\softconst$.
\end{theorem}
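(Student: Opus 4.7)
The plan is to adapt the Khuller--Sussman approach for uniform soft capacities to the non-uniform setting by consolidating centers at the highest-capacity representative of each cluster. First, apply Lemma~\ref{lem:ks} to obtain a maximal independent set $S \subseteq V$ in $G^2$ such that $G^3[S]$ is connected. For each $s \in S$, designate $f(s)$ to be a vertex of maximum capacity in $N_G[s]$; the rounded solution will place centers only at the sites $F = \{f(s) : s \in S\}$. Partition $V$ into clusters $V_s = \{u : s(u) = s\}$, where $s(u)$ is a closest element of $S$ to $u$ (so $\dist_G(u, s(u)) \le 2$), and set $Z_s = \sum_{u \in V_s} y_u$; by LP feasibility $\sum_{s \in S} Z_s = k$, so $Z_s$ is a natural fractional count of how many centers should be opened at $f(s)$.

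Second, I would establish local sufficiency: opening $\lceil Z_s \rceil$ copies at $f(s)$ provides enough capacity to serve all of $V_s$. The argument uses LP constraints~(\ref{scon2})--(\ref{scon4}) together with the fact that $L(f(s)) \ge L(w)$ for every $w \in N_G[s]$, to bound $|V_s| \le L(f(s)) \cdot \lceil Z_s \rceil$. Because $\sum_{s} \lceil Z_s \rceil$ may exceed $k$ by up to $|S|$, a global rebalancing step reconciles the total: build a spanning tree $T$ of $G^3[S]$ and, traversing it, transfer fractional surpluses between neighbors so that the final integer counts $\hat{y}_{f(s)} \in \{\lfloor Z_s \rfloor, \lceil Z_s \rceil\}$ satisfy $\sum_{s} \hat{y}_{f(s)} = k$ while keeping enough capacity in each cluster.

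Finally, bound the assignment distance. In the unreconciled solution, each $u \in V_s$ would be served at $f(s)$ at distance at most $3$. A client reassigned during rebalancing to $f(s')$ for some neighbor $s'$ of $s$ in $T$ travels $\dist_G(u, s) + \dist_G(s, s') + \dist_G(s', f(s')) \le 2 + \dist_G(s, s') + 1$; since two hops in $G^3[S]$ sum to at most $6$ in $G$, two rebalancing hops can be made to yield the target bound of $\softconst$. The main obstacle is arguing that a careful flow-routing along $T$ ensures no client is propagated through more than a constant number of hops, using $|Z_s - \hat{y}_{f(s)}| < 1$ to cap the propagation depth of any single unit of surplus. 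This is analogous to the chain-shifting of Section~\ref{sec:chain-shifting}, but should be executable directly on the tree $T$, since the soft-capacity relaxation frees us from the intricate caterpillar structures needed in the hard-capacity case.
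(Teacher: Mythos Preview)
Your local-sufficiency claim $|V_s| \le L(f(s)) \cdot \lceil Z_s \rceil$ is false, and this breaks the scheme. The trouble is that $f(s)$ sees only capacities in $N_G[s]$, while vertices of $V_s$ may be served in the LP by a high-capacity vertex at distance $2$ from $s$ whose $y$-mass lies in a \emph{neighbouring} cluster and whose capacity is invisible to $L(f(s))$. Concretely, let $s$ be adjacent to $a_1,\dots,a_M$, each $a_i$ adjacent to $b$, and $b$ adjacent to $s'$, with $L(s)=L(a_i)=1$ and $L(b)=M+2$. Then $S=\{s,s'\}$ is a valid output of Lemma~\ref{lem:ks}, $V_s=\{s,a_1,\dots,a_M\}$, and LP1 is feasible with $k=2$ by taking $y_s=y_b=1$, all other $y$-values zero, and letting $b$ cover every vertex except $s$. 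Here $Z_s=1$ and $L(f(s))=1$, yet $|V_s|=M+1$; no amount of tree rebalancing helps, because you have committed the cluster's centers to a capacity-$1$ location.

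The paper keeps your $Z_s$ (there called $u(s)$) and your tree-based integralization, but places the $u(s)$ centers at $s_L=\arg\max_{\dist_G(v,s)\le 6} L(v)$ rather than in $N_G[s]$. The large search radius is exactly what absorbs the example above: after one tree hop, any LP server $w$ whose $y$-mass has been charged to $s$ satisfies $\dist_G(w,s)\le 5$, so $L(w)\le L(s_L)$; rerouting the LP assignment $x_{w,v}$ through $s$ then gives a fractional (hence, by flow integrality, integral) assignment of all clients to the opened centers within distance $11$. Your secondary worry about propagation depth has the one-line resolution you nearly state: since $Z_s\ge\sum_{u\in N_G[s]} y_u\ge 1$ for every $s$ and each node forwards strictly less than $1$ to its parent, every node can pay its parent out of its own initial mass, so mass received from children never moves again and a single hop suffices.
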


\begin{proof}
First, we construct the set $S$ using the algorithm from Lemma~\ref{lem:ks}.
Observe, that by constraints (\ref{scon4}) and (\ref{scon2})
for each $v \in V$ we have 
\begin{equation*}
\sum_{u \in N[v]} y_u \ge \sum_{u \in N[v]} x_{u,v} = 1\,.
\end{equation*}
Since $S$ is independent in $G^2$, by constraint (\ref{scon1})
we infer that $|S| \le k$.

We prove the theorem in two steps. First,
we construct a function $\phi : V \rightarrow S$,
such that for each $v \in V$ we have $\dist_G(v,\phi(v)) \le 5$
and for each $s \in S$ we have $|\phi^{-1}(s)| \le \max_{v \in V, \dist_G(v,s) \le 6} L(v)$,
that is we assign each vertex to an element of $S$ within distance $5$,
but we increase the capacity of each $s$ to the maximum capacity reachable within distance $6$.
In the second step we reassign vertices in such a way, that the maximum distance
is at most $11$, and at the same time capacity constraints are satisfied.

Let $T$ be any spanning tree of $G^3[S]$, rooted 
at an arbitrary vertex $r \in S$.
Consider the following procedure, which constructs a function $u : S \rightarrow \mathbb{R}_+\cup \{0\}$ assigning
a tentative fractional number of centers to open in each of the vertices in $S$.
\begin{enumerate}
  \item Initially set $u(s)=0$ for each $s \in S$.
  \item \label{step2} For each vertex $v \in V$, if $\dist_G(v,S) \le 1$, then by the fact that $S$ is independent in $G^2$ 
  there exists exactly one vertex $s_v \in S$, such that $\dist_G(v,s_v) \le 1$, otherwise (when $\dist_G(v,S) = 2$),
  as $s_v$ set any vertex from $S$ within distance $2$ from $v$. 
  \item \label{step3} For each vertex $v \in V$ increase $u(s_v)$ by $y_v$ (note that after this operation for each $s \in S$ we have $u(s) \ge 1$ and $\sum_{s \in S} u(s) = k$).
  \item \label{step4} For each vertex $s \in S$ in a bottom-up order with respect to $T$, let $x=u(s) - \lceil u(s) \rceil$, decrease $u(s)$ by $x$
  and increase $u(p(s))$ by $x$, where $p(s)$ is the parent of $s$ in the tree $T$ (assume that $p(r)=r$).
\end{enumerate}
The last step of the above process ensures, that the function $u$ has only integral positive values and moreover $\sum_{s\in S} u(s) = k$.
We claim that if we open exactly $u(s)$ centers in a vertex $s$, for each $s \in S$, then there exists 
an assignment $\phi: V \rightarrow S$,
such that for each $v \in V$ we have $\dist_G(v,\phi(v)) \le 5$
and for each $s \in S$ we have $|\phi^{-1}(s)| \le \max_{v \in V, \dist_G(v,s) \le 6} L(v)$.
Note that if we know that it exists, then we can find it in polynomial time by using maximum flow computation.
To show the existence of $\phi$, observe that it is enough to show a function $f:V \times S \rightarrow \mathbb{R}_+\cup \{0\}$, satisfying:
\begin{itemize}
  \item for each $v \in V$ we have $\sum_{s \in S} f(v,s) = y_v$,
  \item for each $s\in V$ we have $\sum_{v \in V} f(v,s) = u(s)$,
  \item for each $v \in V,s \in S$ if $f(v,s) > 0$, then $\dist_G(v,s) \le 5$.
\end{itemize}
Less formally, the function $f(v,*) : S \rightarrow \mathbb{R}_+ \cup \{0\}$ is a distribution of the value $y_v$ among
vertices of $S$ within distance $5$. 
Such a function guarantees that we can fractionally cover all the vertices within distance $5$ when we open
an integral number of centers $u(s)$ in each vertex $s \in S$, therefore we can also cover vertices of $V$ integrally, which proves
the existence of the desired assignment $\phi$.
Observe, that we can construct the function $f$ while performing the bottom-up process in Step~\ref{step4}, 
where for each vertex $v \in V$ we split the value $y_v$ between $f(v,s_v)$ and $f(v,p(s_v))$. 
It is always possible, because while going up the tree we can ensure that the part of $y_v$, that was assigned to $f(v,p(s_v))$ remains
in $p(s_v)$, since we send up $u(s) - \lceil u(s) \rceil < 1$ and after Step~\ref{step3} we have $u(s') \ge 1$ for each $s' \in S$.
Consequently, we know that there exists the desired assignment $\phi$.

Finally, we construct the final assignment $\phi_0 : V \rightarrow S'$ as follows. 
For each $s \in S$ let $$s_L = \arg \max_{v \in V, \dist_G(s,v) \le 6}L(v)\,,$$ open $u(s)$ centers in $s_L$
and assign all the vertices of $\phi^{-1}(s)$ to $s_L$ in $\phi_0$.
By the properties of $\phi$ we infer that in this way we obtain a multiset $S'$ of exactly $k$ centers
and an assignment $\phi_0$, satisfying $\dist_G(v,\phi_0(v)) \le 11$ for each $v \in V$
and $|\phi^{-1}(s')| \le L(s')$ for each $s' \in S$.
\end{proof}

\begin{corollary}
The integrality gap of LP1 for connected graphs is at most $\softconst$
and there is a $\softconst$-approximation algorithm for connected graphs.
\end{corollary}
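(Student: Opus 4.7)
The plan is to adapt the Khuller-Sussman style argument for uniform capacities to the setting of non-uniform soft capacities, paying roughly a factor-of-two penalty in the radius in exchange for the capacity freedom. I would start by invoking Lemma~\ref{lem:ks} to produce a maximal independent set $S \subseteq V$ of $G^2$ such that $G^3[S]$ is connected. Independence in $G^2$ makes the closed $G$-neighborhoods of vertices of $S$ pairwise disjoint, while constraints (\ref{scon2}) and (\ref{scon4}) give $\sum_{u \in N[v]} y_u \geq 1$ for every $v$; together with (\ref{scon1}) this forces $|S| \leq k$.

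Next I would associate each $v \in V$ with a nearest $s_v \in S$ (at $G$-distance at most $2$, by maximality) and define the fractional center counts $u(s) = \sum_{v : s_v = s} y_v$, which sum to $k$ and satisfy $u(s) \geq 1$ for every $s \in S$ (since the closed $G$-neighborhood of $s$ already contributes at least $1$). To integralize the $u(s)$ values, root a spanning tree $T$ of $G^3[S]$ at an arbitrary vertex and process $S$ in bottom-up order, pushing the fractional part $u(s) - \lfloor u(s) \rfloor$ from each non-root $s$ onto its parent $p(s)$. Because $u(s) \geq 1$ is preserved throughout this pass, the resulting values are positive integers still summing to $k$.

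The third step is to realize an actual assignment $\phi : V \to S$ with $\dist_G(v,\phi(v)) \leq 5$ whose cluster sizes are controlled by local capacities. Every $v$'s mass $y_v$ ends up either at $s_v$ or at $p(s_v)$ after the push-up procedure, and this naturally induces a fractional bipartite flow from vertices to $\{s_v, p(s_v)\}$ that saturates each $s$ to exactly $u(s)$; an integral $\phi$ with the same support follows from integrality of bipartite max flow. The realized radius is at most $2+3=5$ since $s_v$ is within $G$-distance $2$ of $v$ and tree-edges live in $G^3[S]$. Finally, for each $s \in S$ I would open $u(s)$ copies of the vertex $s_L$ of maximum capacity within $G$-distance $6$ of $s$ and reassign $\phi^{-1}(s)$ to those copies, giving a final radius of $5+6=11$.

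The main obstacle will be making the capacity-tracking precise: one must verify that $|\phi^{-1}(s)|$ is bounded by roughly the largest $L$-value in the $6$-ball around $s$, so that $u(s)$ copies at $s_L$ can absorb the entire cluster. The idea is to combine the disjointness of the $S$-neighborhoods with constraint (\ref{scon3}) so that every unit of $y$-mass charged to $s$ drags along enough LP capacity witnessed by some nearby high-$L$ vertex, which then serves as $s_L$. Soft capacities are essential here: once such an $s_L$ is identified, replicating it $u(s)$ times is free in terms of $k$, which would fail outright under hard capacities, and this is exactly what drives the gap between the simple $11$ here and the much larger constant obtained in Section~\ref{sec:rounding}.
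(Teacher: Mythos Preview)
Your proposal is essentially the paper's own argument: build $S$ via Lemma~\ref{lem:ks}, define the fractional center counts $u(s)$, integralize them by pushing fractional parts up a spanning tree of $G^3[S]$, realize a radius-$5$ assignment $\phi$ from the resulting fractional distribution via a flow argument, and then move each cluster to the largest-capacity vertex $s_L$ within distance $6$ of $s$, opening $u(s)$ copies there. You also correctly single out the capacity bound on $|\phi^{-1}(s)|$ as the step needing care, which is exactly where the paper's proof concentrates its effort.
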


Since Theorem~\ref{thm:conn} shows that 
a $c$-approximation algorithm for connected graphs implies a
$c$-approximation algorithm for general graphs, the proof of 
Theorem~\ref{thm:soft-apx} follows.

\section{Uniform capacities}
\label{sec:uniform} 

Here, we prove Theorem~\ref{thm:gap-upper-constant}, i.e. we show $5$ and $6$ upper
bounds for the integrality gap of LP1 for uniform-soft-capacities
and uniform-hard-capacities respectively, which is a counterposition to 
lower bounds of $4$ and $5$ as stated in Theorem~\ref{thm:gap-constant}.

The algorithm of Khuller and Sussmann~\cite{KS} gives $6$- and $5$-approximation algorithm for the uniform capacitated $k$-center problem,
for hard and soft capacities respectively.
It is possible to reformulate this algorithm to make it a rounding algorithm for LP1.
However to avoid rewriting the whole algorithm, we show that if the algorithm of \cite{KS}
does not produce a solution (which means that there is no solution which uses at most one hop),
then it produces a witness, showing that there is no solution using $k$ vertices and covering within distance
of at most one hop.
For a given instance $(G,k,L)$ of uniform capacitated $k$-center problem if the algorithm of Khuller and Sussmann
does not produce a solution, both in the case of hard and soft capacities, then it creates a set $V_0 \subseteq V(G)$,
such that:
\begin{itemize}
  \item[(i)] for any pair of vertices $u,v \in V_0$, the distance between $u$ and $v$ in $G$ is at least $3$,
  \item[(ii)] $|V_0|+\frac{|V'|}{L} > k$, where $V' = \{v\in V(G): \forall_{u\in V_0} \dist_G(u,v) \ge 3\}$.
\end{itemize}
Using the properties (i) and (ii) of $V_0$ we prove that there is no feasible solution for LP1.
By (i) and by constraints (\ref{con4}) and (\ref{con6}) of LP1 we have:
\begin{align}
\label{ineq4}
|V_0| = \sum_{v \in V_0} \sum_{u \in N_G[v]} x_{u,v} \le \sum_{v \in V_0} \sum_{u \in N_G[v]} y_u = \sum_{v \in N_G[V_0]} y_v\,,
\end{align}
where the last equality follows from (i).
We lower bound the sum of $y$-values of vertices of $N_G[V']=V \setminus N_G[V_0]$ as follows.
\begin{align}
\label{ineq5}
\frac{|V'|}{L} = \sum_{v \in V'} \sum_{u \in N_G[v]} \frac{x_{u,v}}{L} \le \sum_{u \in N_G[V']} y_u
\end{align}
The first equality follows from constraint (\ref{con4}) and (\ref{con6}) of LP1, whereas
the inequality follows from (\ref{con3}) of LP1.
Therefore by (\ref{ineq4}), (\ref{ineq5}) and (ii), we infer constraint (\ref{con1}) of LP1 is violated.
\begin{align*}
\sum_{v \in V}y_v = \sum_{v\in N_G[V']}y_v + \sum_{v \in N_G[V_0]}y_v \ge |V_0|+\frac{|V'|}{L} > k
\end{align*}
Hence there is no feasible solution for LP1, which finishes the proof of Theorem~\ref{thm:gap-upper-constant}.
\qed

\section{Integrality gap lower bounds}
\label{sec:gaps}

In this section we present lower bounds on the integrality gap of LP1, i.e. prove Theorem~\ref{thm:gap-constant}.

We start with a construction for uniform hard capacities.
Let $k \ge 24$ be an integer and set the uniform capacity as $L=k-1$.
Let $G'$ be a graph which consists of two adjacent vertices $a,b$ together with $L+2$ vertices adjacent to both $a$ and $b$.
Let $G$ be a graph composed of:
\begin{itemize}
  \item $k-6$ copies of the graph $G'$ denoted as $G_i$ for $i=1,\ldots,k-6$,
  \item a single star $S_{k-6}$, rooted at $r$, which has exactly $k-6$ leaves denoted as $\ell_1,\ldots,\ell_{k-6}$,
  \item $k-6$ vertices $x_i$ for $i=1,\ldots,k-6$,
  \item $k-6$ edges $x_i,\ell_i$,
  \item for each $i=1,\ldots,k-6$ two edges between $a_i,b_i \in V(G_i)$ and $x_i$.
\end{itemize}

\begin{figure}[h]
\begin{center}
\includegraphics[scale=1]{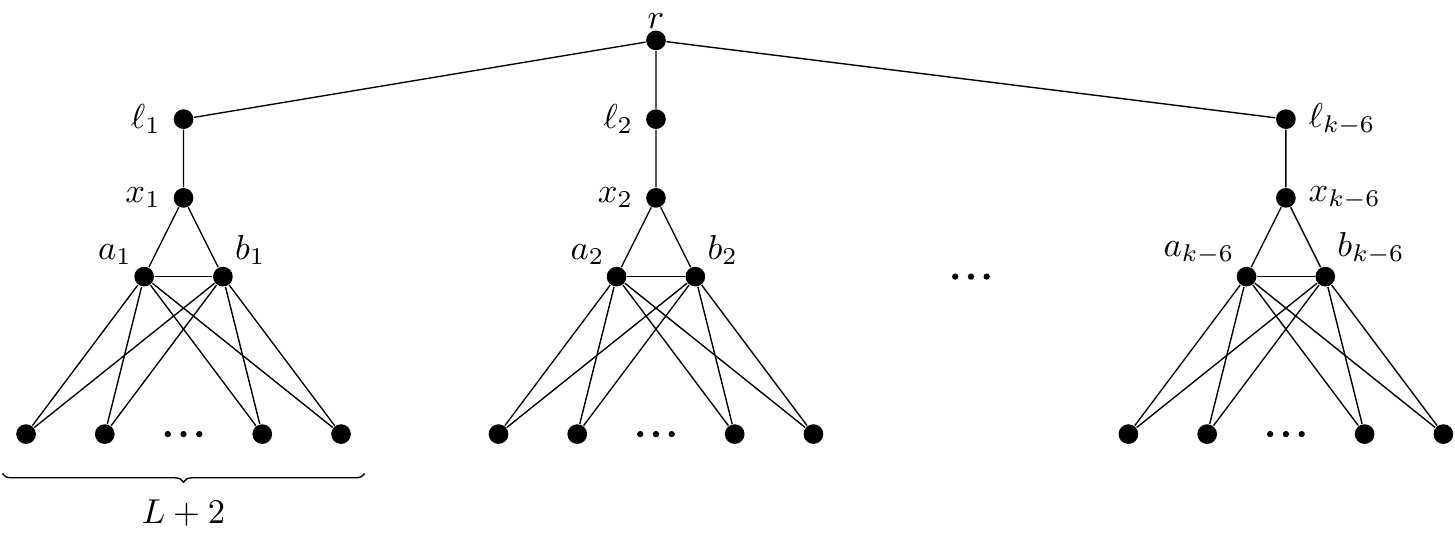}
\caption{The graph $G$ used in the proof of Theorem~\ref{thm:gap-constant}.
}
\label{fig6}
\end{center}
\end{figure}
Note that the graph $G$ is connected.
Observe that by setting the $y$-value equal to $1$ for the vertex $r$, and $y_{a_i}=y_{b_i}=\frac{L+5}{2L} \le 1$ for each $i=1,\ldots,k-6$,
we can set $x$ variables appropriately to obtain a feasible solution to the LP relaxation
since the sum of $y$ variables is equal to $1+2(k-6)\frac{L+5}{2L}=1+\frac{(k+4)(k-6)}{k-1} \le 1+\frac{(k-1)^2}{k-1} \le k$.

Let us assume that there is a hard-capacitated $k$-center in the graph $G^4$.
By $B_i$ (for $1 \le i \le k-6$) let us denote the set of vertices of $G_i$
of degree two.
Consider any $1 \le i \le k-6$, and let us focus on how vertices of $B_i$
can be covered.
Observe that the only vertices in $G$ that are within distance $4$
from $B_i$ are $V(G_i)\cup\{x_i,\ell_i, r\}$.
Since $|B_i| = L+2 > L$, at least two centers are opened in the set
$V(G_i)\cup\{x_i,\ell_i, r\}$, which mean
that at least one center is opened in $V(G_0)\cup \{x_i,\ell_i,r\}$.
Furthermore for different values of $i$, the sets $V(G_i)\cup\{x_i,\ell_i\}$
are disjoint.
Moreover if only one center is opened in the set $V(G_i) \cup \{x_i,\ell_i\}$,
then at least two vertices of $B_i$ have to be covered by $r$.
Hence even if a center is opened in $r$, then for at least $(k-6)-\frac{L}{2}$ values of $i$
we need to open at least two centers in the set $V(G_i)\cup\{x_i,\ell_i\}$.
Moreover we need to open at least one center in each of the sets $V(G_i)\cup\{x_i,\ell_i\}$, since otherwise $L+2$ vertices can not be covered.
In total there are at least $(k-6)+\big((k-6)-\frac{L}{2}\big)$ centers opened, but $(k-6)+\big((k-6)-\frac{L}{2}\big)=k+(k-12)-\frac{k-1}{2}=k+\frac{k-23}{2} > k$ since $k \ge 24$, 
a contradiction.

For the uniform-soft-capacitated case the analysis is even simpler since if $G^3$ admits a soft-capacitated $k$-center then 
without loss of generality there are at least two centers open in each of the sets $V(G_i)\cup\{x_i,\ell_i\}$.

To prove the lower bound for non-uniform capacities, in the above example we change capacities to zero
for all the vertices except $r$ and each of the $2(k-6)$ vertices $a_i$, $b_i$.
Note that those are the only vertices with non-zero $y$-value in the constructed feasible solution for LP1.
Moreover we have shown that in any integral solution there exists a vertex of $B_i$,
which has to be covered by a vertex of $V(G_j) \cup \{x_j,\ell_j\}$, for $j\not=i$,
and since the only vertices of non-zero capacity in this set are the vertices $a_j$, $b_j$,
we infer, that some vertex of $B_i$ has to be covered by a vertex at distance at least $7$.
Consequently the proof of Theorem~\ref{thm:gap-constant} follows. \qed

\section{$(3-\eps)$-approximation hardness}
\label{sec:lowerbound}

In this section we show, that it is not possible to approximate the $k$-center problem
with non-uniform capacities in polynomial time with approximation ratio $(3-\eps$) for constant $\eps > 0$.

\begin{theorem}
\label{thm:3-hardness}
If there exists a $(3-\eps)$-approximation algorithm for the $k$-center problem with 
non-uniform capacities, where all the non-zero capacities, then $P=NP$.
The theorem applies both to hard and soft capacitated version.
\end{theorem}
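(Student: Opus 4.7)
The plan is to give a direct reduction from \exactcover. Given an instance with universe $U$, $|U|=3n$, and a collection $\mathcal{C}$ of $3$-element subsets of $U$, construct the bipartite graph $G$ whose vertices are a ``set vertex'' $s_S$ for each $S \in \mathcal{C}$ and an ``element vertex'' $v_e$ for each $e \in U$, with an edge $s_S v_e$ iff $e \in S$. Assign $L(s_S) = 3$ for every set vertex and $L(v_e) = 0$ for every element vertex, and set $k = n$. All nonzero capacities are equal, so the instance falls into the class mentioned in the theorem. Standard preprocessing lets us assume every element is contained in at least one set (otherwise X3C is trivially a no-instance).

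For the completeness side, if $\mathcal{C}^\ast \subseteq \mathcal{C}$ is an exact cover of size $n$, open a center at each $s_S$ with $S \in \mathcal{C}^\ast$ and assign $v_e$ to the unique $s_S \in \mathcal{C}^\ast$ with $e \in S$. Every element travels distance $1$, and each $s_S$ receives exactly $3$ clients, so the capacity constraint holds and the solution has cost $1$. This works identically for the soft-capacitated version.

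The heart of the argument is the soundness side. Suppose that $\mathcal{C}$ admits no exact cover, yet there is a feasible solution of cost at most $2$. Because $L(v_e)=0$ forbids any center at an element vertex, every opened center sits at some $s_S$. Since $G$ is bipartite with sets on one side and elements on the other, every path from an element vertex to a set vertex has odd length: a set vertex is either at distance $1$ or at distance $\ge 3$ from any given $v_e$, and distance $2$ is impossible. Therefore each $v_e$ must in fact be assigned to some $s_S$ with $e \in S$. The $n$ chosen set vertices then cover all $3n$ elements, and because each has capacity $3$, they cover disjoint triples of elements, i.e.\ they form an exact $3$-cover, contradicting the no-instance assumption. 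For soft capacities, opening multiple copies of a single $s_S$ still only serves elements of $S$ at distance $1$, so the same pigeonhole argument goes through and again yields an exact cover, a contradiction.

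Consequently, a polynomial-time $(3-\epsilon)$-approximation would distinguish the two cases and decide \exactcover. The only subtlety I anticipate is the distance-$2$ step: one must resist the temptation to cover $v_e$ via a nearby element vertex at distance $2$, but bipartiteness plus $L(v_e)=0$ rules this out cleanly. Up to routine verification that $G$ can indeed be built in polynomial time, this establishes Theorem~\ref{thm:3-hardness} for both the hard- and soft-capacitated variants.
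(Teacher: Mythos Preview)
There is a genuine gap in the completeness direction. In the capacitated $k$-center problem as defined in the paper, the assignment $\phi$ is a function from \emph{all} of $V$ to the set of open centers, so every vertex---including set vertices---must be assigned somewhere and counts against some center's capacity. In your construction the graph has $|\mathcal{C}|+3n$ vertices, yet you open only $k=n$ centers each of capacity $3$, for a total capacity of $3n$. Hence even in the YES case no feasible assignment exists: the chosen centers cannot absorb the $|\mathcal{C}|$ set vertices (let alone themselves together with their three elements). Your completeness claim ``each $s_S$ receives exactly $3$ clients'' silently drops all set vertices from the client set.

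This is not a cosmetic issue that can be patched by bumping the capacity to $4$: you would still need to route every unopened set vertex $s_{S'}$ to some open center, and these are at distance $2$ (or more) from any open $s_S$, so they compete with element vertices for capacity and spoil the tight counting in soundness. The paper's reduction handles exactly this difficulty by adding, for every $S\in\mathcal{F}$, an auxiliary vertex $x_S$ with $3|\mathcal{F}|+1$ pendant neighbours and by replicating the universe $|\mathcal{F}|+1$ times; it then takes $k=|\mathcal{F}|+|U|/3$ so that one center sits at every $x_S$ (absorbing $x_S$, its pendants, and $S$ itself) and the remaining $|U|/3$ centers pick out the exact cover. The pendant gadgets are what force each $S$ to be opened at least once in any cost-$2$ solution, and the replication is what makes the ``extra'' openings encode an exact cover. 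Your bipartite parity observation is correct and useful, but by itself it does not survive once every vertex has to be served.
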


\begin{proof}
Let us assume that a $(3-\eps)$-approximation algorithm exists.
We present a reduction from the \exactcover problem, which is NP-complete.

\defproblemu{\exactcover}
{A set system $(\cF,U)$, where each set in $\cF$ has exactly $3$ elements.}
{Does there exist a subset $\cF' \subseteq \cF$, such 
that each element of the universe $U$ belongs to exactly one set in $\cF'$.}

Let $I=(\cF,U)$ be an instance of \exactcover.
As a graph $G$ we take the bipartite graph $(\cF \cup B, E)$,
where $B=\bigcup_{i=1}^{|\cF|+1} U_i$, $U_i=\{u_i: u \in U\}$
and $E=\{Su_i : S \in \cF \wedge 1 \le i \le |\cF|+1 \wedge u \in S\}$.
That is $G$ is an incidence graph of the set system $(\cF,U)$, 
where the universe is replicated $|\cF|+1$ times.
Additionally for each $S \in \cF$ we add to the graph a vertex $x_S$,
which is adjacent to $S$, and has exactly $3|\cF|+1$ pendant neighbors (see Fig.~\ref{fig7}).
We set a capacity function $L:V(G) \rightarrow \Z_+ \cup \{0\}$  as follows:
\begin{itemize}
  \item For each $u_i \in U_i$, $L(u_i) = 0$.
  \item For each $S \in \cF$, $L(S) = L(x_S) = 3|\cF|+3$.
  \item For each pendant vertex $v$, which is adjacent to $x_S$, for $S \in \cF$, we set $L(v)=0$.
\end{itemize}

\begin{figure}[h]
\begin{center}
\includegraphics[scale=0.68]{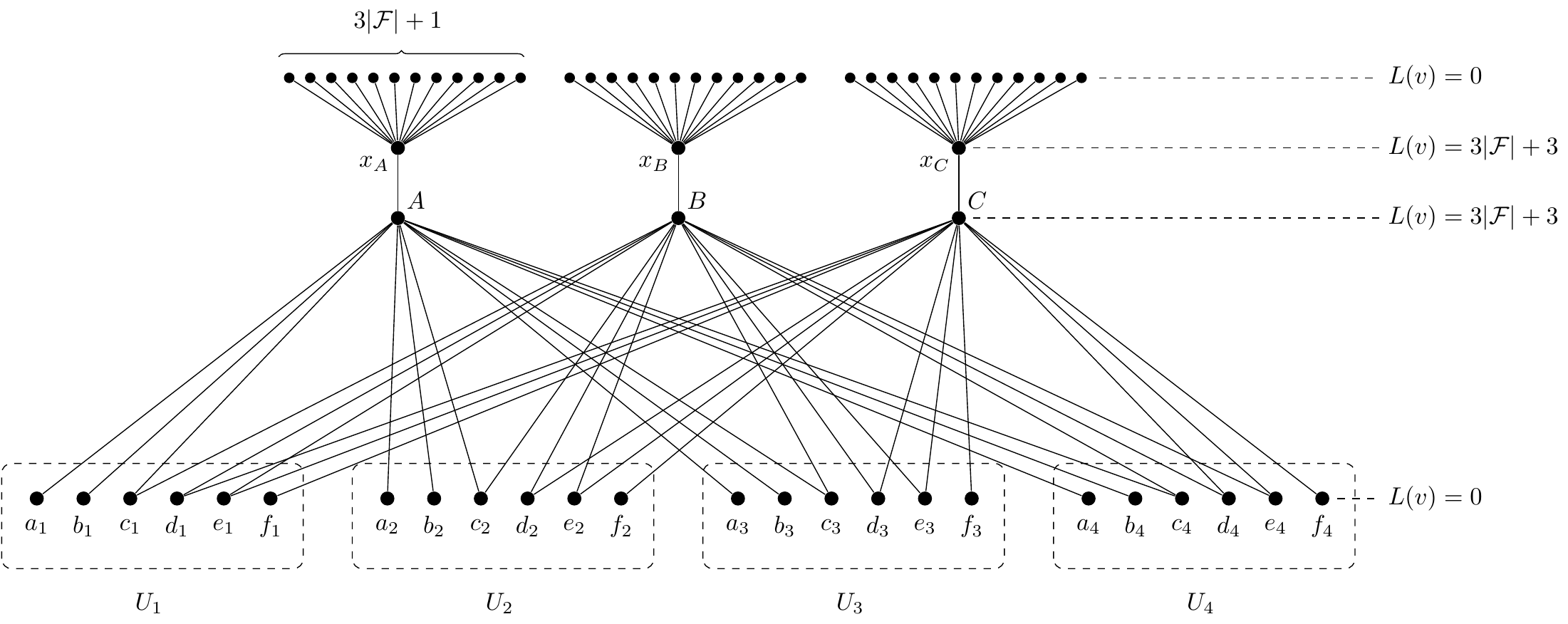}
\caption{The graph $G$ constructed for the set system $(\cF,U)$, where $\cF=\{\{A=\{a,b,c\},B=\{c,d,e\},C=\{d,e,f\}\}\}$ and $U=\{a,b,c,d,e,f\})$}
\label{fig7}
\end{center}
\end{figure}

Note that all the vertices, which have non-zero capacities, have exactly the same capacity.
Let $I'=(G,L,k=|\cF|+|U|/3)$ be an instance of the capacitated $k$-center problem.
In what follows we prove:
\begin{itemize}
  \item[(i)] If $I$ is a YES-instance, then there exists a set $V_0 \subseteq V(G)$, such that $|V_0|=k$,
  together with a function $\phi:V\rightarrow V_0$, which satisfies $\forall_{v\in V(G)} \dist_G(v,\phi(v)) \le 1$ and
  $\forall_{v \in V_0} |\phi^{-1}(v)| \le L(v)$.
  Less formally, $V_0$ is a solution for the capacitated $k$-center problem,
  which opens at most one center in each vertex.
  \item[(ii)] Let $V_0$ be a multiset containing exactly $k$ vertices of $V(G)$,
  such that there exists a function $\phi:V\rightarrow V_0$, which satisfies $\forall_{v\in V(G)} \dist_G(v,\phi(v)) < 3$ and
  $\forall_{v \in V_0} |\phi^{-1}(v)| \le L(v)$.
  Then one can in polynomial time construct a set $\cF' \subseteq \cF$, such that each element of the universe $U$
  belongs to exactly one set in $\cF'$.
  Intuitively, given a solution to the capacitated $k$-center problem which uses distances at most two, and potentially
  opens more than one center in a vertex, one can in polynomial time construct a solution for the instance $I$.
\end{itemize}
Observe that having (i) and (ii) suffices to prove the theorem, since if $I$ is a YES-instance,
we can obtain a solution to $I$, by constructing the graph $G$, running the $(3-\eps)$-approximation algorithm on $(G,L,k)$,
which by (i) returns a set $V_0$, which by (ii) we transform to a solution for $I$ in polynomial time.

First we prove (i).
Let $\cF'$ be a solution for the instance $I$.
We take $V_0=\cF' \cup \{x_S : S\in \cF\}$, that is $|U|/3$ sets from $\cF$, and all the vertices $x_S$.
Note that $|V_0| = |U|/3+|\cF|$ and each vertex in $V_0$ has capacity exactly $3|\cF|+3$.
Observe that if each vertex $S \in \cF'$ covers each of the $|\cF|+1$ copies of the three elements in $S$,
while each vertex $x_S$ covers itself, all the $3|\cF|+1$ pendant vertices, and the vertex $S$,
then we obtain the desired function $\phi$.

Next we prove (ii).
Observe, that $|V(G)|=(|\cF|+1)|U|+|\cF|(3|\cF|+3)=(3|\cF|+3)(|U|/3+|\cF|)$,
and hence in the multiset $V_0$ there are only vertices of capacity $3|\cF|+3$,
since otherwise the total capacity of the vertices in $V_0$ would be smaller than $|V(G)|$.
Moreover, without loss of generality we can assume that the multiset $V_0$
contains only vertices $S \in \cF$, since we can always replace a vertex $x_S$ by
$S$ without exceeding distance two, in the function $\phi$.
Furthermore pendant vertices of any $x_S$ are covered by vertices at distance at most two,
that is by the vertex $S$.
Therefore each $S \in \cF$ appears in the multiset $V_0$ at least once.
Let $\cF' \subseteq \cF$ contain exactly those sets $S \in \cF$,
for which the vertex $S$ belongs to $V_0$ more than once.
Note that since $|V_0|=|\cF|+|U|/3$, the set $\cF'$ contains at most $|U|/3$ sets.
Consequently to prove (ii) it is enough to show that each element of the universe $U$
belongs to at least one set in $\cF'$.

Let $V_0' \subseteq V_0$ be the set containing the first copy of each
$S \in \cF$ in $V_0$, namely $V_0'=\cF$.
Without loss of generality, for each $S \in V_0'$ the set $\phi^{-1}(S)$
contains $x_S$ and all its pendant neighbors, since none
of those vertices can be covered by $S' \not= S$, because the
distance would be at least three.
Consequently for each $S \in V_0'$ we have $|\phi^{-1}(S) \cap (\bigcup_{i=1}^{|\cF|+1} U_i)| \le L(S)-(3|\cF|+2)=1$.
Since $|V_0'| \le |\cF| < |\cF|+1$, there exists an index $1 \le i_0 \le |\cF|+1$,
such that for each $S \in V_0'$ we have $\phi^{-1}(S) \cap U_i = \emptyset$.
Therefore each vertex $u_i \in U_i$ is covered by some set $S \in \cF'$ that it belongs to,
which proves (ii) and finishes the proof of Theorem~\ref{thm:3-hardness}.

\end{proof}

\section{Conclusions and open problems}

We have obtained the first constant approximation ratio for the $k$-center 
problem with non-uniform hard capacities.
The approximation ratio we obtain is in the order of hundreds (however we do not calculate it explicitly), 
so the natural open problem is to give an algorithm with a reasonable approximation ratio.
Moreover, we have shown that the integrality gap of the standard LP formulation for connected graphs 
in the uniform capacities case is either 5 or 6, which we think might be an evidence, 
that it should be possible to narrow the gap between the known lower bound of $(2-eps)$ and upper bound $6$ 
in the uniform capacities case.

\section*{Acknowledgements}

We are thankful to anonymous referees for their helpful comments
and remarks.

\bibliographystyle{abbrv}
\bibliography{kcenter}
\newpage

\end{document}